\theoremstyle{plain}
\newtheorem{theorem}{Theorem}[section]
\newtheorem{lemma}[theorem]{Lemma}
\newtheorem{claim}[theorem]{Claim}
\newtheorem{fact}[theorem]{Fact}
\theoremstyle{definition}
\newtheorem{definition}[theorem]{Definition}
\newtheorem{example}[theorem]{Example}
\theoremstyle{remark}
\newtheorem{remark}[theorem]{Remark}
\newcommand{\ProblemName}[1]{\textsc{#1}}
\newcommand{\kMedian}{\ProblemName{$k$-Median}\xspace}
\DeclareMathOperator{\E}{\mathbb{E}}
\DeclareMathOperator{\Var}{Var}
\newcommand{\Z}{\mathbb{Z}}
\newcommand{\R}{\mathbb{R}}
\def\RR{{\mathbb{R}}}
\DeclareMathOperator{\poly}{poly}
\DeclareMathOperator{\polylog}{polylog}
\DeclareMathOperator{\cost}{cost}
\DeclareMathOperator{\OPT}{OPT}
\DeclareMathOperator{\dist}{dist}
\DeclareMathOperator{\Diam}{diam}
\DeclareMathOperator{\sub}{sub}
\DeclareMathOperator{\seg}{seg}
\DeclareMathOperator{\NIL}{\bot}
\DeclareMathOperator{\AB}{JK}
\DeclareMathOperator{\PA}{\pi_J}
\DeclareMathOperator{\PB}{\pi_K}
\DeclareMathOperator{\AG}{AG}
\newcommand{\indic}[1]{I(#1)}
\newcommand{\FMP}{\ensuremath{F^{\mathrm{MP}}}\xspace}
\newcommand{\hatQ}{\ensuremath{\widehat{Q}}\xspace}
\newcommand{\calL}{\ensuremath{\mathcal{L}}\xspace}
\newcommand{\wopen}{\ensuremath{\mathfrak{f}}\xspace}
\newcommand{\hash}{\ensuremath{\varphi}\xspace}
\def\eps{\epsilon}
\let\epsilon\varepsilon
\def\compactify{\itemsep=0pt \topsep=0pt \partopsep=0pt \parsep=0pt}
\title{Streaming Facility Location in High Dimension\\ via Geometric Hashing\thanks{An extended abstract of this paper appeared in FOCS 2022 \cite{CJKVY22}. 
    The current version contains a new hashing scheme, 
    which improves the one-pass approximation factor from $O(d^{1.5})$ to $O(d/\log d)$
    and yields a new $O(1/\eps)$-approximation with sublinear space $O(n^{\eps})$.
  } 
}
\author{Artur Czumaj\thanks{Research partially supported by the Centre for Discrete Mathematics and its Applications (DIMAP), by a Weizmann-UK Making Connections Grant, by an IBM Faculty Award, and by EPSRC award EP/V01305X/1.
    Email: \texttt{A.Czumaj@warwick.ac.uk}
  }\\
    University of Warwick
    \and
    Arnold Filtser\thanks{Research supported by the Israel Science Foundation grant \#1042/22).
      Email: \texttt{arnold.filtser@biu.ac.il}
    }\\
    Bar Ilan University
    \and
    Shaofeng H.-C. Jiang\thanks{Research partially supported by a national key R\&D program of China No.\ 2021YFA1000900,
    a startup fund from Peking University, and the Advanced Institute of Information Technology, Peking University.
    Email: \texttt{shaofeng.jiang@pku.edu.cn}
  }\\
  Peking University
  \and
  Robert Krauthgamer\thanks{Work partially supported by ONR Award N00014-18-1-2364,
    by the Israel Science Foundation grant \#1086/18,
    by a Weizmann-UK Making Connections Grant,
    by a Minerva Foundation grant,
    and the Weizmann Data Science Research Center. 
    Email: \texttt{robert.krauthgamer@weizmann.ac.il}
  }\\
  Weizmann Institute of Science
  \and
  Pavel Vesel\'y\thanks{Partially supported 
        by GA \v{C}R project 22-22997S
        and by Center for Foundations of Modern Computer Science (Charles University project UNCE/SCI/004). 
Email: \texttt{vesely@iuuk.mff.cuni.cz}}\\
  Charles University
  \and Mingwei Yang\thanks{Email: \texttt{yangmingwei@pku.edu.cn}}\\
  Peking University
}
\date{}
\begin{document}
\maketitle
\begin{abstract}
In \emph{Euclidean Uniform Facility Location},
the input is a set of clients in $\RR^d$ 
and the goal is to place facilities to serve them,
so as to minimize the total cost of opening facilities plus connecting the clients. 
We study the classical setting of dynamic geometric streams,
where the clients are presented as a sequence of insertions and deletions
of points in the grid $\{1,\ldots,\Delta\}^d$,
and we focus on the \emph{high-dimensional regime},
where the algorithm's space complexity 
must be polynomial (and certainly not exponential) in $d\cdot\log\Delta$.

We present a new algorithmic framework,
based on importance sampling from the stream, 
for $O(1)$-approximation of the optimal cost
using only $\poly(d\cdot \log\Delta)$ space.
This framework is easy to implement in two passes,
one for sampling points and the other for estimating their contribution.
Over random-order streams, we can extend this to a one-pass algorithm
by using the two halves of the stream separately.
Our main result, for arbitrary-order streams,
computes $O(d / \log d)$-approximation in one pass
by using the new framework but combining the two passes differently.
This improves upon previous algorithms (with space polynomial in $d$) 
which guarantee only $O(d\cdot \log^2 \Delta)$-approximation,
and therefore our algorithms 
are the first to avoid the $O(\log\Delta)$-factor in approximation
that is inherent to the widely-used quadtree decomposition.
Our improvement is achieved by employing a geometric hashing scheme
that maps points in $\RR^d$ into buckets of bounded diameter, 
with the key property that every point set of small-enough diameter
is hashed into only a few buckets.
By applying an alternative bound for this hashing,
we also obtain an $O(1 / \eps)$-approximation in one pass, using larger but still sublinear space $O(n^{\eps})$ where $n$ is the number of clients.

Finally, we complement our results with a proof that
every $1.085$-approximation streaming algorithm
requires space exponential in $\poly(d\cdot \log\Delta)$,
even for insertion-only streams.
\end{abstract}

\section{Introduction}
\label{sec:intro}

\emph{Facility Location} is a classical problem
in combinatorial optimization and operations research,
and models a scenario where one wishes to find a placement of facilities
that optimizes the total service cost for a given set of customers.
The cost has two different parts:
opening a facility at a location incurs a so-called \emph{opening cost},
and serving each customer incurs a so-called \emph{connection cost}.
The goal is to minimize the sum of both costs.
Typical examples of applications include placement of servers in a network
and location planning.

\paragraph{Euclidean Uniform Facility Location.}

In the \emph{Uniform Facility Location (UFL)} problem,
all possible facilities have the same opening cost $\wopen>0$.
This is essentially a clustering problem; it is similar to $k$-median,
except that the number of clusters is not prescribed in advance,
but rather optimized by adding to the objective a regularization term
(proportional to the number of clusters).

We consider the Euclidean version of UFL,
where the data points and facilities all lie in $\RR^d$.
Formally, given as input a dataset $P\subseteq \RR^d$ and $\wopen>0$,
the goal is to \emph{open} a set $F\subseteq \RR^d$ of \emph{facilities},
so as to minimize the total \emph{connection cost} (the cost of connecting
each data point to its nearest facility according to the Euclidean distance)
plus the total \emph{opening cost}
(opening each facility costs~$\wopen$).
That is, the goal is to find $F\subseteq \RR^d$ that minimizes
\[
    \cost(P, F) := \sum_{p \in P}{\dist(p, F)} + \wopen \cdot |F|
    \enspace,
\]
where we denote $\dist(p,F) := \min_{q \in F} \dist(p,q)$
and $\dist(p,q) := \|p - q\|_2$ is the Euclidean distance.
This problem has been studied extensively for decades
in many algorithmic settings,
including offline, online, dynamic, sublinear-time, streaming,
and so forth;
see \Cref{sec:related} for an overview and references.

\paragraph{Streaming Setting.}

Euclidean UFL has also played a special role
in the study of algorithms for dynamic geometric streams.
Indeed, the seminal paper by Indyk~\cite{Indyk04}, which introduced this model,
considered UFL as one of its benchmark problems,
together with the minimum spanning tree, minimum-weight (bichromatic) matching,
and $k$-median problems.

In the setting of \emph{dynamic geometric streams},
the input $P$ is presented as a stream of insertions and deletions
of points from $[\Delta]^d := \{1, 2, \dots, \Delta\}^d$.
The algorithms should be \emph{space-efficient},
ideally using space that is polylogarithmic in $\Delta^d$,
which is a natural benchmark because it is polynomial
in the representation of a single point using $O(d\cdot\log\Delta)$ bits.
Due to this constraint, we cannot store the
actual solution which could take space $\Omega(\Delta^d)$,
hence the algorithms only approximate the optimal \emph{cost} $\OPT$, which in case of UFL is $\OPT := \min_{F \subseteq \RR^d} \cost(P,F)$.
We say that a randomized algorithm achieves \emph{$\alpha$-approximation},
for $\alpha\ge1$,
if it outputs an estimate $E\ge0$ that with probability at least $2/3$
satisfies $\OPT \le E \le \alpha \cdot \OPT$.
(In many cases, this success probability can be amplified using standard methods.)

In this dynamic geometric streaming model, algorithms generally exhibit a dichotomy, as their space complexity is
either polynomial or exponential in the dimension $d$.
Obviously, algorithms whose space complexity is exponential in $d$
are applicable only in a low-dimensional setting,
say $d=O(1)$ or $d=O(\log\log\Delta)$.
This class contains many algorithms that achieve an $O(1)$-approximation,
or even $(1+\epsilon)$-approximation for arbitrary fixed $\epsilon>0$,
from facility location \cite{LammersenS08,CzumajLMS13}
to minimum spanning tree \cite{FIS08},
and several other basic geometric problems \cite{DBLP:conf/stoc/FrahlingS05, DBLP:conf/focs/AndoniBIW09,CJKV22}. The list would extend even further
if we included algorithms for the insertion-only model.

One would clearly prefer algorithms whose space complexity is polynomial in $d$.
Such algorithms are known for facility location and several other problems,
but unfortunately their approximation ratio is usually high,
for example $O(d\cdot\log\Delta)$ or even higher;
see, e.g., \cite{Indyk04,DBLP:conf/soda/AndoniIK08,CJLW22}.
A major open problem in the area is to break this barrier
and significantly improve the approximation factor, say to $O(1)$ if not $1+\epsilon$,
to match the ratios obtainable in low dimension,
while using space polynomial in $d\cdot\log\Delta$.
This question was partially resolved for
$k$-median~\cite{DBLP:conf/icml/BravermanFLSY17} and
$k$-means~\cite{DBLP:journals/corr/abs-1802-00459},
by designing coresets,
to achieve $(1+\epsilon)$-approximation using space $k\cdot \poly(d\cdot\log\Delta)$ for a fixed $\epsilon>0$;
however, these bounds are low-space only for small values of $k$.

As for UFL, the state-of-the-art streaming algorithms achieve only:
$O(d\cdot\log^2\Delta)$-approximation using space $\poly(d\cdot\log\Delta)$ \cite{Indyk04};
$O(1)$-approximation for constant dimension $d$ \cite{LammersenS08}
(which seems to generalize to $2^{O(d)}$-approximation using $2^{O(d)}$-space); and
$(1+\epsilon)$-approximation for $d=2$ \cite{CzumajLMS13}
(even if this approach could be extended to general $d$, its space bound seems to be $(\log\Delta)^{\Omega(d)}$).
Since one can expect many applications of UFL (or clustering in general) to require high dimension~$d$, the above bounds are unsatisfactory for either having large approximation ratio or using prohibitively large space.

The main barrier for closing this gap is the lack of suitable techniques for high-dimensional spaces.
In particular, the quadtree subdivision is a natural geometric decomposition technique used in almost all previous geometric streaming algorithms,
allowing for a lot of strong results in low dimension~\cite{DBLP:conf/stoc/FrahlingS05,DBLP:conf/focs/AndoniBIW09,FIS08,CzumajLMS13,CJKV22}. However, a quadtree in high dimension, as used by Indyk~\cite{Indyk04} and more recent follow-up papers~\cite{DBLP:conf/soda/AndoniIK08,CJLW22},
gives a tree embedding, which is very useful algorithmically,
but necessarily distorts distances by an $\Omega(d\cdot\log\Delta)$-factor.
It seems that new techniques must be developed
in order to get better approximation ratio in high dimension,
but surprisingly, very little, if at all, is known beyond the quadtree techniques.
This technical barrier is not specific to UFL,
but rather applies to many geometric problems.
\subsection{Results}
\label{subsec:our-results}

We break this barrier of existing techniques,
and devise a geometric \emph{importance-sampling} approach
that is based on a \emph{geometric hashing scheme}.
Using this technique, we obtain improved approximation for UFL
in high-dimensional streams, i.e., using $\poly(d\cdot\log\Delta)$ space.
We present our new streaming bounds for UFL below,
and defer discussion of our technical contributions to \cref{sec:tech}.
Throughout, we assume that all data points in $P$ are distinct,
and that the streaming algorithm knows in advance
the parameters $d$, $\Delta$ and $\wopen$.
These are relatively standard assumptions in the model of dynamic geometric streams,
and we omit them from the statements below.\footnote{It suffices to know
  an $O(1)$-approximation to $\log\Delta$ and to $\wopen$ for all our results.
  The assumption about distinct points was made also in~\cite{Indyk04},
  and can sometimes be removed easily,
  e.g., in insertion-only streams using a random perturbation.
}

We begin by presenting a \emph{two-pass} streaming algorithm
that computes an $O(1)$-approximation to the cost of UFL.
The underlying idea is to apply in the first pass the abovementioned importance-sampling method to select a sample of points from $P$,
and then in the second pass to compute the contribution to the objective of each sampled point
(for details see \Cref{sec:tech}).
This result demonstrates the significant advantage of our technique
over the quadtree/tree-embedding based methods,
which do not yield an $O(1)$-approximation, even in two passes.
This limitation of previous techniques can be seen also in
recent work~\cite{CJLW22} that achieves
$O(\log n)$-approximation for bichromatic matching (earth mover distance)
in two passes,
where $n = |P|$ is the number of input points
(their approach extends to one pass, albeit with an additive error).

\begin{restatable}[\textbf{\emph{Two-Pass Algorithm}}]{theorem}{TwoPassAlg}
	\label{thm:two_pass-intro}\label{thm:two_pass}
	There is a two-pass randomized algorithm that
	computes an $O(1)$-approximation
	to Uniform Facility Location of an input $P\subseteq [\Delta]^d$
	presented as a dynamic geometric stream,
	using $\poly(d\cdot\log\Delta)$ bits of space.
\end{restatable}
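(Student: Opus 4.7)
The plan is to apply importance sampling backed by a multi-scale geometric hashing scheme. For each input point $p$, define a surrogate cost $c_p$ in the spirit of Mettu--Plaxton by setting $c_p := r_i$, where $r_i = 2^i$ is the smallest scale for which $r_i \cdot |B(p, r_i)| \ge \wopen$. The first structural step is to prove $\sum_{p \in P} c_p = \Theta(\OPT)$. The upper direction charges each $c_p$ to the opening and connection costs paid by an optimal solution in a neighborhood of $p$; the lower direction observes that any ball $B(x,r)$ containing $m$ points forces a serving cost of $\Omega(\min\{\wopen, m \cdot r\})$, which matches the sum of the $c_p$ values of the points inside the ball.

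Next I would implement the two passes using the geometric hashing. In the first pass, for each scale $i = 0, 1, \ldots, O(\log(\Delta \sqrt{d}))$, maintain a dynamic linear sketch of the bucket counts of a geometric hash $\hash_i$ with bucket diameter $\Theta(r_i)$. Since the hashing guarantees that balls of radius $r_i$ meet only a few buckets, the count of $p$'s bucket approximates $|B(p, r_i)|$ up to a constant factor, yielding a constant-factor estimate $\hat{c}_p$ of $c_p$ for every streamed point. Using a weighted sampling primitive compatible with dynamic streams (built, for instance, from $\ell_1$-samplers applied to the $\hat{c}_p$-weighted stream), draw $k = \poly(d\cdot\log\Delta)$ sample points with probabilities proportional to $\hat{c}_p$, while in parallel tracking an approximation $\hat{C}$ of $\sum_{p\in P} \hat{c}_p$. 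In the second pass, for each sampled $p_j$ I would compute $c_{p_j}$ exactly by counting, for each scale, the stream points lying in $B(p_j, r_i)$; this is feasible because the centers $p_j$ are known after the first pass. The output estimate of $\OPT$ is then $\hat{C}$ multiplied by the empirical average of $c_{p_j}/\hat{c}_{p_j}$; Chernoff-type concentration using the boundedness $c_p \le \wopen$ and the constant-factor quality of $\hat{c}_p$ delivers an $O(1)$-approximation for $k$ polynomial in $d\cdot\log\Delta$.

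The main obstacle will be designing and analyzing the geometric hashing scheme itself: producing a partition of $\RR^d$ into buckets of diameter $\Theta(r)$ such that every radius-$r$ ball meets at most $\poly(d)$ buckets, while each bucket identifier is computable in $\poly(d)$ time and representable in $\poly(d\cdot\log\Delta)$ bits. Axis-aligned grids either explode exponentially in $d$ or distort ball shapes too severely, so a genuinely new construction is needed; this is precisely the technical contribution highlighted in the abstract. A secondary hurdle is implementing weighted sampling by the stream-dependent weights $\hat{c}_p$ under deletions, which requires carefully layering linear sketches across scales so that each $\hat{c}_p$ is recoverable at the moment of sampling.
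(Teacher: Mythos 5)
Your overall skeleton matches the paper's: define the Mettu--Plaxton values $r_p$ (your $c_p$), show $\sum_p c_p = \Theta(\OPT)$, importance-sample in the first pass with the help of a consistent geometric hash, and in the second pass compute $c_{p_j}$ exactly for the known sample centers by counting points in balls $B(p_j, 2^{-i}\wopen)$ (this is exactly the paper's Fact~\ref{fact:rp_estimation}). The second pass and the final Horvitz--Thompson-style estimator are fine. The gap is in your first pass, and it is the central difficulty the paper is organized around.

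First, the claim that ``the count of $p$'s bucket approximates $|B(p,r_i)|$ up to a constant factor'' is false for any partition of $\RR^d$: a point $p$ can lie at the boundary of its bucket, so the bucket may contain $O(1)$ points while $B(p,r_i)\cap P$ is huge and spread over neighboring buckets. The consistency guarantee only bounds the \emph{number} of buckets a small ball meets, not how the mass is distributed among them. This is precisely the ``boundary'' issue the paper confronts in its one-pass algorithm, where it is handled by counting points in an $\epsilon_i/2$-enlarged neighborhood of each bucket and accepting an approximation loss of $\Gamma$ — which is why the paper's one-pass ratio is $O(d/\log d)$ and not $O(1)$. (For the purposes of importance sampling a $\poly(d)$-factor distortion in the weights would actually be tolerable, since your second-pass correction $c_{p_j}/\hat c_{p_j}$ repairs the bias and only the variance degrades polynomially; so this flaw alone is survivable.) Second, and fatally, you cannot ``apply $\ell_1$-samplers to the $\hat c_p$-weighted stream'': $\hat c_p$ depends on the entire input (including future insertions and deletions), so the weight of an item is not known at update time and is not a linear function of the stream, which rules out any linear-sketch-based weighted sampler. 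This is the chicken-and-egg problem the paper states explicitly. Its resolution is \emph{not} to estimate the weights at all in the first pass: for each level $i$ it subsamples $P$ at rate $2^{-i}$, hashes with diameter bound $\Theta(2^{-i}\wopen)$, and runs a two-level $\ell_0$-sampler that picks a uniformly random non-empty bucket and then a uniformly random surviving point inside it. Bounding the number of non-empty buckets by $\poly(d\log\Delta)\cdot\OPT/\wopen$ (via an MP-based clustering) and the occupancy of buckets containing level-$i$ points shows this uniform two-level sampling already achieves probability $\Omega(r_x/(\poly(d\log\Delta)\sum_y r_y))$ for each $x$, and the realized sampling probability is read off from the bucket size and a distinct-bucket count rather than from any estimate of $r_x$. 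You should replace your weighted-$\ell_1$-sampling step with this (or an equivalent) mechanism; as written, that step cannot be implemented.
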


We can extend the approach from \cref{thm:two_pass-intro} to \emph{random-order insertion-only streams}
(where the stream is a uniformly random permutation of $P$ and we assume $|P|$ is given),
by using the first half of the stream to simulate the first pass of the algorithm from \cref{thm:two_pass-intro},
and then using the second half to approximate the estimator used in the second pass of the algorithm.
There is clearly correlation between the two halves of the stream,
and neither half represents the full input faithfully,
but we can circumvent these obstacles
and obtain a one-pass $O(1)$-approximation algorithm, as stated below.
We are not aware of previous geometric streaming algorithms in high dimension,
particularly those based on quadtrees,
that obtain $O(1)$-approximation in random-order streams
(apart from those for $k$-median~\cite{DBLP:conf/icml/BravermanFLSY17} and $k$-means~\cite{DBLP:journals/corr/abs-1802-00459}, though these bounds are low-space only for small $k$).

\begin{restatable}[\textbf{\emph{Random-Order Algorithm}}]{theorem}{RandomOrderAlg}
\label{thm:random_order-intro}\label{thm:random_order}
There is a one-pass randomized algorithm that
computes an $O(1)$-approximation
to Uniform Facility Location of an input $P\subseteq [\Delta]^d$
presented as a random-order insertion-only stream,
using $\poly(d\cdot\log\Delta)$ bits of space.
\end{restatable}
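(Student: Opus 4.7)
The plan is to simulate the two-pass algorithm of \cref{thm:two_pass} in a single pass by splitting the random-order stream at its midpoint. Let $P_1$ be the first half of the stream and $P_2 := P \setminus P_1$ the second half. Since the insertion order is uniformly random and $n = |P|$ is given in advance, the set $P_1$ is distributed as a uniformly random subset of $P$ of size $\lfloor n/2 \rfloor$, and symmetrically for $P_2$. During the first half of the stream we run the Pass~1 routine of the two-pass algorithm (the importance-sampling step), producing at the midpoint a sample $S \subseteq P_1$ of size $\poly(d \log \Delta)$ together with the associated sampling probabilities. During the second half we feed each insertion into the Pass~2 estimator, which evaluates, for every $s \in S$, the geometric statistic used in the two-pass algorithm to estimate $s$'s contribution to $\OPT$. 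Finally, we rescale the estimator output by a factor of $2$ to compensate for having summed over only half of $P$.

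Correctness rests on two observations. First, the importance weights used in Pass~1 are (approximately) sums of simple indicators over $P$---typically counts of points falling in a given hash bucket or ball---so a uniformly random half $P_1$ yields unbiased estimates of these sums up to a factor of $2$; after an appropriate scalar correction of the sampling probabilities, the sample $S$ drawn from $P_1$ is, for the purpose of estimating $\OPT$, distributionally equivalent up to constant factors to the sample produced by Pass~1 on the full input $P$. Second, the Pass~2 estimator is likewise a sum of indicators over the remaining stream, so evaluating it on $P_2$ and rescaling by $2$ matches its value on $P$ up to a constant factor. Combining both with the $O(1)$-approximation guarantee of \cref{thm:two_pass} yields an $O(1)$-approximation in one pass, and the space usage remains $\poly(d \log \Delta)$ since we only store the sample $S$ and a bounded amount of auxiliary data per sampled point.

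The main obstacle is that $P_1$ and $P_2$ are far from independent: $P_2$ is entirely determined by $P_1$, so after conditioning on $P_1$ (and thus on $S$), we cannot invoke fresh randomness to concentrate the Pass~2 estimator. I would resolve this by conditioning instead on the unordered set $P$: the partition $P = P_1 \sqcup P_2$ is then a uniformly random balanced split, and each estimator value on $P_2$ becomes a random sum over a uniformly chosen half of $P$, to which Hoeffding's inequality for sampling without replacement applies. A union bound over the $|S| \le \poly(d \log \Delta)$ sampled points and the $O(\log \Delta)$ relevant scales ensures that all the relevant counts---both those defining the importance weights on $P_1$ and those computed by the Pass~2 estimator on $P_2$---are simultaneously within a constant factor of their targets on $P$. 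A secondary subtlety is that very light points may contribute a sub-constant share of $\OPT$ and have unreliable Chernoff bounds; these are handled by the same multi-scale truncation already present in the two-pass algorithm, so that only points whose expected contribution is at least $\Omega(\OPT / \poly(d \log \Delta))$ need to be tracked accurately.
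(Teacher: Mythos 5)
Your architecture is the same as the paper's (importance-sample on the first half $J$, estimate the $r$-values of the sampled points on the second half $K$), but the concentration step you lean on does not go through, and the ideas that actually rescue the argument are missing. The claim that a union bound over the $\poly(d\log\Delta)$ sampled points makes every relevant count on $P_2$ a constant-factor proxy for its value on $P$ fails precisely for points $p$ whose neighborhood counts are small: a point with $r_p\approx 2^{-j}\wopen$ has only about $2^j$ points in $B(p,2^{-j}\wopen)$, so Hoeffding for sampling without replacement gives failure probability only $\exp(-\Theta(2^j))$, which is useless once $2^j=O(\log(d\log\Delta))$. You dismiss the problematic points as ``very light,'' but you have the direction backwards: the points with unreliable counts are those with \emph{few} neighbors, i.e.\ \emph{large} $r_p$, and these are the heaviest per-point contributors to $\sum_p r_p$ (an instance of $n$ isolated points has all of $\OPT$ concentrated on them). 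Multi-scale truncation of low-contribution levels does not address this.

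The paper circumvents per-point high-probability concentration entirely, using the monotonicity of $r$ under restriction: for any $S\subseteq P$ one has $r^S_x\ge r_x$, so the second-half estimate never undershoots and no factor-of-$2$ rescaling is applied (your rescaling, and the ``scalar correction'' of the sampling probabilities, are not how the two halvings cancel). For the upper bound the paper proves the \emph{expectation} bound $\E_{\AB}[r^K_x]=O(r_x)$ — the bad event that $B(x,r_x)\cap K$ is depleted has probability $\exp(-\Theta(\wopen/r_x))$, which beats the worst-case blowup $r^K_x\le\wopen$ — and then applies Markov to the sum, rather than a union bound. For the lower bound it uses $\min\{r^J_x,r^K_x\}\ge r_x$ together with a separate lemma that $\OPT^J=\Omega(\OPT)$ with constant probability and a second-moment/Chebyshev argument for the importance sampler run on $J$. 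You would need to supply these (or equivalent) ingredients; as written, the proposal's central concentration claim is false for exactly the points that matter most.
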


Finally, to deal with arbitrary-order dynamic streams in one pass,
we have to significantly extend our methods. Specifically, our previous scheme of first obtaining samples
and then computing their estimations cannot work,
and we need our importance sampling to provide additional structure.
This more involved step crucially relies on
a powerful property in our geometric hashing scheme,
namely, that every point set of small-enough diameter
is hashed to only $\poly(d)$ distinct buckets.
However, this also introduces an $O(d / \log d)$-factor in the approximation ratio,
which corresponds to a key performance parameter of the hashing.

\begin{theorem}[\textbf{\emph{One-Pass Algorithm}}, see \cref{thm:1p} and \Cref{remark:ratio}]
\label{thm:1p-intro}
There is a one-pass randomized algorithm that
computes an $O(d/ \log d)$-approximation
to Uniform Facility Location of an input $P\subseteq [\Delta]^d$
presented as a dynamic geometric stream,
using $\poly(d\cdot\log\Delta)$ bits of space.
\end{theorem}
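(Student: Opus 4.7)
The plan is to build a single-pass, scale-stratified sketch whose aggregates encode $\OPT$ directly, so that neither a second pass nor a random stream order is needed. The three ingredients are a Mettu--Plaxton-style reformulation of $\OPT$ as a sum of per-point radii, a linear hash-bucket sketch at every dyadic scale, and the few-buckets-per-small-diameter-set property of the hashing, which is where the $O(d/\log d)$ loss enters.

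First I will establish a cost identity $\OPT=\Theta\bigl(\sum_{p\in P}\rho_p\bigr)$, where $\rho_p$ is the smallest radius $r$ with $r\cdot|P\cap B(p,r)|\ge\wopen$, and expand it via a layer cake into
\[
    \sum_{p\in P}\rho_p \;\asymp\; \sum_\ell r_\ell\cdot\bigl|\{p\in P:\rho_p>r_\ell\}\bigr|
\]
over dyadic scales $r_\ell=2^\ell$ for $\ell=0,\dots,O(\log(\sqrt{d}\,\Delta))$. The task thereby reduces to estimating the expectation of $\rho_p$ for a uniformly random $p\in P$, which I will approximate via a proxy $\widetilde\rho_p$ read off the sketches.

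Next, for every scale $r_\ell$ I will run an independent geometric hash $\varphi_\ell$ with bucket diameter $\Theta(r_\ell)$ and the property that any set of diameter at most $r_\ell/\eta$ meets only $O(\eta)$ distinct buckets of $\varphi_\ell$, where $\eta=O(d/\log d)$. In parallel with the stream I maintain, via a linear sketch supporting insertions, deletions, and point queries, the bucket-population vector $c_\ell(b):=|P\cap\varphi_\ell^{-1}(b)|$, a dynamic counter for $|P|$, and an $L_0$-sampler on $P$. At the end, the algorithm draws $\polylog(d\Delta)$ uniformly random points $p$ via the $L_0$-sampler; for each such $p$ and each dyadic scale $r_\ell$, it identifies the $O(\eta)$ buckets of $\varphi_{\ell+O(\log\eta)}$ covering $B(p,r_\ell)$, sums their populations from the sketch, and sets $\widetilde\rho_p$ to the smallest $r_\ell$ at which this sum reaches $\wopen/r_\ell$. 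The reported estimate is $\eta\cdot|P|\cdot\mathrm{avg}(\widetilde\rho_p)$, and the space is $\poly(d\log\Delta)$ because there are $O(\log(\sqrt{d}\,\Delta))$ scales, each carrying a $\poly(d\log\Delta)$-bit sketch.

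The main obstacle is to prove the two-sided hashing approximation $\widetilde\rho_p\in[\Omega(\rho_p/\eta),\,O(\rho_p)]$. One direction is direct: if $\rho_p\le r_\ell$ then $B(p,r_\ell)$ contains at least $\wopen/r_\ell$ points, and by construction the $O(\eta)$ buckets of $\varphi_{\ell+O(\log\eta)}$ covering $B(p,r_\ell)$ collectively hold all of them, so $\widetilde\rho_p\le\rho_p$ up to constants. The other direction is where the $\eta=O(d/\log d)$ factor crystallises: the witnessing $O(\eta)$ buckets lie in a ball of radius $O(\eta r_\ell)$ around $p$, so their aggregate heaviness only certifies $\rho_p\le O(\eta\widetilde\rho_p)$. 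Upgrading this argument to a uniform guarantee across all scales, efficiently identifying the cover of $B(p,r_\ell)$ from the hash's description, and controlling the variance of the $L_0$-sampled layer-cake estimator, is where I expect the bulk of the technical effort.
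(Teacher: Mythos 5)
Your high-level framing (Mettu--Plaxton radii, dyadic scales, a geometric hash whose gap $\Gamma=O(d/\log d)$ becomes the approximation loss) matches the paper's, but two steps in your pipeline fail, and they are precisely the obstacles the paper's construction is built to avoid. First, estimating $\sum_{p}\rho_p$ by averaging $|P|\cdot\widetilde\rho_p$ over $\polylog(d\Delta)$ \emph{uniformly} sampled points does not concentrate: in the instance of Example~\ref{exp:hard}, $\sqrt{n}$ points have $\rho_p=\Theta(1)$ and $\Theta(n)$ points have $\rho_p=\Theta(1/n)$, so $\sum_p\rho_p=\Theta(\sqrt{n})$ while a uniform sample almost surely has $\rho_p=\Theta(1/n)$; the relative variance of your estimator is $\Theta(\sqrt{n})$ and no $\polylog$ number of repetitions fixes it. This is why the paper replaces uniform sampling by a two-level (bucket-then-point) importance sampler whose support size is bounded via the MP clustering (Lemma~\ref{lemma:extend_MP_cluster}) together with the consistency of the hash, and why each level $i$ is preceded by subsampling at rate $2^{-i}$.

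Second, the sketch you rely on --- a linear, point-queryable structure for the bucket-population vector $c_\ell(b)=|P\cap\varphi_\ell^{-1}(b)|$, queried at buckets determined by a point revealed only after the stream --- cannot deliver the precision you need in $\poly(d\log\Delta)$ space. The vector $c_\ell$ can have $\Theta(n)$ nonzero entries, so any $\polylog$-space linear point-query sketch carries additive error $\Omega(\|c_\ell\|_1/\polylog)$, which swamps the thresholds $\wopen/r_\ell$ you must resolve; and exact answers for a query point not known in advance are ruled out by the INDEX reduction in Appendix~\ref{sec:LBs-index} (this is exactly the ``estimate $r_p$ for a posteriori queries'' barrier). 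The paper sidesteps this by never point-querying counters: the counters live as data fields attached to buckets inside an $\ell_0$-sampler (Lemma~\ref{lemma:ext_l0_sampler}), are retrieved only for the bucket that the sampler happens to output, and are kept meaningful by (i) subsampling and (ii) enlarging each inserted point to $B(p,\epsilon_i/2)$ so that the consistency parameter $\Lambda$ controls the number of ``fake'' nonempty buckets (Lemma~\ref{lemma:1pbucket}). Your remaining ingredients (the two-sided bound $\widetilde\rho_p\in[\Omega(\rho_p/\Gamma),O(\rho_p)]$ as the source of the $O(d/\log d)$ factor) are sound in spirit and correspond to the paper's approximate tester, but without replacing uniform sampling by importance sampling and without a mechanism to read counters only at sampled buckets, the algorithm as proposed does not achieve $\poly(d\log\Delta)$ space with any constant-probability guarantee. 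You would also need to handle the regime $\OPT\le\poly(d\log\Delta)\cdot\wopen$ separately (the paper falls back on $k$-median coresets there), since the concentration arguments require $\OPT$ to be large.
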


While the approximation ratio of $O(d / \log d)$ may look prohibitively large,
it is useful to note that many geometric problems, including facility location, are reducible to the case $d = O(\log n)$, where $n = |P|$, by using standard techniques relying on the Johnson–Lindenstrauss lemma.
Furthermore, this improves the previously best streaming algorithm for UFL with $\poly(d\cdot\log\Delta)$ space, by Indyk~\cite{Indyk04}, which has an approximation factor of $O(d\cdot\log^2\Delta)$.

Other streaming algorithms for UFL in the literature focus solely
on the case of constant dimension $d$,
and for general $d$ seem to use space exponential in $d$.
In particular, the algorithm of Lammersen and Sohler~\cite{LammersenS08}
seems to generalizes to $2^{O(d)}$-approximation using $2^{O(d)} \cdot \polylog\Delta$ space (worse approximation ratio than our algorithm),
and Czumaj et al.~\cite{CzumajLMS13} designed a $(1+\epsilon)$-approximation for UFL in the Euclidean plane $d = 2$, 
which for a fixed $\epsilon>0$ seems to require space $(\log \Delta)^{\Omega(d)}$ even if it can be extended to dimension $d$,
meaning that the space would be superlogarithmic for $d = \omega(1)$.

Finally, we remark that \Cref{thm:1p} offers a more general space-approximation tradeoff
that arises from the parameters of the geometric hashing.
In particular, it implies a one-pass $O(1 / \eps)$-approximation
using $\tilde{O}(n^{\eps})$ space (see \Cref{remark:ratio}),
which is the first one-pass algorithm for UFL
that achieves $O(1)$-approximation using \emph{sublinear} space.
This $n^{\eps}$ space bound is generally considered to be weaker 
than the $\poly(d \log \Delta)$ that we aim for,
however this regime seems to be effective for achieving $O(1)$-approximation,
e.g., a recent one-pass algorithm for minimum spanning tree
achieves $\poly(1/\eps)$-approximation using space $O(n^{\eps})$ ~\cite{CCJLW}.

The algorithmic results above are complemented by the following lower bound,
proved by a reduction from the one-way communication complexity
of the Boolean Hidden Matching problem.

\begin{theorem}[\textbf{\emph{Streaming Lower Bound}}; see \cref{thm:lb-of-ufl-in-log-space}]
\label{thm-lower-bound-intro}
Let $d\ge 1$. Every one-pass randomized algorithm that
approximates Uniform Facility Location within ratio better than $1.085$
on an insertion-only stream of points from $[\Delta]^d$ for $\Delta = 2^{O(d)}$,
requires space $2^{\poly(d\cdot \log\Delta)}$.
\end{theorem}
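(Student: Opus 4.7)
The plan is to reduce from a one-way two-party communication problem with known polynomial lower bound, namely (a suitable variant of) the Boolean Hidden Matching problem \textbf{BHM}: Alice holds $x\in\{0,1\}^n$, Bob holds a perfect matching $M$ on $[n]$ together with labels $w\in\{0,1\}^{n/2}$, and he must distinguish the \emph{YES} case (every edge $(a,b)\in M$ satisfies $x_a\oplus x_b=w_{(a,b)}$) from the \emph{NO} case (every edge satisfies $x_a\oplus x_b\ne w_{(a,b)}$); this requires $\Omega(\sqrt n)$ bits of one-way communication. I will choose $n$ just below $\Delta^{d}$, specifically $n=2^{\Theta(\mathrm{poly}(d\log\Delta))}$, so that $n$ distinct points can be named by coordinates in $[\Delta]^d$ and $\sqrt n=2^{\Omega(\mathrm{poly}(d\log\Delta))}$. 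A streaming algorithm that uses $s$ bits and distinguishes two explicit UFL instances gives a one-way protocol using $s$ bits, yielding the desired lower bound.

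\textbf{The geometric gadget.} First I would fix a well-spread ``anchor'' point set $\{c_1,\dots,c_n\}\subseteq[\Delta]^d$ whose pairwise distances are all at least some $R$; since $n$ is much smaller than $\Delta^d$, a random (or Hadamard/Reed--Muller based) choice gives this easily, and coordinates can be rounded to the grid with negligible loss. Around each anchor I attach two candidates $p_i^0,p_i^1$ at some small distance $r\ll R$ apart, chosen along a fixed ``parity direction'' $v_i$. Alice streams $p_i^{x_i}$ for every $i\in[n]$. For each matching edge $(a,b)\in M$ with label $w_{(a,b)}$, Bob streams a constant-size gadget of points placed relative to $c_a$ and $c_b$ whose geometry encodes $w_{(a,b)}$; the gadget is designed so that the three (or constant number of) points $p_a^{x_a}$, $p_b^{x_b}$, and Bob's gadget for $(a,b)$ form a cluster of diameter exactly $r$ when $x_a\oplus x_b=w_{(a,b)}$, and a cluster of diameter at least $\rho\cdot r$ otherwise, for some constant $\rho>1$. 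The dominant constraint on this gadget is geometric: across the $n/2$ edges the analyses must be independent, which is guaranteed by the large separation $R$ between distinct anchor pairs, so connecting points across different edges is never cheaper than serving them from within their own gadget.

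\textbf{Cost analysis and the $1.085$ gap.} Next, I choose the facility opening cost $\wopen$ of the same order as $r$ so that the optimal UFL solution opens exactly $n/2$ facilities, one per matching edge. In the YES case the $3n/2$ points partition into $n/2$ triples of diameter $r$, giving $\OPT_{\text{YES}}\le \tfrac{n}{2}\wopen+O(n\cdot r)$. In the NO case every edge-cluster contributes at least $\rho\cdot r$ per served point for any solution that still opens $n/2$ facilities, while opening fewer or more facilities only increases cost because the anchors are $R$-separated and $\wopen$ is calibrated to make the $n/2$-partition optimal. Balancing $\wopen$, $r$, $\rho$ (and possibly replacing BHM with Boolean Hidden Hypermatching with hyperedge size $t$, whose gap amplifies to $1+\Theta(1/t)$ and which lowers the communication bound only to $\Omega(n^{1-1/t})$ — still $2^{\Omega(\mathrm{poly}(d\log\Delta))}$) yields an overall ratio $\OPT_{\text{NO}}/\OPT_{\text{YES}}\ge 1.085$. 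A $1.085$-approximation algorithm therefore separates the two cases with probability $\ge 2/3$, producing a protocol of communication equal to its space, which contradicts the BHM/BHHM lower bound unless $s=2^{\Omega(\mathrm{poly}(d\log\Delta))}$.

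\textbf{Main obstacle.} The delicate point is the \emph{tightness} of the per-edge gadget: one has to certify that no ``global'' UFL solution can do better than the natural per-edge solution — i.e., that points from different edges cannot be merged into a common facility with amortized savings exceeding the $\rho$-gap, and that opening noticeably fewer than $n/2$ facilities is ruled out by the anchor separation $R$ together with the choice of $\wopen$. Making the constant $\rho$ large enough to push the final ratio above $1.085$ (as opposed to merely above $1$) is what forces the careful, explicit choice of the gadget, and may require the hypermatching amplification. Everything else — the encoding of $[n]$ into $[\Delta]^{d}$, the reduction from communication to streaming, and the final plug-in of the $\Omega(\sqrt n)$ or $\Omega(n^{1-1/t})$ communication bound — is standard.
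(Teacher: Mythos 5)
Your high-level skeleton (reduce from Boolean Hidden Matching, build a YES/NO gap in the UFL cost, convert space to one-way communication) is the same as the paper's, but the geometric gadget you sketch --- which is the entire content of the proof --- does not work as described, and you explicitly leave its construction open. Two concrete problems. First, your layout is internally inconsistent: if the anchors $c_a,c_b$ are at distance $R\gg r$, then $p_a^{x_a}$ and $p_b^{x_b}$ are at distance about $R$, so no cluster containing both can have diameter $r$. More fundamentally, for the UFL cost of an edge gadget to depend on the \emph{parity} $x_a\oplus x_b$ (rather than on $x_a$ and $x_b$ separately), Bob must place a point that is simultaneously geometrically sensitive to Alice's choice at index $a$ \emph{and} at index $b$; with $R$-separated anchors each of Bob's points can only ``see'' one endpoint, and the per-edge cost decomposes into a function of $x_a$ plus a function of $x_b$, which cannot distinguish YES from NO. The paper resolves exactly this by working in dimension $\Theta(n)$, where Alice's candidate points $s_i^0,s_i^1$ for \emph{all} indices are pairwise at distance $2$, and Bob's point $t^0_{i,j}$ (supported on four coordinates drawn from both index $i$ and index $j$) is at distance $\sqrt2$ from $s_i^0$ and $s_j^{1-w_{i,j}}$ but $\sqrt6$ from $s_i^1$ and $s_j^{w_{i,j}}$; the $1.085 \approx (6+\sqrt2)/(4+\sqrt8)$ constant comes precisely from this $\sqrt2$ versus $\sqrt6$ dichotomy with $\wopen=2$. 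The construction is then brought down to $d=\Theta(\log n)$, $\Delta=\poly(n)$ by a two-sided argument: the Narayanan--Nelson terminal Johnson--Lindenstrauss map controls the YES cost, and the Kirszbraun extension theorem pulls any cheap low-dimensional NO solution back to high dimension to contradict the NO-case lower bound.

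Second, even granting a working gadget, your proposal omits the two ingredients that make the NO-case lower bound provable: (i) Alice must place \emph{many} ($100n$) clients at each chosen point $s_i^{x_i}$ so that every near-optimal solution is forced to open a facility exactly there --- with one point per index, as you propose, the structure of the optimum is not pinned down and your ``$\wopen$ calibrated so that exactly $n/2$ facilities open'' claim has no proof; and (ii) one must rule out a global solution that serves several of Bob's ``unlucky'' clients from one shared facility, which the paper does by computing the geometric median of the regular simplex spanned by those clients and checking that $2k\sqrt{1-1/k}+\wopen>k\wopen$ for $\wopen\le 2$. Your own ``main obstacle'' paragraph correctly identifies that this tightness argument is the crux, but deferring it (and the gadget itself, possibly to an unspecified hypermatching variant) means the proof is not actually given.
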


This lower bound excludes a streaming $(1+\epsilon)$-approximation
with $f(\epsilon)\cdot \poly(d\cdot \log\Delta)$ space,
even for insertion-only streams.
In terms of $n = |P|$, the space lower bound is $\Omega(\sqrt{n})$,
in a setting where $d = \Theta(\log n)$ and $\Delta = \poly(n)$.

\subsection{Technical Overview}
\label{sec:tech}
For simplicity, in this section we consider the insertion-only setting,
and assume that the instance is scaled
so that the opening cost is $\wopen = 1$.
Then the input $P\subseteq\RR^d$ of size $n = |P|$ is not restricted to a discrete grid.
Our overall strategy is to design a streaming implementation of
an estimator known to achieve an $O(1)$-approximation.
This estimator was proposed in~\cite{BadoiuCIS05}
in the context of sublinear-time algorithms
(based on an offline $O(1)$-approximation algorithm in~\cite{MettuP03}).
The idea is to associate to every data point $p \in P$
a value $r_p\in[1/n,1]$ (formally defined in Definition~\ref{def:rp}),
that satisfies two key properties (Fact~\ref{fact:rp}):
First, the sum of all $r_p$'s gives an $O(1)$-approximation to UFL,
i.e., $\sum_{p \in P}r_p = \Theta(\OPT)$.
Second, $r_p$ is roughly the inverse of the number of points inside
the ball $B(p, r_p)$ centered at $p$ with radius $r_p$,
i.e., $|P\cap B(p, r_p)| = \Theta(1/r_p)$.
It follows that $r_p$ can be estimated (see Fact~\ref{fact:rp_estimation})
by just counting the number of points in the balls $B(p, 2^{-j})$ for $j=1,\ldots,\log_2 n$,
which is easy if $p$ is known at the beginning of the stream.
However, if $p$ is given as a \emph{query} at the end of the stream,
then any finite approximation requires $\Omega(n)$ space,
by a reduction from the communication complexity of indexing
(see Appendix~\ref{sec:LBs-index}).

\subsubsection{Importance Sampling via Geometric Hashing}
\label{subsubsec:importance-sampling-and-geometric-hashing}

\paragraph{Na\"ive Approach: Uniform Sampling.}
Consider initially making two passes over the stream,
the first one samples a few points,
and the second pass estimates the $r_p$ value for each sampled point $p$.
Since all $r_p \in (0, 1]$,
one immediate idea is to perform \emph{uniform sampling},
and argue using Chernoff bounds that the resulting scaled estimate is likely to be
$\Theta(\sum_{p \in P}r_p) = \Theta(\OPT)$.
However, to obtain decent concentration,
one needs the expectation $\sum_{p \in P} r_p$ to be large enough,
which need not hold.
Indeed, consider Example~\ref{exp:hard} below,
where
uniform sampling needs to draw $\Omega(\sqrt{n})$ samples
to have a decent chance to see even one point $p$ with $r_p=1$,
which is necessary for obtaining a nontrivial approximation.

\begin{example}  \label{exp:hard}
Let $P = P_1 \cup P_2$,
where $P_1$ consists of $\sqrt{n}$ points,
whose pairwise distances are at least $1$,
and $P_2$ consists of $n - |P_1| = \Theta(n)$ points
whose pairwise distances are (approximately) $1/n$;
these two sets are at distance at least $1$ from each other.
(A realization of this point set is possible in dimension $\Theta(\log n)$.)
One can easily verify from the definition that
all $x\in P_1$ have $r_x=1$, and all $x\in P_2$ have $r_x = \Theta(1/n)$,
thus $\OPT = \Theta(\sum_{p \in P} r_p) = \Theta(\sqrt{n})$.
\end{example}

To bypass the limitation of uniform sampling,
we can employ \emph{importance sampling}:
sample one point $p^*\in P$,
such that each $p\in P$ is picked with probability roughly proportional to $r_p$,
and construct an unbiased estimator $\widehat{Z} = r_{p^*} / \Pr[p^*]$. By a standard analysis, such an estimator has low variance,
and thus averaging a few independent samples yields an accurate estimate.
To get some intuition, in Example~\ref{exp:hard},
when sampling proportionally to $r_p$,
the total sampling probability of the points $p$ with $r_p=1$
is far larger than that of the remaining points.

However, this importance sampling idea is difficult to implement in streaming.
At first glance, it is a chicken-and-egg problem:
importance sampling aims to estimate $\sum_{p} r_p$,
but it requires knowing the $r_p$ values.
The crux is that a coarse estimation for $r_p$ suffices for importance sampling,
but as noted above, computing $r_p$ for point queries with any finite ratio
requires $\Omega(n)$ space.
Moreover, even if the $r_p$ values of the sampled points could be estimated at the end of the stream,
how would a streaming algorithm draw samples proportionally to these estimates?

\paragraph{New Idea: Geometric Importance Sampling.}
Instead of trying to estimate the value of $r_p$,
we implement importance sampling indirectly (in Theorem~\ref{thm:impt}),
using a geometric hashing scheme $\hash:\RR^d\to\RR^d$
that ``isolates'' points with large $r_p$.
As usual in hashing, the codomain of $\hash$ is somewhat arbitrary
(e.g., in applications it could be $[\Delta]^d$),
and a \emph{bucket} refers to a preimage $\hash^{-1}(z)$,
i.e., the set of points mapped to the same image $z$.
Ideally, we would like the aforementioned points (with large $r_p$)
to each get its own bucket,
and the others (small $r_p$) to collide, say, into one bucket per cluster,
and thus take up only a few buckets.
Given such a hashing scheme, we apply it to all the points in $P$
and pick a non-empty bucket at random.
This is equivalent to sampling uniformly from the hash values (of all points in $P$),
and is easily implemented in streaming using a well-known tool
called the $\ell_0$-sampler; see e.g.~\cite{CormodeF14}
(this tool produces a uniform sample from the \emph{distinct elements} of a stream,
and applying it here will sample uniformly from the distinct hash values).

The geometric hashing scheme,
when combined with subsampling, guarantees that with high probability,
\begin{enumerate}[label=\alph*)] \compactify
\item the number of non-empty buckets is bounded by
  $\poly(d\cdot\log\Delta) \cdot \OPT$,
  and
\item every bucket with at least one point of large $r_p$
  contains at most $\poly(d\cdot\log\Delta)$ points.
\end{enumerate} We may assume that points of large $r_p$
constitute a significant fraction of $\sum_p r_p = \Theta(\OPT)$
(because we can anyway neglect a subset of points whose contribution is low),
and employ the following \emph{two-level uniform sampling}:
First sample uniformly a non-empty bucket of the hashing scheme,
and then sample uniformly a point from that bucket.
Now, the probability of sampling a point $p$ of large $r_p$
is at least $1/\poly(d\cdot\log \Delta)$ (this holds whenever $r_p > 1/\poly(d\cdot\log \Delta)$).
Thus, by taking $\poly(d\cdot\log \Delta)$ samples
we are likely to hit at least one point of large $r_p$,
which in fact leads to a robust estimator.
The aforementioned two-level uniform sampling is implemented
by extending a standard construction of the $\ell_0$-sampler (in \Cref{lemma:2dl0}),
inspired by a different extension in~\cite{FIS08}.

This implementation of importance sampling bypasses the straightforward approach
of first estimating the desired values (in our case $r_p$)
and then sampling accordingly,
as done previously in some fast algorithms,
e.g., for counting~\cite{KL83} and for geometric problems \cite{Indyk07},
and in ``data-compression'' algorithms,
e.g., constructing graph sparsifiers \cite{BK15, SS11}
and geometric coresets \cite{FeldmanL11, FeldmanSS20}.
Previously, such a non-straightforward implementation of importance sampling
was employed in streaming algorithms for matrices \cite{LW16a,BKKS20},
for the same reason that the needed values are hard to compute in a streaming fashion.

\paragraph{Consistent Geometric Hashing with Bounded Gap.}
We now elaborate on the geometric hashing,
which plays a central role in our importance-sampling algorithm
and is formally defined as follows.
Throughout, $\Diam(S)$ denotes the diameter of $S\subseteq\RR^d$.

\begin{restatable}[Consistent Hashing]{definition}{ConsistentHashing}
\label{def:decomp}
A mapping $\hash:\RR^d\to\RR^d$ is called a
\emph{$\Gamma$-gap $\Lambda$-consistent hash} with diameter bound $\ell>0$,
or simply \emph{$(\Gamma, \Lambda)$-hash},\footnote{While the parameter $\ell$ is important when applying the hashing,
  when constructing the hashing one can assume by scaling that $\ell=1$.
}
if it satisfies:
\begin{enumerate} \compactify
\item Diameter:
  for every image $z\in \hash(\RR^d)$, we have $\Diam(\hash^{-1}(z)) \leq \ell$; and
\item Consistency:
  for every $S\subseteq\RR^d$ with $\Diam(S) \leq \ell / \Gamma$,
  we have $|\hash(S)| \leq \Lambda$.
\end{enumerate}
\end{restatable}
Since we use the hash for designing streaming algorithms, we also require that storing a description of $\hash$ only uses some small space $s$, and given a point $p\in \RR^d$, there is an $s$-space algorithm that returns $\hash(p)$. In our applications, we need $s = \poly(d)$.

Intuitively, the first condition (diameter) requires that further apart points
are never hashed (mapped) to the same bucket, and
the second one (consistency) requires that highly-clustered points,
even if their number is very large, are hashed to only a few different buckets.

We present this definition with a general parameter $\Lambda$,
but in many of our applications we require $\Lambda:=\poly(d)$, which is sufficient as
our algorithms have their approximation ratios independent of $\Lambda$
and space only polynomially depending on $\Lambda$.
However, the ``gap'' parameter $\Gamma$, equal to the ratio between the diameter bound $\ell$
and the consistency diameter $\ell/\Gamma$, goes into the approximation factor of our
one-pass streaming algorithm (Theorem~\ref{thm:1p-intro}), besides affecting the space complexity polynomially.
Indeed, this dependence of $\Gamma$ in the ratio also allows us to use alternative tradeoffs of $\Gamma$ and $\Lambda$, particularly $\Gamma = O(1 / \eps)$ and $\Lambda = \tilde{O}(n^{\eps})$ to achieve $O(1 / \eps)$-approximation using space $\tilde{O}(n^{\eps})$.

\paragraph{Comparison to Related Geometric Decompositions.}
Our definition of consistent hashing is essentially equivalent to
  the notion of sparse partitions that was introduced by
  Jia, Lin, Noubir, Rajaraman, and Sundaram~\cite{DBLP:conf/stoc/JiaLNRS05}.
Their definition concerns a \emph{partition} of $\mathbb{R}^d$,
  if we view each part in that partition as a bucket in a hashing,
  then their definition is the same as \cref{def:decomp}.
  However, in our setting it is more natural to think of a hash function,
  because the part/bucket that contains each input point $x \in \mathbb{R}^d$
  must be computed \emph{using a small amount of memory}, and a mere partition of $\mathbb{R}^d$ may not suffice.
Jia et al.~\cite{DBLP:conf/stoc/JiaLNRS05} primarily focused on general metric spaces and for Euclidean spaces only give
  a construction with consistency $\Lambda=2^d$ and gap $\Gamma=\Theta(\sqrt{d})$(by a straightforward partition into hypercubes),
  which is thus not useful in our context.
  Notably, Filtser~\cite{DBLP:conf/icalp/Filtser20}
  designed a sparse partition with consistency $\Lambda=\poly(d)$
  and gap $\Gamma=O(d / \log d)$,
  however, in this partition the description of a part takes $\Omega(2^d)$ bits,
  and thus does not directly imply
  a hash function that can be evaluated on a point in small space.

\paragraph{New Constructions for Consistent Hashing}
  We provide a new construction (based on the ball carving approach of Andoni and Indyk \cite{AI08}) that achieves the same consistency and gap bounds as in~\cite{DBLP:conf/icalp/Filtser20} (even by a constant factor better)
  while the hash function can be stored in $\poly(d)$ space and the hash value for any point in $\mathbb{R}^d$ can be evaluated using space $\poly(d)$ (\Cref{thm:EucSparsePartition})\footnote{Filtser's construction in~\cite{DBLP:conf/icalp/Filtser20} is in fact more general in that it works for any metric space with a bounded doubling dimension and provides strong diameter bounds, whereas our construction in \Cref{thm:EucSparsePartition} focuses specifically on Euclidean spaces and gives weak diameter bounds, which is sufficient for our purpose.}.
  Namely, we also get a general smooth tradeoff between $\Gamma$ and $\Lambda$, namely, $\Lambda = \exp(O(d/ \Gamma)) \cdot \poly(d)$,
  and this tradeoff was shown to be nearly tight~\cite{DBLP:conf/icalp/Filtser20} (see \Cref{remark:hash_tight}).

However, this nearly tight tradeoff is achieved at the cost of a large running time of $\exp(d)$.
  Hence, we also give another construction with parameters $\Gamma = O(d^{1.5})$ and $\Lambda = \poly(d)$
  such that the hash value can be evaluated using space \emph{and time} $\poly(d)$ (\Cref{thm:exist_decomp}).
  This more time-efficient construction,
  although with a worse gap-consistency tradeoff, already suffices for many of our applications, particularly the two-pass (\Cref{thm:two_pass-intro}) and random-order (\Cref{thm:random_order-intro}) results,
  and thus leads to a $\poly(d \log \Delta)$ running time in addition to small $\poly(d \log \Delta)$ space.

In fact, the same gap bound of $\Gamma = O(d^{1.5})$
	was recently obtained by Dunkelman et al.~\cite{DBLP:conf/icalp/DunkelmanGKKRST21}.
	It is not stated there explicitly (but can be verified by inspecting their analysis)
	because their construction is designed for a different notion, called \emph{consistent rounding},
	which is incomparable to our \Cref{def:decomp} primarily because
	they require each bucket to have a bounded volume (instead of diameter),
	and their guarantee on the number of intersections is also slightly different.
	Their construction is somewhat similar to the one in \Cref{thm:exist_decomp} except that it works top-down,
	whereas ours works bottom-up.

We note that the second condition in \cref{def:decomp}
does not directly follow from many known methods in the literature.
Indeed, if we partition the space $\RR^d$ using standard methods,
like hypercube subdivisions as in a quadtree (see, e.g.,~\cite{Arora98}),
it is hard to avoid clusters from intersecting $2^{\Omega(d)}$ buckets,
instead of only $\poly(d)$.
Other geometric decompositions,
such as padded decomposition~\cite{LS93,Bartal96,CCGGP98},
and Locality-Sensitive Hashing~\cite{IM98},
aim for different guarantees that are not directly comparable to those of \cref{def:decomp};
see \Cref{sec:related} for a broader comparison.

\medskip

To intuitively see why this geometric hashing helps,
consider the instance in Example~\ref{exp:hard},
and let us focus on estimating the number of points with $r_p = 1$ (for which uniform sampling does not work).
In Example~\ref{exp:hard}, points are grouped into natural clusters:
the whole $P_2$ forms a cluster that consists of $O(n)$ points each with $r_p = O(1 / n)$,
and each point in $P_1$ forms a singleton cluster whose $r_p = 1$.
We construct a $(\Gamma, \Lambda)$-hash with diameter bound $\ell = 1/2$.
By the second guarantee, every small ball of radius $O(1 / n) \ll \ell / \Gamma $
is mapped to $\Lambda$ points/buckets,
even if the ball originally could have $\Omega(n)$ points.
Hence, after the hashing, the entire cluster $P_2$ with $O(n)$ points gets mapped to $\Lambda$
points, while points in $P_1$ are preserved because of the diameter bound of the buckets. Hence, applying the two-level uniform sampling, we hit one point in $P_1$
with at least $\Lambda$ samples on average.
Note that we can achieve $\Lambda = \poly(d)$ in our hashing bounds (\Cref{thm:EucSparsePartition,thm:exist_decomp}),
hence we only need to draw $\poly(d)$ samples which we can afford.

Finally, we note that the actual implementation of this whole idea
is more involved and requires additional steps.
For instance, the nice cluster structure in Example~\ref{exp:hard}
might not be present in a general input,
and our analysis needs to explicitly define a clustering
where a cluster containing $p$ has diameter roughly $r_p$.
Another issue is that our overview focused on $r_p=1$.
For general $r_p$, we use a subsampling at rate $2^{-i}$
to ``reduce'' the case $r_p=2^{-i}$ to the case $r_p=1$,
which is conceptually similar to the subsampling used in the construction of $\ell_0$-samplers.
At the end, this algorithm implements our desired importance sampling task,
namely, it uses space $\poly(d\cdot\log\Delta)$
and produces a sample $p^*\in P$ such that
every $p\in P$ is picked with probability proportional to at least $r_p/\poly(d\cdot\log\Delta)$.

\subsubsection{Streaming Implementations}
\label{subsubsec:streaming-implementations}

The above-mentioned geometric importance sampling (\Cref{thm:impt})
can be implemented in one pass using small space.
However, it only returns a set of \emph{samples} $S\subseteq P$,
and to actually estimate $\sum_{p \in P}r_p$
one still needs to estimate the value $r_p$ for each $p\in S$.
This limitation is a consequence of our importance-sampling approach,
which bypasses estimating the $r_p$ values on purpose.

\paragraph{Two Passes and Random-Order Streams.}
Our two-pass streaming algorithm (in \Cref{thm:two_pass}) is quite simple:
the first pass computes a sample $S\subseteq P$
using importance sampling (\Cref{thm:impt}),
and the second pass estimates $r_p$ for each $p\in S$
using Fact~\ref{fact:rp_estimation} and straightforward counting.
Its space complexity is $\poly(d\cdot\log\Delta)$ per point $p\in S$,
and we need $|S|=\poly(d\cdot\log\Delta)$.

A similar approach can be applied also in the random-order model
(i.e., the stream is a uniformly random permutation of $P$):
the first half of the stream is used to generate a sample $S$, and the second half is used to estimate the $r_p$ value
of the points $p\in S$ sampled in the first half.
However, more technical steps are needed in the analysis,
due to the correlation between the two halves of the stream,
and the fact that a random half does not represent the full stream accurately.

\paragraph{One-pass Implementation.}
The one-pass setting is significantly more difficult.
We estimate $\sum_{p \in P} r_p$,
by partitioning $P$ into levels $i=1,\ldots,d\cdot\log_2\Delta$,
namely, we let $P_i := \{p\in P: r_p\in (2^{-i}, 2^{-i+1}]\}$
and $W_i := \sum_{p \in P_i} r_p$.
We build an estimator for $W_i$ separately for each $i$
(recall that in this section we assume $\wopen = 1$).

For simplicity, we focus here on $i=0$,
so we now only care about points $p\in P_0$, meaning that $r_p = \Theta(1)$.
Then $W_0 = \sum_{p \in P_0}{r_p} = \Theta(|P_0|)$,
and it suffices to estimate $|P_0|$.
To this end, it is natural to use the estimator
$1 / \Pr[x] \cdot \indic{x \in P_0}$,
where $x$ is a sample generated by our importance sampling.
However, the indicator $\indic{x \in P_0}$ turns out to be very sensitive,
and it is difficult to estimate it in one pass even within a constant factor
(e.g., distinguish between $r_x \geq 1/2$ and $r_x\leq 1/10$).
Hence, we have to design a more relaxed tester and analyze how this affects the approximation ratio.

To implement this tester,
we construct an $(\Gamma, \Lambda)$-hash with diameter bound $\ell = 1/10$.
Recalling that, by \Cref{fact:rp}, the $r_p$ values satisfy that
$|P\cap B(p, r_p/2)| \le 2 /r_p$
we observe that if a hashing bucket contains a point with $r_p \ge 1/2$,
then the number of points in this bucket is bounded by $O(1)$.
This means that a bucket either consists only of points with large $r_p$ values,
or no such point at all.
Using this observation,
we maintain a counter for the number of points mapped to every bucket,
and when a point is sampled, we retrieve also the counter for its bucket,
and if the counter is small, we use it as a proxy for the event that
the bucket consists only of points with large $r_p$ values,
including in particular the sampled point.

However, a subtle technical issue is that some point $y$ with $r_y \ll 1$
can possibly lie on the ``boundary'' of the bucket,
and then the number of points in that bucket is small,
while $P\cap B(y,r_y)$ contains many nearby points that lie in other buckets.
Hence, we need to count the number of points in a slightly enlarged region,
i.e., for a bucket $Q \subseteq \mathbb{R}^d$ we need to
count the points in $P$ that fall inside the $r_y$-neighborhood of $Q$,
denoted here as $B(Q, r_y)$
and defined as the set of points in $\RR^d$ at distance at most $r_y$ from $Q$.
Hence, if we maintain the counter for $P\cap B(Q, \beta)$ for some $\beta > 0$
then the information of the counter suffices for rejecting $y$'s whose $r_y \leq O(\beta)$.

Now, suppose we are to maintain the counter for some $\beta$.
To implement this, whenever we see a data point $x$ arrives,
we should increase the counter for all buckets $Q$ such that $x \in B(Q, \beta)$
(a similar trick appears e.g.\ in~\cite{FIS08}).
However, this becomes challenging in the streaming setting,
primarily due to the fact that the number of buckets $Q$ such that $x \in B(Q, \beta)$ can be huge (e.g., $2^d$),
and more importantly, most of them may be ``fake'', in that they do not contain any data point.
Consequently, these fake nonempty buckets can enlarge the support of the $\ell_0$-sampler significantly, which makes it difficult to obtain a uniform sample.

To tackle this challenge, we make use of the guarantee from the geometric hashing,
that any subset of $\mathbb{R}^d$ with diameter less than $\epsilon = O(1 / \Gamma)$
is mapped to $\Lambda$ buckets (see Definition~\ref{def:decomp}).
Hence, if we choose $\beta = \epsilon / 2$, the effect of enlarging the
buckets by an additive $\beta = \epsilon / 2$ is essentially
making every data point $x$ a ball $B(x, \epsilon / 2)$, and add the image
of this ball to the buckets/counters.
The $\Lambda$ intersection bound ensures that the number of fake buckets
is still well bounded, and this $\Lambda$ factor goes into the space.

Finally, the approximate tester is off by a factor of $O(\Gamma)$,
which is the gap of the geometric hashing,
and we show that this factor goes into the approximation ratio,
which will thus be $O(\Gamma)$.

     \subsection{Related Work}
\label{sec:related}

\paragraph{Facility Location Problem.}
The facility location problem is one of the fundamental problems in operations research and combinatorial optimization, and has received extensive studies in the past.
In the offline setting with uniform opening costs,
the facility location problem has been proved to be NP-hard \cite{DBLP:journals/siamcomp/MegiddoS84}, and is hard to approximate within factor $1.463$~\cite{GuhaK99} (unless $\mathbf{NP} \subseteq \mathbf{DTIME}\left[n^{O(\log \log n)}\right]$).
For the upper bounds,
the state of the art for general metrics is a $1.488$-approximation
by Li~\cite{DBLP:journals/iandc/Li13},
and PTAS's are known for special metric spaces,
specifically,
doubling metrics~\cite{DBLP:journals/jacm/Cohen-AddadFS21},
minor-free graphs~\cite{DBLP:journals/siamcomp/Cohen-AddadKM19},
and near-linear time PTAS's for (bounded-dimensional) $\mathbb{R}^d$~\cite{DBLP:journals/siamcomp/KolliopoulosR07} 
and planar graphs~\cite{Cohen-AddadPP19}.
For the online setting,
Meyerson~\cite{Meyerson01} gave an $O(\log n)$-competitive algorithm,
and Fotakis~\cite{Fotakis08} proved that it has ratio $\Theta(\log n / \log \log n)$
and that this ratio is asymptotically optimal.

\paragraph{Geometric Decomposition.}
Geometric decomposition is a topic that was studied extensively,
with many different definitions, even beyond $\RR^d$,
that are motivated by numerous applications.
For brevity, we only mention a few that are closer to our work.
One basic genre, often called \emph{space partitioning},
refers to a partition of $\RR^d$,
perhaps using a variant of the standard grid (quadtree) partition,
e.g.~\cite{AMNSW98}.
Sometimes it is convenient to use multiple space partitions,
or a probability distribution over space partitions
(e.g., a few shifts or a random shift of the grid partition \cite{FK97,Arora98,CHJ20,FL22,Fil22}). 
Another standard requirement is that every part in the partition
has a bounded diameter
(e.g., padded and separating decomposition \cite{LS93,Bartal96,CCGGP98,Fil19}),
or alternatively that every part has a bounded volume
(e.g., \cite{DBLP:conf/icalp/DunkelmanGKKRST21, KORW08}).
The above examples ask that nearby points lie in the same part,
but another type of decomposition, called Locality-Sensitive Hashing (LSH)
\cite{IM98,AI08}, only asks that close-by points fall in the same part
with noticeably higher probability than far-away points.

\paragraph{Two-pass and Random-order Streaming Algorithms.}
Besides the most studied streaming model of one-pass algorithms over an arbitrary (non-random) order streams,
algorithms requiring a few passes or assuming that the stream order is random have received significant attention as well.
For graph streams, two-pass streaming algorithm have been designed,
for example, for graph spanners~\cite{DBLP:conf/podc/KapralovW14,DBLP:conf/soda/FiltserKN21},
maximum matching~\cite{DBLP:conf/approx/KonradN21}, and triangle counting~\cite{DBLP:journals/tcs/CormodeJ17}.
Apart from graph streams, other examples of two-pass algorithms include those for matrix norm estimation~\cite{DBLP:conf/icml/BravermanCKLWY18}, set cover~\cite{DBLP:conf/pods/Assadi17},
and geometric earth mover distance~\cite{CJLW22}.
The one-pass random-order (insertion-only) setting was studied for problems including matching~\cite{DBLP:conf/soda/KapralovKS14,DBLP:conf/icalp/AssadiB21},
quantile estimation~\cite{DBLP:journals/siamcomp/GuhaM09}, graph connected components and minimum spanning tree~\cite{DBLP:conf/soda/PengS18},
and frequency moment estimation~\cite{DBLP:conf/icalp/WoodruffZ21}.

     \section{Preliminaries}
\label{sec:prelim}

\paragraph{Notation.}
We use the usual notation $[n] := \{1, \ldots, n\}$,
and for a function $\varphi : X \to Y$ and $y \in Y$,
we denote $\varphi^{-1}(y) := \{ x \in X : \varphi(x) = y \}$.
The \emph{$d$-dimensional ball} centered at $x \in \RR^d$ with radius $r \geq 0$ 
is defined as $B(x, r) := \{ y \in \RR^d : \dist(x, y) \leq r\}$.

\paragraph{Definitions and Facts from Mettu-Plaxton (MP) Algorithm.}
We will need some machinery from the MP algorithm~\cite{MettuP03,BadoiuCIS05}.
We first introduce the definition of $r_p$,
and then recall useful facts,
particularly that it suffices to approximate $\sum_{p \in P}{r_p}$,
because it $O(1)$-approximates $\OPT$.

\begin{definition}[\cite{MettuP03}]
    \label{def:rp}
    For every $p \in P$, let $r_p$ be the number such that
    \begin{equation} \label{eq:rp}
        \sum_{x \in P\cap B(p, r_p)} \big( r_p - \dist(p, x) \big) = \wopen.
    \end{equation}
\end{definition}
It is easy to see that $r_p$ is well-defined
and $\frac{\wopen}{|P|} \leq r_p \leq \wopen$.
Indeed, using the notation $z^+=\max(z,0)$,
we can write the left-hand side of~\eqref{eq:rp} as
$\sum_{x \in P} (r_p - \dist(p, x))^+$,
which is easily seen to be non-decreasing with $r_p$. 
For illustration, suppose $p$ is one of $k$ points
whose pairwise distances are all equal to $a \in (0, \wopen)$, 
and all other points are at distance at least $\wopen$ from $p$; 
then $r_p = \Theta(a+\wopen/k)$.

\begin{fact}[Lemmas 1 and 2 in \cite{BadoiuCIS05}]
\label{fact:rp}
The following holds.
\begin{itemize}
\item For every $p \in P$,
  it holds that $|P\cap B(p, r_p)|\ge \wopen / r_p$ and $|P\cap B(p, r_p/2)| \le 2\wopen /r_p$.
\item $\sum_{p \in P}{r_p} = \Theta(\OPT)$.
\end{itemize}

\end{fact}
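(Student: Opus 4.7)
The plan is to handle the two parts of Fact~\ref{fact:rp} separately. Part~1 (the ball-inclusion bounds) is immediate from the defining equation~\eqref{eq:rp}. For the first inequality, every summand $r_p - \dist(p,x)$ is at most $r_p$, so $\wopen \le r_p \cdot |P \cap B(p, r_p)|$. For the second, every $x \in P \cap B(p, r_p/2)$ contributes at least $r_p/2$ to the same sum, so $(r_p/2) \cdot |P \cap B(p, r_p/2)| \le \wopen$.

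For Part~2, I treat the two directions of $\sum_p r_p = \Theta(\OPT)$ separately. For the direction $\sum_p r_p = O(\OPT)$, the approach is a charging argument. Fix any feasible solution $F$, and for each $p$ let $f_p \in F$ be the nearest facility and $c_p := \dist(p, F)$. If $c_p \ge r_p/4$, charge $r_p \le 4 c_p$ to $p$'s connection cost. Otherwise, $p$ lies in $B(f_p, r_p/4)$, and we charge $r_p$ to the opening cost of $f_p$. Let $T(f) := \{p \in P : f_p = f,\ c_p < r_p/4\}$. For any $p, q \in T(f)$ with $r_p \le r_q$, the triangle inequality gives $\dist(p, q) \le r_p/4 + r_q/4 \le r_q/2$. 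Hence, letting $p^* \in T(f)$ maximize $r_p$, every $p \in T(f)$ lies in $B(p^*, r_{p^*}/2)$, so by Part~1 we have $|T(f)| \le |P \cap B(p^*, r_{p^*}/2)| \le 2\wopen / r_{p^*}$, which yields $\sum_{p \in T(f)} r_p \le |T(f)| \cdot r_{p^*} \le 2\wopen$. Summing over facilities (and noting the $T(f)$ partition the ``close-charge'' points),
\[
  \sum_p r_p \;\le\; 4 \sum_p c_p + \sum_{f \in F} 2\wopen \;\le\; 4\,\cost(P,F) \;=\; 4\,\OPT.
\]

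For the opposite direction $\OPT = O(\sum_p r_p)$, I would construct an explicit feasible solution via a Mettu--Plaxton-style greedy procedure: process points in increasing order of $r_p$, and for the current point $p$ open a facility at $p$ iff no already-opened facility lies within distance $2 r_p$ of $p$. By construction every $p$ is within distance $2 r_p$ of some opened facility, so the total connection cost is at most $2 \sum_p r_p$. For the opening cost, any two opened facilities $q, q'$ satisfy $\dist(q, q') > 2\max(r_q, r_{q'}) \ge r_q + r_{q'}$, so the balls $\{B(q, r_q)\}_{q \in F}$ are pairwise disjoint. Combined with Part~1's lower bound $|P \cap B(q, r_q)| \ge \wopen/r_q$ and the fact that each data point $p$ lying in such a ball has comparable $r_p$-value (exploiting the processing order), a packing argument bounds $\wopen \cdot |F| = O(\sum_p r_p)$.

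The main obstacle is ensuring the lower-direction charging argument loses only a constant factor. The crucial trick is the choice of threshold $r_p/4$ (rather than the more natural $r_p/2$): this $1/4$ is precisely what forces $T(f)$ to sit inside $B(p^*, r_{p^*}/2)$, so that Part~1's \emph{upper} bound on ball cardinalities can be invoked cleanly, giving $|T(f)| \le 2\wopen/r_{p^*}$ and hence $\sum_{p \in T(f)} r_p \le 2\wopen$ without any logarithmic loss from layering by $r_p$-values. The upper direction is more mechanical; the only subtlety there is the packing argument relating opening cost to $\sum_p r_p$, which requires carefully tying the $r_p$-value of points $p$ inside a ball $B(q, r_q)$ of an opened facility to $r_q$ itself via the greedy processing order.
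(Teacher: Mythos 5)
The paper does not actually prove this fact; it imports it verbatim as Lemmas 1 and 2 of \cite{BadoiuCIS05}, so there is no in-paper proof to compare against. Your argument is the standard Mettu--Plaxton/BCIS one and is correct. Part 1 and the direction $\sum_p r_p \le 4\,\OPT$ are complete as written. The only step you leave as a sketch is the bound on the opening cost of the greedy solution, and the mechanism you name (``tying the $r_p$-value of points inside $B(q,r_q)$ to $r_q$ via the processing order'') is indeed the right one, but be aware it is \emph{not} true for arbitrary balls: one can have $x\in B(q,r_q)\cap P$ with $r_x\ll r_q$ (e.g., $q$ together with a huge tight cluster at distance $0.9\,r_q$). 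It is true precisely because $q$ is an \emph{opened} facility: if $x\in B(q,r_q)\cap P$ has $r_x<r_q$, then $x$ was processed earlier; $x$ itself cannot have been opened (else $\dist(x,q)\le r_q<2r_q$ would have blocked $q$), so some facility $f$ with $\dist(x,f)\le 2r_x$ was already open, whence $\dist(q,f)\le r_q+2r_x$, and $q$ being opened forces $r_q+2r_x>2r_q$, i.e.\ $r_x>r_q/2$. With that, $\sum_{x\in B(q,r_q)\cap P}r_x\ge (\wopen/r_q)\cdot(r_q/2)=\wopen/2$, and disjointness of the balls gives $\wopen\cdot|F|\le 2\sum_p r_p$, completing the claim. (An alternative that avoids the greedy order entirely is the $1$-Lipschitz property $r_q\le r_x+\dist(q,x)$, which follows directly from \eqref{eq:rp} and yields $\wopen=\sum_x (r_q-\dist(q,x))^+\le\sum_{x\in B(q,r_q)\cap P}r_x$ for every opened $q$.)
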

We assume without loss of generality (w.l.o.g.) that $\Delta$ is a power of two.
Let $L := d \cdot \log_2\Delta \geq \log_2 |P|$ as $|P|\le \Delta^d$ by the 
assumption that points are distinct.
The first point of Fact~\ref{fact:rp} implies that the $r_p$ value can be approximated within a constant factor by counting the number of points in balls of geometrically increasing radii.
\begin{fact}\label{fact:rp_estimation}
Let $j_0$ be the maximum $j\in\{0,\ldots,L\}$
such that $|P\cap B(p, 2^{-j} \wopen)|\ge 2^j$.
Then 
\begin{equation}\label{eq:r_p_approx_def_using_balls}
  r_p \in ( 2^{-j_0-1} \wopen, 2^{-j_0+1} \wopen].
\end{equation}
Moreover, there is a one-pass deterministic streaming algorithm that given the opening cost $\wopen > 0$ and a point $p$ in 
advance of the data set $P$ presented as a dynamic stream,
returns an estimate $\hat{r}_p$ such that 
$r_p \le \hat{r}_p\le O(r_p)$ using space of $O(L^2)$ bits.
\end{fact}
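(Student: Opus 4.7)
My plan is to first establish the containment~\eqref{eq:r_p_approx_def_using_balls} by a pair of short contradictions from \Cref{fact:rp}, and then exhibit a simple counter-based streaming algorithm that computes $j_0$ using $O(L^2)$ bits. The existence of $j_0$ is immediate: the property $|P\cap B(p, 2^{-j}\wopen)| \ge 2^j$ holds for $j=0$ (since $p\in P$), and monotonicity in $j$ (if it holds for $j$, it holds for any $j'\le j$ because $B(p, 2^{-j'}\wopen) \supseteq B(p, 2^{-j}\wopen)$ while $2^{j'}\le 2^j$) implies that $j_0$ is well defined.

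For the upper bound $r_p \le 2^{-j_0+1}\wopen$, I would suppose for contradiction that $r_p > 2^{-j_0+1}\wopen$. Then $r_p/2 > 2^{-j_0}\wopen$, so $B(p, r_p/2) \supseteq B(p, 2^{-j_0}\wopen)$, and by the definition of $j_0$,
\[
  |P\cap B(p, r_p/2)| \;\ge\; |P\cap B(p, 2^{-j_0}\wopen)| \;\ge\; 2^{j_0}.
\]
On the other hand, the second inequality of \Cref{fact:rp} yields $|P\cap B(p, r_p/2)| \le 2\wopen/r_p < 2^{j_0}$, a contradiction. For the lower bound $r_p > 2^{-j_0-1}\wopen$, I would assume $r_p \le 2^{-j_0-1}\wopen$; then $B(p, r_p) \subseteq B(p, 2^{-j_0-1}\wopen)$ and the first inequality of \Cref{fact:rp} gives
\[
  |P\cap B(p, 2^{-j_0-1}\wopen)| \;\ge\; |P\cap B(p, r_p)| \;\ge\; \wopen/r_p \;\ge\; 2^{j_0+1},
\]
contradicting the maximality of $j_0$ (unless $j_0 = L$, in which case $r_p \ge \wopen/|P|\ge 2^{-L}\wopen > 2^{-L-1}\wopen$ already holds directly).

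For the streaming algorithm, I would maintain an array of $L+1$ integer counters $c_0, c_1, \dots, c_L$, where $c_j$ tracks $|P\cap B(p, 2^{-j}\wopen)|$. Since $p$ is known before the stream begins, each update $(x, \pm 1)$ can be processed by computing $\dist(p,x)$ exactly from the coordinates, and then updating $c_j \mathrel{{+}{=}} \pm 1$ for every $j$ such that $\dist(p,x) \le 2^{-j}\wopen$. At the end of the stream, I scan to find the maximum $j_0 \in \{0, \dots, L\}$ with $c_{j_0}\ge 2^{j_0}$, and output $\hat{r}_p := 2^{-j_0+1}\wopen$. By~\eqref{eq:r_p_approx_def_using_balls}, $r_p \le \hat{r}_p < 4r_p$, so $\hat{r}_p = O(r_p)$. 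Each counter is bounded by $|P|\le 2^L$ and thus fits in $O(L)$ bits; together with storing $p$ in $O(L)$ bits, the total space is $O(L^2)$ bits, as required.

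There is no substantive obstacle in this argument: the only point requiring care is the direction of each ball-containment when applying the two inequalities of \Cref{fact:rp}, namely, using the lower bound at radius $r_p$ when $r_p$ is too small and the upper bound at radius $r_p/2$ when $r_p$ is too large.
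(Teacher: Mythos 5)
Your proposal is correct and follows essentially the same two-contradiction argument and counter-based streaming implementation as the paper's proof. The only difference is that you explicitly handle the edge case $j_0 = L$ (where ``maximality of $j_0$'' cannot be contradicted because $j_0+1$ is out of range) via $r_p \ge \wopen/|P| \ge 2^{-L}\wopen$, a detail the paper's proof leaves implicit.
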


\begin{proof}By Fact~\ref{fact:rp},
if $r_p\leq 2^{-j_0-1} \wopen$ then we arrive at the contradiction
$|P\cap B(p,2^{-j_0-1} \wopen)| \geq |P\cap B(p,r_p)| \geq \wopen/r_p \geq 2^{j_0+1}$. 
And if $r_p>2^{-j_0+1} \wopen$ then we arrive at the contradiction
$|P\cap B(p,2^{-j_0} \wopen)| \leq |P\cap B(p,r_p/2)| \leq 2\wopen / r_p < 2^{j_0}$. 
To implement the estimation in dynamic streams,
we count the number of points in each of the balls $B(p, 2^{-j} \wopen)$ for $j = 0,\ldots,L$
using an $L$-bit counter.
\end{proof}

     \section{Importance Sampling via Geometric Hashing}
\label{sec:mp_sample}

In this section, we develop a streaming algorithm for importance sampling on $P$,
where the probability to report each point $x\in P$
is (at least) proportional to its contribution to $\sum_{x \in P} r_x$.
Similarly to other streaming algorithms for sampling (e.g., $\ell_p$-samplers),
our algorithm might fail with a tiny but non-zero probability, 
in our case $q_{\text{fail}} = 1/\poly(\Delta^d)$, 
and the analysis can effectively ignore these events by a union bound.\footnote{Failure in Theorem~\ref{thm:impt} or Lemma~\ref{lemma:2dl0} means that the algorithm may behave arbitrarily,
  e.g., not return anything or even return a point outside $P$,
  and it is not easy to verify if the point is in $P$.
  Formally, having failure probability $q_{\text{fail}}$
  means that the total variation distance
  between the algorithm's output distribution and desired distribution 
  (e.g., uniform over a certain set in the case of $\ell_0$-sampler)
  is at most $q_{\text{fail}}$. 
}
While the algorithm's goal is to sample from $P$,
we also allow it to return $\NIL$, which is \emph{not} considered a failure,
as long as it returns points from $P$ with sufficiently large probability.\footnote{For example, an acceptable output distribution may be
$\NIL$ with probability $\frac12$,
and every $x\in P$ with probability $\frac12 r_x/\sum_{y\in P}r_y$. 
}
The output $\NIL$ is useful in the algorithm's design,
as it can replace the use of a fixed point from $P$, 
and also handle properly the corner case $P=\emptyset$.

\begin{theorem}
\label{thm:impt}
There is a one-pass randomized algorithm that, given $P \subseteq [\Delta]^d$
presented as a dynamic geometric stream,
samples a random point $p^*\in P \cup \{\NIL\}$ such that
\[
  \forall x \in P, \qquad
  \Pr[p^* = x] \geq \Omega\left(\frac{1}{\poly(d\cdot\log\Delta)}\right) \cdot \frac{r_x}{\sum_{y \in P}r_y},
\]
and also reports a $2$-approximation $\widehat{\Pr}[p^*]$ for the probability of sampling this point, 
i.e., $\Pr[p^* = x] \leq \widehat{\Pr}[p^*] \leq 2 \Pr[p^* = x]$.
This algorithm uses $\poly(d\cdot\log\Delta)$ bits of space,
and fails with probability at most $1/\poly(\Delta^d)$.
\end{theorem}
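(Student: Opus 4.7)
The plan is to realize the geometric importance sampling scheme outlined in \Cref{subsubsec:importance-sampling-and-geometric-hashing}. Because \Cref{exp:hard} shows that plain uniform sampling from $P$ may miss all the heavy-$r_p$ points, I first partition the input by the dyadic scale of $r_p$, namely set $P_i := \{p \in P : r_p \in (2^{-i}\wopen,\, 2^{-i+1}\wopen]\}$ for $i = 0, 1, \ldots, L+1$ with $L = d\log_2\Delta$, and run one subsampler $\mathcal{A}_i$ per level in parallel. The overall algorithm picks a level $i^*$ uniformly at random from $\{0,\ldots,L+1\}$ and outputs whatever $\mathcal{A}_{i^*}$ returns (which may be $\NIL$ if the sampler finds no non-empty bucket, in particular when $P = \emptyset$).

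For a given level $i$, the subroutine $\mathcal{A}_i$ works as follows. It first subsamples $P$ independently at rate $2^{-i}$, using a pairwise-independent hash of each point's coordinates so that the subsampling is consistent under the insertions and deletions of the dynamic stream. It then applies a consistent geometric hash $\varphi_i$ as in \Cref{def:decomp} with diameter bound $\ell_i := \Theta(2^{-i}\wopen)$, instantiated via \Cref{thm:exist_decomp} with gap $\Gamma = O(d^{1.5})$ and consistency $\Lambda = \poly(d)$. Finally it feeds the pairs $(\varphi_i(x), x)$ for every subsampled $x$ into a two-level $\ell_0$-sampler (\Cref{lemma:2dl0}), which returns a point $s^*_i$ drawn uniformly from a uniformly chosen non-empty bucket, together with $(1\pm 1/10)$-factor estimates $\widetilde B_i$ of the number of non-empty buckets and $\widetilde C_i$ of the multiplicity of the chosen bucket. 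The reported sample is $p^* := s^*_{i^*}$ and the reported probability estimate is $\widehat{\Pr}[p^*] := 2 / \bigl((L+2)\cdot 2^{i^*}\cdot \widetilde B_{i^*}\cdot \widetilde C_{i^*}\bigr)$, which is a $2$-factor over-approximation of the true reporting probability $2^{-i^*}/\bigl((L+2)\cdot B_{i^*}\cdot C_{i^*}\bigr)$.

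The lower bound on $\Pr[p^* = x]$ for $x \in P_i$ rests on both items of \Cref{fact:rp}. The inequality $|P\cap B(x, r_x/2)| \leq 2\wopen/r_x = O(2^i)$ implies that after $2^{-i}$-subsampling only $O(1)$ points survive within $r_x/2$ of $x$ in expectation, so the diameter property of $\varphi_i$ together with a simple Markov argument yields $C_i = \poly(d)$ with constant probability. The identity $\sum_{y \in P} r_y = \Theta(\OPT)$ together with an MP-style clustering of $P$ into $O(\OPT/\wopen)$ groups of diameter at most $\ell_i/\Gamma$ shows, via the consistency guarantee of \Cref{def:decomp} applied to each cluster, that after subsampling the number of non-empty buckets is $B_i \leq \poly(d\log\Delta)\cdot \sum_{y\in P} r_y /\wopen$ with high probability. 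Substituting both bounds into $\Pr[p^* = x] \geq 2^{-i}/\bigl((L+2)\cdot B_i\cdot C_i\bigr)$ and recalling $r_x = \Theta(2^{-i}\wopen)$ yields the claimed $\Omega\bigl(r_x/(\poly(d\log\Delta)\cdot \sum_y r_y)\bigr)$. The $2$-factor guarantee on $\widehat{\Pr}[p^*]$ follows because the leading $2$ absorbs the $(1\pm 1/10)$ relative error on the two count estimates.

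I expect the main obstacle to be proving the high-probability bound on $B_i$: one must argue that the MP-style clustering of $P$ at scale $\ell_i$ can indeed be carried out with only $O(\OPT/\wopen)$ clusters of diameter $\ell_i/\Gamma$, so that the $\Lambda$-bound of \Cref{def:decomp} can be summed over clusters without producing an excess of ``boundary'' buckets that would inflate the support of the outer $\ell_0$-sampler. A secondary difficulty is implementing the two-level $\ell_0$-sampler so that conditional on picking a bucket the distribution over its points is exactly uniform (this is what \Cref{lemma:2dl0} supplies), and tuning each of the $O(L)$ parallel samplers and $F_0$ sketches to failure probability $1/\poly(\Delta^d)$; a union bound then yields the global failure probability claimed by the theorem, and the per-instance space overhead is only $\polylog(\Delta^d)$, keeping the total space at $\poly(d\log\Delta)$.
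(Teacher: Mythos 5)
Your overall architecture matches the paper's: pick a uniformly random level $i^*$, subsample at rate $2^{-i^*}$, apply a consistent hash with diameter bound $\Theta(2^{-i^*}\wopen)$ from \Cref{thm:exist_decomp}, draw the sample via the two-level $\ell_0$-sampler of \Cref{lemma:2dl0}, and report $\widehat{\Pr}[p^*]$ from the bucket multiplicity and a distinct-elements count of the non-empty buckets. However, the step you yourself flag as the main obstacle --- bounding the support $B_i$ of the outer sampler --- is resolved incorrectly. You propose to partition $P$ into $O(\OPT/\wopen)$ clusters \emph{each of diameter at most $\ell_i/\Gamma$} and apply the consistency bound of \Cref{def:decomp} to every cluster. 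No such clustering exists in general: at a fine level $i$ (where $2^{-i}\wopen$ is far below the typical inter-point distance), covering $P$ by sets of diameter $\ell_i/\Gamma = \Theta(2^{-i}\wopen/\Gamma)$ can require far more than $\poly(d\cdot\log\Delta)\cdot\OPT/\wopen$ pieces; for instance, $n$ equispaced points on a unit segment with $\wopen=1$ have $\OPT=\Theta(\sqrt n)$, yet for $2^{-i}\ll 1/n$ any such cover needs $\gg n$ clusters. The paper's \Cref{lemma:extend_MP_cluster} instead produces $O(d\cdot\log\Delta)\cdot\OPT/\wopen$ clusters of \emph{varying} diameters satisfying $|C|\le O(\wopen/\Diam(C))$, and \Cref{lemma:fsubi_ub} bounds $|\hash_i(\sub_i(P))|$ by a two-case argument: clusters with $\Diam(C)\le\ell_i/\Gamma$ are handled by consistency, while clusters with larger diameter have $|C|\le O(2^i\Gamma)$, so the rate-$2^{-i}$ subsampling leaves only $O(d\cdot\log\Delta)\cdot\Gamma$ survivors with high probability. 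Your proposal is missing this second case, which is the actual content of the lemma.

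A secondary gap: pairwise independence for the subsampling hash does not suffice. The analysis needs Chernoff-type concentration for $|\sub_i(C)|$ over every cluster $C$ and for $|\sub_i(\hash_i^{-1}(\hash_i(p))\cap P)|$ over every relevant point $p$, with failure probability $1/\poly(\Delta^d)$ so that a union bound over up to $\Delta^d$ such sets goes through; pairwise independence yields only Chebyshev-type tails, which are far too weak for that union bound. The paper assumes a fully random hash and derandomizes it with Nisan's PRG against the algorithm's own space bound, costing only a $\poly(d\cdot\log\Delta)$ seed. A minor further point: \Cref{lemma:2dl0} reports the chosen bucket's row-sum exactly, so the only approximation entering $\widehat{\Pr}[p^*]$ is the distinct-elements estimate of the number of non-empty buckets, which your factor-$2$ slack does absorb; and the lower bound on $\Pr[p^*=x]$ should be argued, as in the paper, by conditioning on a good event defined over $P\setminus\{x\}$ so that it is independent of whether $x$ itself survives the subsampling.
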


\begin{proof}[Proof of Theorem~\ref{thm:impt}]
We first provide an algorithm that samples points $x\in P$
at a given level $i\in \{1,\ldots,L\}$,
which refers to points with $r_x$ value roughly $2^{-i} \wopen$.
We present it as an offline algorithm in Algorithm~\ref{alg:impt_i}, 
and discuss below how to implement it as a streaming algorithm.
The main guarantee about its output is given in Lemma~\ref{lemma:impt_i} below,
whose proof appears in Section~\ref{sec:proof_lemma_impt_i}. 
We remark that the algorithm returns $\NIL$ in case $\sub_i(P)$ is empty, i.e., no point survives the subsampling.
(We will see later that Theorem~\ref{thm:impt} follows
by simply executing a streaming implementation of this algorithm with a random level $i$.)

\begin{lemma}
\label{lemma:impt_i}
Algorithm~\ref{alg:impt_i} returns a random point $p^*\in P \cup \{\NIL\}$
such that
\[
  \forall x \in P_i, \qquad
  \Pr[p^* = x] \geq \Omega\left(\frac{1}{\poly(d\cdot\log\Delta)}\right) \cdot 2^{-i} \cdot \frac{\wopen}{\OPT},
\]
where $P_i := \{ x \in P : 2^{-i} \wopen < r_x \leq 2^{-i + 1} \wopen \}$.
\end{lemma}

  \begin{algorithm}[ht]
    \caption{Importance sampling for single level $i$}
    \label{alg:impt_i}
    \begin{algorithmic}[1]
      \State let $\ell\gets 0.1\cdot 2^{-i} \wopen$,
      let $\hash_i$ be a $(\poly(d), \poly(d))$-hash of $\RR^d$ with diameter bound $\ell$
      \Comment{use \Cref{thm:exist_decomp}}
      \label{line:impt_i-rounding}
      \State subsample $P$ with rate $2^{-i}$ and let $\sub_i(P)$ denote the subsampled subset of $P$ \label{line:impt_i-subsample}
      \State  sample uniformly $a \in \hash_i(\sub_i(P))$,
      then uniformly $p^* \in \hash_i^{-1}(a) \cap \sub_i(P)$
\label{line:sample}
      \State return $p^*$ \Comment{if such $p^*$ does not exist (e.g., if $\sub_i(P) = \emptyset$), return $\NIL$.}
      \label{line:impt_i-return}
    \end{algorithmic}
  \end{algorithm}

\paragraph{Streaming Implementation of Algorithm~\ref{alg:impt_i}.}
Line~\ref{line:impt_i-rounding} uses \Cref{thm:exist_decomp}
to get a data-oblivious function $\hash_i$, hence this step
can be executed before the algorithm starts to process the stream.

\subparagraph{Subsampling in Dynamic Streams.}
Line~\ref{line:impt_i-subsample} performs subsampling with rate $2^{-i}$,
that is, each point in $P$ is independently sampled with probability $2^{-i}$.
If the stream is insertion-only we just sample each newly added point
independently with probability $2^{-i}$.
However, in dynamic streams we need consistency between insertions and deletions of a point,
and we thus apply a random hash function $h: [\Delta]^d \rightarrow \{0, 1\}$
such that for every point $p$ we have $\Pr[h(p) = 1] = 2^{-i}$.
We draw this hash function at the beginning of the stream,
and then for each insertion/deletion of a point $p$,
we evaluate $h(p)$ to determine whether $p$ is subsampled.
    
    In the analysis, we assume that these subsampling events are independent for all points,
    i.e., that $h$ is fully random. To deal with the fact that storing such a hash function takes $\Delta^d$ bits, we use Nisan's pseudorandom
    generator (PRG)~\cite{Nisan92} that has the following guarantee:
    For any parameters $R$ and $S$, given a seed of $\Omega(S\cdot \log R)$ truly random bits,
    the PRG generates $R$ bits that cannot be distinguished from truly random bits
    by any algorithm running in space $S$. Naturally, we use this PRG
    with $S$ being the space cost of our algorithm and $R = i\cdot \Delta^d$ (which is
    the number of independent fair coin flips needed to generate $h$).

\subparagraph{Two-level Uniform Distinct Sampling.}
To implement the final sampling step of Algorithm~\ref{alg:impt_i}
(in line~\ref{line:sample}),
we present in Lemma~\ref{lemma:2dl0} a two-level $\ell_0$-sampler,
which is more convenient to describe as sampling from a frequency matrix.
The proof of this lemma, provided in Appendix~\ref{sec:proof_lemma_2dl0},
is an extension of a standard $\ell_0$-sampler (from a frequency vector); see e.g.~\cite{CormodeF14}.
We are not aware of such a sampler in the literature,
although similar extensions were devised before, e.g.~in~\cite{FIS08}.\footnote{The notion of \emph{$\ell_p$-sampling with meta-data}, 
  which was recently introduced in~\cite{CJLW22},
  sounds related but is quite different,
  as each index $i$ arrives with an associated value $\lambda_i$;
  in fact, their approach builds on Precision Sampling~\cite{DBLP:conf/focs/AndoniKO11}
  and is applicable only for $p>0$.
The use of a two-level structure and its representation as a matrix
  were introduced in~\cite{CW05,JW09} as cascaded aggregates/norms, 
  however their algorithms estimate these norms,
  not sampling an index by the norm. 
A sampler for cascaded $\ell_{p,2}$-norm was designed in~\cite{BKKS20},
  building on properties of the Gaussian distribution and $\ell_p$-samplers. 
}

\begin{restatable}[Two-Level $\ell_0$-Sampler]{lemma}{lemmaTwoLevelEllZero}
\label{lemma:2dl0}
There is a randomized algorithm,
that given as input a matrix $M\in\RR^{m \times n}$,
with $m\le n$ and integer entries bounded by $\poly(n)$,
that is presented as a stream of additive entry-wise updates,
returns an index-pair $(i,j)$ of $M$, 
where $i$ is chosen uniformly at random (u.a.r.) from the non-zero rows,
and then $j$ is chosen u.a.r.\ from the non-zero columns in that row $i$.
The algorithm uses space $\poly(\log n)$, 
fails with probability at most $1 / \poly(n)$,
and can further report the corresponding row-sum $\sum_{j'} M_{i,j'}$. 
\end{restatable}

It is straightforward to implement
line~\ref{line:sample} of Algorithm~\ref{alg:impt_i} using this sampler.
Simply convert the updates to $P$, on the fly, 
into updates to a frequency matrix $M$,
whose rows correspond to all hash buckets (images of $\hash_i$)
and columns correspond to all grid points ($[\Delta]^d$).
This is clearly a huge matrix, but it is not maintained explicitly.
The reported row-sum $\sum_{j'} M_{i, j'}$ corresponds to
the number of points in $P$ that are hashed (mapped)
to the bucket returned by the sampler.
Hence, we can implement Algorithm~\ref{alg:impt_i} 
in one pass over a dynamic geometric stream. 
The success probability depends on the two-level $\ell_0$-sampler,
and is thus $1-1/\poly(n) \geq 1-1/\poly(\Delta^d)$,
and assuming success,
the output distribution is as described in Lemma~\ref{lemma:impt_i}.

We can now complete the proof of Theorem~\ref{thm:impt}.
The algorithm draws uniformly at random a level $i^*\in\{1,\ldots,L\}$
and executes Algorithm~\ref{alg:impt_i} for this level $i^*$.
Now consider a point $x\in P$,
and let $j$ be the level for which $x \in P_{j}$,
i.e., $2^{-j} \wopen < r_x \le 2^{-j+1} \wopen$.
Then by Lemma~\ref{lemma:impt_i}, the probability to sample this point $x$ is
\begin{equation}  \label{eq:RandomLevel2}
  \Pr[p^* = x] \geq
  \Pr[i^* = j]\cdot \frac{1}{\poly(d\cdot\log\Delta)} \cdot 2^{-j}  \cdot \frac{\wopen}{\OPT}
  \geq \frac{1}{L}\cdot \frac{1}{\poly(d\cdot\log\Delta)} \cdot \Omega\left( \frac{r_x}{\sum_{y \in P}r_y} \right) .
\end{equation}
Recall that the algorithm needs to report also an estimate $\widehat{\Pr}[p^*]$
for the probability of sampling the specific point $p^*$ that is reported. 
Given the randomly chosen level $i^*$ (which might differ from the level $j$ of $x$),
for the algorithm to pick $x$,
it must first subsample $x$, which happens with probability $2^{-i^*}$,
then pick the bucket of $x$ under $\hash_{i^*}$,
while there are $|\hash_{i^*}(\sub_{i^*}(P))|$ non-empty buckets, 
and finally, it has to pick this point $x$ from its bucket,
which contains $|\hash_{i^*}^{-1}(\hash_{i^*}(x)) \cap \sub_{i^*}(P)|$ subsampled points. Thus, 
\[
  \Pr\big[ p^*=x \mid i^* \big]
  = \frac{2^{-i^*}}{|\hash_{i^*}(\sub_{i^*}(P))| \cdot |\hash_{i^*}^{-1}(\hash_{i^*}(x)) \cap \sub_{i^*}(P)|}\ .
\]
Furthermore, the algorithm can accurately estimate all these quantities;
indeed, the bucket size is known from the two-level $\ell_0$-sampler
(recall that Lemma~\ref{lemma:2dl0} reports also the corresponding row-sum),
and to estimate the number of non-empty buckets the algorithm can run
in parallel a standard streaming algorithm
for counting distinct elements (see e.g.~\cite{DBLP:conf/pods/KaneNW10}).
\end{proof}

\subsection{Proof of Lemma~\ref{lemma:impt_i}}
\label{sec:proof_lemma_impt_i}

\paragraph{Subsampling.}
The first step is to subsample every point in $P$ independently with probability $1 / 2^i$.
For every subset $S \subseteq P$, let $\sub_i(S) \subseteq S$ be the random subset after the subsampling.
The following describes several standard facts about the subsampling.

\begin{fact}
    \label{fact:card_cluster}
    $\forall S \subseteq \mathbb{R}^d$ and $t \geq 2$,
    the following holds.
    \begin{itemize}
        \item If $|S| \geq 2^i$, then 
        $\Pr[|\sub_i(S)| \geq t \cdot |S| \cdot 2^{-i}] \leq  \exp(-\Theta(t))$.
        \item If $|S| \leq 2^i$, then
        $\Pr[|\sub_i(S)| \geq t] \leq \exp(-\Theta(t))$.
    \end{itemize}
\end{fact}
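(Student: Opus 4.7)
\textbf{Proof proposal for Fact~\ref{fact:card_cluster}.}
The plan is to observe that, under the analytical assumption of full independence of the subsampling events (which is justified via the Nisan-PRG discussion preceding the Fact), the random variable $|\sub_i(S)|$ is a sum of $|S|$ independent Bernoulli random variables each with success probability $2^{-i}$, hence binomially distributed with mean $\mu := |S| \cdot 2^{-i}$. Both parts are then routine Chernoff-style upper-tail estimates; the only care needed is in checking that the exponent degrades correctly as $t$ varies.

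For the first bullet, where $|S| \geq 2^i$ and hence $\mu \geq 1$, I would apply the standard multiplicative Chernoff bound
\[
    \Pr\!\left[X \geq (1+\delta)\mu\right] \leq \exp\!\left( - \frac{\delta^2 \mu}{2+\delta} \right) \qquad (\delta \geq 0),
\]
with the substitution $1+\delta = t$, so that $\delta = t-1 \geq 1$. Using $\mu \geq 1$ and the elementary inequality $(t-1)^2/(t+1) \geq t/6$ for $t \geq 2$ (which can be verified by checking $5t^2 - 13t + 6 \geq 0$), the right-hand side is at most $\exp(-t/6)$, which is the claimed $\exp(-\Theta(t))$ bound.

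For the second bullet, where $|S| \leq 2^i$ and hence $\mu \leq 1$, the multiplicative form is wasteful, so I would use the direct ``union-bound-over-$t$-subsets'' estimate
\[
    \Pr\!\left[|\sub_i(S)| \geq t\right]
    \leq \binom{|S|}{t} (2^{-i})^t
    \leq \left(\frac{e|S|}{t}\right)^t (2^{-i})^t
    = \left(\frac{e\mu}{t}\right)^t
    \leq \left(\frac{e}{t}\right)^t .
\]
For $t \geq 2e$ this is already at most $2^{-t} \leq \exp(-\Theta(t))$, completing this regime.

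The main (very minor) obstacle is the narrow range $t \in [2, 2e)$ in the second bullet, where $(e/t)^t$ is only bounded by an absolute constant strictly less than $1$. I would handle this by noting that, for such $t$, the claim $\Pr[\cdot] \leq \exp(-c\,t)$ holds simply by choosing the hidden constant $c$ in $\Theta(t)$ small enough that $\exp(-c \cdot 2e) \geq (e/2)^2$, so that the bound becomes trivially true. With that one-line adjustment absorbed into the $\Theta(\cdot)$ notation, no further work is required.
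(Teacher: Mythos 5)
Your overall strategy matches the paper's: reduce to a binomial variable with mean $\mu=|S|\cdot 2^{-i}$ under the full-independence assumption and apply upper-tail estimates (the paper simply invokes the multiplicative Chernoff bound in both regimes). Your treatment of the first bullet is correct: the inequality $(t-1)^2/(t+1)\ge t/6$ for $t\ge 2$ checks out, and together with $\mu\ge 1$ it gives $\exp(-t/6)$.

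There is, however, a genuine error in your patch for the range $t\in[2,2e)$ in the second bullet. You assert that $(e/t)^t$ is "an absolute constant strictly less than $1$" there and propose to choose $c$ so that $\exp(-c\cdot 2e)\ge (e/2)^2$. But $(e/2)^2=e^2/4\approx 1.85>1$, so the union bound is vacuous at $t=2$ (it does not even show the probability is bounded away from $1$), and no $c>0$ can satisfy $\exp(-2ce)\ge (e/2)^2$ since the left side is below $1$ and the right side above $1$. The statement itself is still true in this range, but your argument does not establish it. The fix is one line: use the sharper estimate $\binom{|S|}{t}\le |S|^t/t!$, which yields
\[
  \Pr\bigl[|\sub_i(S)|\ge t\bigr]\;\le\;\frac{\mu^t}{t!}\;\le\;\frac{1}{t!}\;\le\;2\cdot 2^{-t}
\]
for all $t\ge 2$ (using $t!\ge 2^{t-1}$), which is $\exp(-\Theta(t))$ uniformly and makes the case split at $t=2e$ unnecessary. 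Alternatively, follow the paper and apply the multiplicative Chernoff bound with $1+\delta=t/\mu$ directly in this regime as well.
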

\begin{proof}
    For every $u \in S$, let $X_u \in \{0, 1\}$ be the indicator random variable
    such that $X_u = 1 $ if and only if $u \in \sub_i(S)$, so $\Pr[X_u = 1] = 2^{-i}$ for every $u \in S$.
    Then $|\sub_i(S)| = \sum_{u \in S}{X_u}$,
    and $\E[|\sub_i(S)|] = |S| \cdot 2^{-i}$.
    Let $\mu := \E[\sub_i(S)]$.

    \begin{itemize}
        \item If $|S| \geq 2^i$, then $\mu \geq 1$. By Chernoff bound,
        \[
            \begin{aligned}
            \Pr[|\sub_i(S)| \geq t \cdot |S| \cdot 2^{-i}] 
            &= \Pr[|\sub_i(S)| - \mu \geq (t - 1) \cdot \mu] \\
            & \leq \exp(-\Theta(t)\cdot \mu)
            \leq \exp(-\Theta(t)).
            \end{aligned}
        \]
        \item If $|S| \leq 2^i$, then $\mu \leq 1$. By Chernoff bound,
        \[
            \begin{aligned}
            \Pr[|\sub_i(S)| \geq t] 
            &= \Pr[|\sub_i(S)| - \mu \geq (t/ \mu - 1) \cdot \mu] \\
            & \leq \exp(-\Theta(t)).
            \end{aligned}
        \]
\end{itemize}
\end{proof}

\paragraph{Geometric Hashing.}
We consider bounded consistent hashing schemes that do not map small ``clusters'' 
of points into too many buckets.
We restate the definition below,
and we prove the existence of such hashing schemes, with different tradeoff for the parameters,
in \Cref{thm:exist_decomp,thm:EucSparsePartition}.
As mentioned in \Cref{sec:intro},
Jia et al.~\cite{DBLP:conf/stoc/JiaLNRS05}
introduced (and further studied in~\cite{DBLP:conf/icalp/Filtser20}) an essentially equivalent notion called ``sparse partitions'',
although we further require that evaluating the hash function at a point is space-efficient.

\ConsistentHashing*

\paragraph{Sampling on $\hash_i(\sub_i(P))$.}

Suppose we find a $(\Gamma, \Lambda)$-hash $\hash_i$ with diameter bound $\ell = 2^{-i} \wopen / 10$ such that $\Gamma = \poly(d)$ and $\Lambda = \poly(d)$ (by using \Cref{thm:exist_decomp}),
and define $\epsilon := \ell / \Gamma$ which is the magnitude of the consistency guarantee.
Then $\hash_i(\sub_i(P))$ essentially maps points in $\sub_i(P)$ into buckets,
and our plan is to sample from these buckets.
Next, we wish to upper bound $|\hash_i(\sub_i(P))|$, which is the support of sampling, in terms of $\OPT$.
Since the guarantee on $\hash_i$ in \Cref{def:decomp} 
is about clusters/subsets, we need to first define a clustering of the point set
(Lemma~\ref{lemma:extend_MP_cluster}) such that the number of points
in each cluster $C$ is upper bounded by $O(\wopen/\Diam(C))$.
Then, in Lemma~\ref{lemma:fsubi_ub},
we use the guarantee of the geometric hashing on
the clusters resulting from Lemma~\ref{lemma:extend_MP_cluster}
to bound $|\hash_i(\sub_i(P))|$.
In general, there are two types of clusters according to the diameter:
i) ``small'' with diameter at most $\epsilon = \ell / \Gamma$, for which we use the consistency guarantee of our geometric hashing, i.e., the second point
of Definition~\ref{def:decomp} (note that for small clusters, $\wopen/\Diam(C)$ is not a useful bound on $|C|$),
and ii) ``large'', for which $O(\wopen / \Diam(C))$ is not too large
and the subsampling leaves only $\poly(d\cdot\log\Delta)$ points for each ``large'' cluster with high probability.
(In the lemma below, if $\Diam(C) = 0$, the bound $O(\wopen / \Diam(C))$ is defined to be $\Delta^d$.)

\begin{lemma}[Extended MP-clustering]
    \label{lemma:extend_MP_cluster}
    There exists a partition $\mathcal{C}$ of $P$
    such that $\wopen \cdot |\mathcal{C}|  \leq O(d\cdot\log\Delta) \cdot \OPT$
    and for every $C \in \mathcal{C}$, $|C| \leq O(\wopen / \Diam(C))$.
\end{lemma}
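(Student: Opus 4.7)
I will construct $\mathcal{C}$ via a Mettu--Plaxton (MP) style greedy clustering: sort the points of $P$ in non-decreasing order of $r_p$, process them one by one, and add the current $p$ to a set $F \subseteq P$ of \emph{centers} iff no previously-added center lies within distance $r_p/4$ of $p$. Then define $\mathcal{C} := \{C_f : f \in F\}$ by assigning each $x \in P$ to its nearest center in $F$ (ties broken arbitrarily).

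For the cluster properties, fix $f \in F$ and any non-center $p \in C_f$. The greedy rule guarantees that some center was within distance $r_p/4$ of $p$ at the moment $p$ was processed, and since $f$ is the nearest center to $p$ in the final $F$ we have $\dist(p,f) \leq r_p/4$. Hence every pair $p,q \in C_f$ satisfies $\dist(p,q) \leq (r_p+r_q)/4 \leq r_{\max}(C_f)/2$, where $r_{\max}(C_f) := \max_{x \in C_f} r_x$, so $\Diam(C_f) \leq r_{\max}(C_f)/2$. Letting $p^* \in C_f$ attain the maximum, $C_f \subseteq B(p^*,\Diam(C_f)) \subseteq B(p^*,r_{p^*}/2)$, and Fact~\ref{fact:rp} yields $|C_f| \leq 2\wopen/r_{p^*} \leq \wopen/\Diam(C_f)$, as required.

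The remaining step is to bound $|\mathcal{C}| = |F|$. I will invoke the classical MP analysis to establish the stronger statement $\wopen \cdot |F| = O(\OPT)$, which trivially implies the claimed $O(d\log\Delta)\cdot\OPT$ bound. The connection-cost part is immediate: $\sum_p \dist(p, F) \leq \tfrac{1}{4} \sum_p r_p = O(\OPT)$ by Fact~\ref{fact:rp}, because every $p$ satisfies $\dist(p, F) \leq r_p/4$. For the opening cost, expand each $\wopen$ via the defining identity $\wopen = \sum_{x \in B(f,r_f)\cap P}(r_f - \dist(x, f))$ of Definition~\ref{def:rp}, sum over $f \in F$, and swap the order of summation to obtain $\wopen \cdot |F| = \sum_{x \in P} T_x$, where $T_x := \sum_{f \in F,\ x \in B(f, r_f)} (r_f - \dist(x, f))$. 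One then shows $T_x = O(r_x)$ for every $x$, so that $\wopen\cdot|F| \leq O(\sum_x r_x) = O(\OPT)$ by Fact~\ref{fact:rp}.

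The main obstacle is precisely this per-point bound $T_x = O(r_x)$. It combines the Lipschitz property $|r_x - r_f| \leq \dist(x, f)$ (a direct consequence of Definition~\ref{def:rp}, obtained by substituting $r_x + \dist(x,f)$ into the defining sum for $f$), which bounds each individual term $r_f - \dist(x, f)$ by $r_x$, with the pairwise separation $\dist(f_1, f_2) > \max(r_{f_1}, r_{f_2})/4$ of distinct centers enforced by the greedy rule, which controls how many centers can simultaneously contribute significant slack at a single point $x$. Once $T_x = O(r_x)$ is in place, combining $\wopen\cdot|F| = O(\OPT)$ with the cluster structure from the first paragraph completes the proof.
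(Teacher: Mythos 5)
Your construction of the clusters and the bound $|C_f|\le O(\wopen/\Diam(C_f))$ are fine, and in fact the way you get the cardinality bound --- observing that $C_f\subseteq B(p^*,r_{p^*}/2)$ for the point $p^*$ maximizing $r_x$ over $C_f$ and invoking Fact~\ref{fact:rp} --- is a nice shortcut that avoids the level-by-level subdivision used in the paper. The gap is in the counting step: the claims $\wopen\cdot|F|=O(\OPT)$ and $T_x=O(r_x)$ are \emph{false} for the greedy with radius $r_p/4$. Concretely, take $n$ points in $\RR^d$ with $d=\Theta(\log n)$ whose pairwise distances all lie in $[D,1.1D]$ for $D=\wopen/(2n)$. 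A direct computation from Definition~\ref{def:rp} gives $r_p\approx 3D$ for every $p$, so the greedy threshold is $r_p/4\approx 0.75D<D$ and \emph{every} point becomes a center: $\wopen\cdot|F|=n\wopen$, while $\OPT\le\wopen+1.1nD=O(\wopen)$. This also kills the weaker target bound, since $n\gg d\log\Delta$ here. In charging terms, each $x$ is charged by all $n-1$ other centers $f$, each contributing $r_f-\dist(x,f)\approx 2D=\Theta(r_x)$, so $T_x=\Theta(n\, r_x)$. The classical MP analysis avoids this only because the radius $2r_p$ forces $\dist(f_1,f_2)>2\max(r_{f_1},r_{f_2})\ge r_{f_1}+r_{f_2}$, making the balls $B(f,r_f)$, $f\in F$, pairwise disjoint so that each point is charged at most once; with separation only $\max(r_{f_1},r_{f_2})/4$, a single point can lie in $2^{\Omega(d)}$ such balls, and the Lipschitz bound on individual terms does not control their number.

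The two halves of your argument are in genuine tension: the cluster-size bound via Fact~\ref{fact:rp} needs the greedy radius to be at most $r_p/4$ (so that $\Diam(C_f)\le r_{p^*}/2$), while the facility-count bound needs it to be at least $2r_p$ (for disjointness of the charging balls), and no single radius satisfies both. The paper resolves this by keeping the $2r_p$-greedy, which gives $\wopen\cdot|\FMP|=O(\OPT)$, and then splitting each Voronoi cell $C(p)$ by the level $j$ of $r_q$ and capping each piece at $2^j$ points; the cardinality bound then comes from the cap rather than from Fact~\ref{fact:rp}, and the price is exactly the extra $O(L)=O(d\cdot\log\Delta)$ factor in $|\mathcal{C}|$ that appears in the lemma statement. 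To repair your proof you would need either to adopt that subdivision, or to find a different argument bounding the number of centers produced by a small-radius greedy --- and the example above shows no such bound better than $\Omega(n/(d\log\Delta))\cdot\OPT/\wopen$ can hold.
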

\begin{proof}
    The following algorithm from~\cite{MettuP03}, called MP algorithm, finds a $3$-approximation for UFL.
    \begin{enumerate}
        \item List $P$ in non-decreasing order of $r_{p}$.
        \item Examine $p \in P$ in order, and
        if there is no open facility in $B(p, 2r_p)$, then
        open the facility at $p$.
    \end{enumerate}

    Denote the set of the facilities opened by MP algorithm as $\FMP$. We use the following steps to construct a partition of $P$.
    \begin{enumerate}
        \item For every $p \in P$, assign it to the nearest
        point in $\FMP$.
        \item For every $p \in \FMP$, let $C(p) \subseteq P$
        be the set of points that are assigned to $p$.
        \item For every $p \in \FMP$ and every $j = 1,\dots, L$,
        let $C(p)_j := C(P) \cap P_j$.
        \item \label{step:divide} For every $p \in \FMP$ and every $j$,
        arbitrarily divide $C(p)_j$ into subsets of
        size $2^j$, possibly with a unique subset that has size $< 2^j$,
        and include these subsets into $\mathcal{C}$.
    \end{enumerate}
    
    Now we show that $\mathcal{C}$ is the collection satisfying Lemma~\ref{lemma:extend_MP_cluster}.
    Clearly, $\mathcal{C}$ covers $P$, since every point in $p$ is assigned to some point in $\FMP$.

    To upper bound the number of points in each cluster, suppose $C \in \mathcal{C}$ is created by dividing $C(p)_j$ for some $p$ and $j$.
    Then $\forall q \in C$, $p \in B(q, 2r_q)$.
    To see this, suppose for the contrary that
    $p \notin B(q, 2r_q)$. By the construction of $C(p)$,
    $p \in \FMP$ is the closest to $q$,
    hence $p \notin B(q, 2r_q)$ implies that no point in $\FMP$
    belongs to $B(q, 2r_q)$.
    However, by the MP algorithm, this means $q$ should have been added to $\FMP$,
    which is a contradiction.
Hence for every $q_1, q_2 \in C \subseteq P_j$,
        \[\dist(q_1, q_2) \leq \dist(q_1, p) + \dist(p, q_2) \leq O(2^{-j} \wopen) + O(2^{-j} \wopen)
        = O(2^{-j} \wopen),\]
    which implies that $\Diam(C) = O(2^{-j} \wopen)$.
    By Step~\ref{step:divide} of the construction, we know that $|C| \leq 2^j$. Therefore, when $\Diam(C) > 0$, $|C| \leq O(\wopen / \Diam(C))$ holds.

    Finally, we bound the number of clusters. For every $p \in \FMP$ and every $j$, the number of subsets that 
    we obtain from the division is at most $\left\lceil|C(p)_j| / 2^j\right\rceil \le |C(p)_j| / 2^j + 1$.
    Summing over $p$ and $j$, 
    \[
        \begin{aligned}
        \wopen \cdot |\mathcal{C}|
        \leq \wopen \cdot \sum_{p \in \FMP}{\sum_{j=1}^L}{1 + |C(p)_j| / 2^j}
        &\leq O(d\cdot\log\Delta) \cdot |\FMP| \cdot \wopen +
            O(1) \cdot \sum_{p \in \FMP}{\sum_{j}}{\sum_{q \in C(p)_j}{r_q}} \\
        &\leq O(d\cdot\log\Delta) \cdot |\FMP| \cdot \wopen + O(\OPT) \\
        &\leq O(d\cdot\log\Delta) \cdot \OPT),
        \end{aligned}
    \]
    where the second inequality uses that $C(p)_j \subseteq P_j$
    and $\sum_{p \in P}{r_p} = \Theta(1) \cdot \OPT$.
    This completes the proof of Lemma~\ref{lemma:extend_MP_cluster}.
\end{proof}

\begin{lemma}
    \label{lemma:fsubi_ub}
    With probability at least $1 - 1 / \poly(\Delta^d)$,
    $ \wopen  \cdot |\hash_i(\sub_i(P))| \leq \poly(d\cdot\log\Delta) \cdot (\Gamma  + \Lambda) \cdot  \OPT$.
\end{lemma}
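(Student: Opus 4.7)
The plan is to use the extended MP-clustering from \Cref{lemma:extend_MP_cluster} together with the two defining guarantees of the consistent hash $\hash_i$ (diameter and consistency), combined with a concentration bound for subsampling from \Cref{fact:card_cluster}.

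First, I would fix a partition $\mathcal{C}$ of $P$ as guaranteed by \Cref{lemma:extend_MP_cluster}, so that $\wopen \cdot |\mathcal{C}| \leq O(d\cdot\log\Delta)\cdot \OPT$ and $|C|\leq O(\wopen/\Diam(C))$ for every $C\in\mathcal{C}$. Since
\[
  |\hash_i(\sub_i(P))| \;\leq\; \sum_{C\in\mathcal{C}} |\hash_i(\sub_i(C))|,
\]
it suffices to bound each summand by $\poly(d\cdot\log\Delta)\cdot(\Gamma+\Lambda)$, and then multiply through by $\wopen$. I would then split the clusters according to their diameter, using the threshold $\epsilon = \ell/\Gamma = 2^{-i}\wopen/(10\Gamma)$ that matches the consistency guarantee of $\hash_i$.

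For a \emph{small} cluster with $\Diam(C)\leq \epsilon$, the consistency property of $\hash_i$ (part 2 of \Cref{def:decomp}) gives $|\hash_i(\sub_i(C))| \leq |\hash_i(C)| \leq \Lambda$ deterministically, so nothing else is needed. For a \emph{large} cluster with $\Diam(C)>\epsilon$, the bound $|C|\leq O(\wopen/\Diam(C))$ yields $|C| \leq O(\Gamma\cdot 2^i)$. I would then invoke \Cref{fact:card_cluster} with the subsampling rate $2^{-i}$: either $|C|\leq 2^i$, in which case the second bullet gives $\Pr[|\sub_i(C)|\geq t]\leq \exp(-\Theta(t))$; or $|C|>2^i$, in which case the first bullet gives $\Pr[|\sub_i(C)|\geq t\cdot |C|\cdot 2^{-i}] \leq \exp(-\Theta(t))$, and since $|C|\cdot 2^{-i} = O(\Gamma)$ this means $|\sub_i(C)| \leq O(t\Gamma)$ with the same probability. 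Choosing $t = \poly(d\cdot\log\Delta)$ large enough makes the failure probability per cluster at most $1/\poly(\Delta^d)$ with arbitrarily large polynomial degree; after a union bound over the at most $|\mathcal{C}| \leq \Delta^d\cdot \poly(d\cdot\log\Delta)$ clusters, the overall failure probability remains $1/\poly(\Delta^d)$. On this good event, every large cluster satisfies $|\hash_i(\sub_i(C))| \leq |\sub_i(C)| \leq \poly(d\cdot\log\Delta)\cdot \Gamma$.

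Putting the two cases together gives $|\hash_i(\sub_i(C))| \leq \poly(d\cdot\log\Delta)\cdot(\Gamma+\Lambda)$ for every $C\in\mathcal{C}$ simultaneously. Summing over $\mathcal{C}$ and multiplying by $\wopen$,
\[
  \wopen\cdot|\hash_i(\sub_i(P))| \;\leq\; \wopen\cdot|\mathcal{C}|\cdot \poly(d\cdot\log\Delta)\cdot(\Gamma+\Lambda) \;\leq\; \poly(d\cdot\log\Delta)\cdot(\Gamma+\Lambda)\cdot \OPT,
\]
which is the desired inequality. The only mildly delicate point is calibrating the concentration parameter $t$: it must be polylogarithmic in $\Delta^d$ so that the Chernoff tail beats a union bound over all clusters, yet the resulting per-cluster bound must still factor cleanly into the $\poly(d\cdot\log\Delta)\cdot(\Gamma+\Lambda)$ form; verifying that the case $|C|\leq 2^i$ and the case $|C|>2^i$ both yield the same shape of bound is what makes the argument go through.
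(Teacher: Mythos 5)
Your proposal is correct and follows essentially the same route as the paper's proof: the same extended MP-clustering, the same split into small-diameter clusters (handled deterministically via the consistency guarantee, contributing the $\Lambda$ term) and large-diameter clusters (handled via $|C|\leq O(\Gamma\cdot 2^i)$ plus the Chernoff bound of \cref{fact:card_cluster} and a union bound, contributing the $\Gamma$ term), followed by summing over $\mathcal{C}$ and invoking $\wopen\cdot|\mathcal{C}|\leq O(d\cdot\log\Delta)\cdot\OPT$. No gaps.
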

\begin{proof}
    Let $\mathcal{C}$ be the collection of subsets guaranteed by Lemma~\ref{lemma:extend_MP_cluster}.
    Then
    \[
        |\hash_i(\sub_i(P))| \leq \sum_{C \in \mathcal{C}}{ |\hash_i(\sub_i(C)| }
        \leq |\mathcal{C}| \cdot \max_{C \in \mathcal{C}} |\hash_i(\sub_i(C))|.
    \]
Since $\wopen \cdot |\mathcal{C}| \leq O(d\cdot\log\Delta) \cdot \OPT$,
    it suffices to prove that with probability at least $1 - 1 / \poly(\Delta^d)$,
    $|\hash_i(\sub_i(C))| \leq \poly (d\cdot\log\Delta) \cdot (\Gamma + \Lambda)$ for every $C \in \mathcal{C}$. 
    
    We wish to bound $|\hash_i(\sub_i(C))|$ for $C$ with ``small'' diameter
    and ``large'' diameter separately.
    For $C$ with $\Diam(C) \leq \epsilon := \Theta(\wopen / (2^i \Gamma))$,
    by Definition~\ref{def:decomp},
    we have $|\hash_i(C)| \leq \Lambda$, which implies that
    \[
        |\hash_i(\sub_i(C))| \leq |\hash_i(C)|
        \leq \Lambda \,.
\]
For $C$ with $\Diam(C) > \epsilon$,
    by Lemma~\ref{lemma:extend_MP_cluster}, we have
    $|C| \leq O(\wopen / \Diam(C)) \leq O(2^i \cdot \Gamma)$.
    By Fact~\ref{fact:card_cluster}, with probability at least $1 - 1 / \poly(\Delta^d)$,
    $|\sub_i(C)| \leq O(d\cdot\log\Delta) \cdot \Gamma$.
    Finally, applying the union bound to all subsets $C \in \mathcal{C}$
    with $\Diam(C) > \epsilon$ concludes the proof of Lemma~\ref{lemma:fsubi_ub}.
\end{proof}

The next lemma states that if the close neighborhood (at distance $r_p$)
of any point $p$ in $P_i$ does not contain too many points after subsampling $P$,
then for any subsampled point $p\in P_i$
there are not too many subsampled points mapped by $\hash_i$ into the same bucket as $p$.

\begin{lemma}
    \label{lemma:inverse_ub}
With probability at least $1 - 1 / \poly(\Delta^d)$,
    for every point $p$ such that $r_p \geq 2^{-i} \wopen$,
    $|\hash_i^{-1}(\hash_i(p)) \cap \sub_i(P)| \leq O(d\cdot\log\Delta)$.
\end{lemma}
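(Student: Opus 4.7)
The plan is to use the diameter bound of the hashing to show that each relevant bucket contains only $O(2^i)$ points of $P$, and then argue that subsampling at rate $2^{-i}$ reduces this count to $O(d \cdot \log \Delta)$ with very high probability.

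Fix a point $p \in P$ with $r_p \geq 2^{-i} \wopen$, and let $Q := \hash_i^{-1}(\hash_i(p))$ denote its bucket. By the diameter property of the $(\Gamma, \Lambda)$-hash $\hash_i$ (part 1 of \Cref{def:decomp}) and the choice $\ell = 2^{-i} \wopen / 10$ made in Algorithm~\ref{alg:impt_i}, the bucket $Q$ has diameter at most $\ell$. Since $p \in Q$, we get $Q \subseteq B(p, \ell)$, so
\[
    P \cap Q \;\subseteq\; P \cap B(p,\, 2^{-i} \wopen / 10)
    \;\subseteq\; P \cap B(p,\, r_p / 2),
\]
using $r_p \geq 2^{-i} \wopen$ for the last inclusion. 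By the second bullet of Fact~\ref{fact:rp}, we then have $|P \cap Q| \leq 2 \wopen / r_p \leq 2^{i+1}$.

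Next, apply the subsampling step from Algorithm~\ref{alg:impt_i}. Setting $S := P \cap Q$, we have $|S| \leq 2^{i+1}$. By the first case of Fact~\ref{fact:card_cluster} (applied with a cluster of size between $2^i$ and $2^{i+1}$, or by the second case if $|S| \leq 2^i$), for any $t \geq 2$,
\[
    \Pr\big[\, |\sub_i(S)| \geq t \cdot 2 \,\big] \;\leq\; \exp(-\Theta(t)).
\]
Choosing $t = \Theta(d \cdot \log \Delta)$ with a sufficiently large constant makes this probability at most $1/\poly(\Delta^d)$ with any desired polynomial degree. On the complementary event, $|\hash_i^{-1}(\hash_i(p)) \cap \sub_i(P)| = |\sub_i(S)| \leq O(d \cdot \log \Delta)$.

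Finally, we union bound over all ``relevant'' buckets. The only buckets that matter are those that contain at least one point $p \in P$ with $r_p \geq 2^{-i}\wopen$; since every such bucket is determined by any one of the (at most $|P| \leq \Delta^d$) points it contains, there are at most $\Delta^d$ such buckets. Applying the tail bound above to each of these buckets and union-bounding yields the desired statement with failure probability $1/\poly(\Delta^d)$. The only subtlety is that we must union-bound over buckets (equivalently, over witness points $p$) rather than treating the events per-point as independent, but this is immediate given the crude $\Delta^d$ bound on the number of buckets and that the tail above decays faster than any fixed polynomial in $\Delta^d$.
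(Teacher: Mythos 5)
Your proof is correct and follows essentially the same route as the paper's: the diameter bound $\ell = 2^{-i}\wopen/10$ places the bucket inside $B(p, r_p/2)$, Fact~\ref{fact:rp} bounds the bucket's population by $O(2^i)$, and Fact~\ref{fact:card_cluster} with $t = \Theta(d\cdot\log\Delta)$ plus a union bound over the at most $\Delta^d$ relevant buckets finishes the argument. (One trivial slip: the bound $|P\cap B(p,r_p/2)|\le 2\wopen/r_p$ is the \emph{first} bullet of Fact~\ref{fact:rp}, not the second.)
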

\begin{proof}
    Observe that if $r_p \geq 2^{-i} \wopen$, then
    $|\hash_i^{-1}(\hash_i(p)) \cap P| \leq O(2^i)$.
    This is a consequence of having $\ell = 2^{-i} \wopen / 10$, and the fact that
    $|B(p, r_p / 2)\cap P| \leq O(2^i)$ (by Fact~\ref{fact:rp}).
    Hence, Lemma~\ref{lemma:inverse_ub} follows
    by applying Fact~\ref{fact:card_cluster} on the set $\hash_i^{-1}(\hash_i(p)) \cap P$.
\end{proof}

\begin{proof}[Proof of Lemma~\ref{lemma:impt_i}]
    If $P_i = \emptyset$, then Algorithm~\ref{alg:impt_i} always returns an arbitrary point of $P$ or $\NIL$,
    and the guarantee of the lemma trivially holds.
    Now suppose $P_i \neq \emptyset$ and fix $x \in P_i$.
    Let $P' = P \setminus \{x\}$.
    Let $\mathcal{E}$ be the event that the following happens (over the randomness of $\sub_i$).
    \begin{enumerate}
        \item $\wopen \cdot |\hash_i(\sub_i(P'))| \leq \poly(d\cdot\log\Delta) \cdot \OPT$.
        \item for every point $p$ with $r_p \geq 2^{-i} \wopen$,
        $|\hash_i^{-1}(\hash_i(p)) \cap \sub_i(P')| \leq O(d\cdot\log\Delta)$.
    \end{enumerate}
    By Lemma~\ref{lemma:fsubi_ub} (with the parameter $\Gamma = \poly(d)$ and $\Lambda = \poly(d)$ by \Cref{thm:exist_decomp}) and Lemma~\ref{lemma:inverse_ub},
    $\Pr[\mathcal{E}] \geq 1 - 1 / \poly(\Delta^d)$.
    Moreover, $\mathcal{E}$ implies that
    \[
        \wopen \cdot |\hash_i(\sub_i(P))| \leq \wopen \cdot |\hash_i(\sub_i(P'))| + \wopen
        \leq \poly(d\cdot\log\Delta) \cdot \OPT,
    \]
    and for every point $p$ with $r_p \geq 2^{-i} \wopen$,
    \[
        |\hash_i^{-1}(\hash_i(p)) \cap \sub_i(P)|
        \leq |\hash_i^{-1}(\hash_i(p)) \cap \sub_i(P')| + 1
        \leq O(d\cdot\log\Delta).
    \]
    Note that $\mathcal{E}$ is independent to whether or not $x$ survives the subsampling.
    Let $A_x := \hash_i^{-1}(\hash_i(x)) \cap \sub_i(P)$ be the set of points
    that lie in the same bucket of $\hash_i$ as $x$.
    Suppose Algorithm~\ref{alg:impt_i} returns a random point $p^*$.
    Then a point $x \in P_i$ is sampled with probability
    \[
        \begin{aligned}
        \Pr[p^* = x]
        &\geq \Pr[p^* = x \mid \mathcal{E}] \cdot \Pr[\mathcal{E}] \\
        &\geq \Omega(1) \cdot 2^{-i} \cdot \frac{1}{|\hash_i(\sub_i(P))| \cdot |A_x|} \\
        &\geq \Omega(1) \cdot \frac{1}{\poly(d\cdot\log\Delta)} \cdot 2^{-i} \cdot \frac{\wopen}{\OPT}.
        \end{aligned}
    \]
    This completes the proof of Lemma~\ref{lemma:impt_i}.
\end{proof}

     \section{Streaming Algorithms}
\label{sec:streaming}

\subsection{A Two-Pass $O(1)$-Approximation in Dynamic Streams}
\label{sec:two_pass}
\TwoPassAlg*

\begin{proof}[Proof sketch.]
Observe that Theorem~\ref{thm:impt} samples every point $p^* \in P$
with probability proportional to $r_{p^*}$ up to a $\poly(d\cdot\log\Delta)$ factor.
Hence, the estimator $\widehat{Z} = 1 / \widehat{\Pr}[p^*] \cdot r_{p^*}$
has expectation $\E[\widehat{Z}] = \Theta(1) \cdot \sum_{p \in P} r_p$,
noting that $\widehat{\Pr}[p^*]$ returned by Theorem~\ref{thm:impt}
is a $2$-approximation to the true sampling probability $\Pr[p^*]$.
(If the algorithm of Theorem~\ref{thm:impt} returns $\NIL$, we simply set $\widehat{Z} = 0$.)
Moreover, a standard calculation of the variance of an importance sampler
shows that $\Var[\widehat{Z}] \leq \poly(d\cdot\log\Delta) \cdot \E^2[\widehat{Z}]$.
Therefore, averaging over $\poly(d\cdot\log\Delta)$ independent samples of estimator $\widehat{Z}$,
one obtains an $O(1)$-approximate estimator with constant probability
(by a standard application of Chebyshev's inequality).

Hence, the complete algorithm is straightforward: We draw $\poly(d\cdot\log\Delta)$ independent samples $S$
using Theorem~\ref{thm:impt} in the first pass,
and in the second pass, we estimate the $r_p$ value for every sampled $p \in S$ using Fact~\ref{fact:rp_estimation}.
\end{proof}

\subsection{A One-pass $O(1)$-Approximation in Random-order Streams}
\label{sec:random_order}
\RandomOrderAlg*

Recall that in the proof of Theorem~\ref{thm:two_pass},
we use the importance-sampler of Theorem~\ref{thm:impt} in the first pass,
while in the second pass we estimate the $r_p$ value for the sampled points $p$.
We plan to adapt this two-pass algorithm to the random-order setting in a natural way.
Roughly, we run the sampler in the first half of the stream, and we estimate the $r_p$ values using the second half of the stream.
We formalize this in Algorithm~\ref{alg:random_order}.

\begin{algorithm}[ht]
	\caption{Importance sampling in a random-order stream}
	\label{alg:random_order}
	\begin{algorithmic}[1]
		\Procedure{One-Sample}{}
			\State let $J$ denote the first half of the stream,
            and $K$ the second half
			\State obtain a sample $x^J \in J$ using Theorem~\ref{thm:impt}
            on $J$ together with the corresponding estimated probability $\widehat{\Pr}[x^J]$
			\State obtain an estimate $\widehat{r}^K_{x^J}$ of $r^K_{x^J}$ using Fact~\ref{fact:rp_estimation}
			on $K\cup \{x\}$ 
\label{line:rp_estimation-random_order}
			\State return $1 / \widehat{\Pr}[x^J] \cdot \widehat{r}^K_{x^J}$
			\Comment{if $x^J = \NIL$, return $0$}
		\EndProcedure
		\Procedure{Main}{}
			\State $m \gets \poly(d\cdot\log\Delta)$
			\For{$i = 1, \ldots, m$}
				\State run \textsc{One-Sample} to obtain $\widehat{Z}_i$
			\EndFor
			\State return $\widehat{Z} = \frac{1}{m} \cdot \sum_{i}^{m}{\widehat{Z}_i}$
		\EndProcedure
	\end{algorithmic}
\end{algorithm}

\paragraph{Analysis.}
We  give a constant probability upper and lower bounds on the estimator $\widehat{Z}$ returned by 
Algorithm~\ref{alg:random_order} separately.
Let $\OPT^J$ and $\OPT^K$ denote the optimal UFL cost for point sets $J$ and $K$, respectively;
recall that $\OPT$ is the optimal cost for the whole point set $P$.
Moreover, for any point $x$ we define $r^J_x$ and $r^K_x$ similarly by restricting Definition~\ref{def:rp} to sets $J$ and $K$, respectively, with one adjustment: 
When computing $r^K_x$ (resp. $r^J_x$)
for $x\notin K$ (resp. $x \notin J$), we still take $x$ into account.
This is to ensure $r^K_x > 0$ is well defined (since otherwise $B(x, \wopen)\cap K$ may be empty).

Before we proceed, we use the following interpretation of the randomness in the entire algorithm.
\begin{itemize}
	\item We interpret the random permutation of the stream as first
	randomly partitioning $P$ into the two halves $J$ and $K$, and then applying a uniformly random permutation
	on each of $J$ and $K$ independently.
	Let $\AB$ denote the randomness of the partition,
	and let $\PA$ and $\PB$ denote the randomness of the permutations of $J$ and $K$, respectively.
	\item We denote the randomness of the algorithm $\AG$.
	Note that Algorithm~\ref{alg:random_order} uses randomness independent to the random permutation of the stream, so $\AG$ is independent to $\AB, \PA, \PB$.
	Since Fact~\ref{fact:rp_estimation} gives a deterministic estimation
	of the $r_p$ values, the randomness of $\AG$ comes solely from
	the importance-sampler (Theorem~\ref{thm:impt}) running on $J$.
	In our analysis, we condition on the success of all $m$ instances of Theorem~\ref{thm:impt}, which happens with probability $1 - 1 / \poly(\Delta^d)$ by the union bound.
	Therefore, we have that for every point $x^J$
	sampled by Theorem~\ref{thm:impt},
	$\Pr[x^J]\le \widehat{\Pr}[x^J] \leq 2\Pr[x^J] $. (Here, we write the notation $\Pr[p^* = x^J]$ from Theorem~\ref{thm:impt} as $\Pr[x^J]$ for short.)
	
\end{itemize}

\paragraph{Upper Bound.}
We upper bound the expectation of $\widehat{Z}$ by $O(\OPT)$ in Lemma~\ref{lemma:random_order_ub}.
Then a standard application of Markov's inequality implies
\[
	\Pr[\widehat{Z} \leq O(\OPT)] \geq \frac{5}{6}\,.
\]
\begin{lemma}
	\label{lemma:random_order_ub}
	Let $Z = 1 / \widehat{\Pr}[x^J] \cdot \widehat{r}_{x^J}^K$ be an estimate returned by \textsc{One-Sample}.
	Then $\E[ Z ] \leq O(1) \cdot \OPT$.
\end{lemma}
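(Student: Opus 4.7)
The plan is to split $\E[Z] = \E_{\AB}\E_{\AG}[Z \mid J, K]$ and bound each piece. First, condition on the partition $\AB$. By Theorem~\ref{thm:impt}, the reported estimate satisfies $\widehat{\Pr}[x] \ge \Pr_{\AG}[x^J = x]$ for every $x \in J$, which gives
\[
  \E_{\AG}[Z \mid J, K] \;=\; \sum_{x \in J} \frac{\Pr_{\AG}[x^J = x]}{\widehat{\Pr}[x]} \cdot \widehat{r}^K_x \;\le\; \sum_{x \in J} \widehat{r}^K_x \;\le\; O(1)\sum_{x \in J} r^K_x ,
\]
where the last step is Fact~\ref{fact:rp_estimation}, and we set $\widehat{r}^K_\NIL := 0$ to accommodate the $\NIL$ output (matching the algorithm's return value).

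Now average over $\AB$, obtaining
\[
  \E[Z] \;\le\; O(1) \sum_{x \in P} \Pr_{\AB}[x \in J] \cdot \E_{\AB}\!\big[r^K_x \,\big|\, x \in J\big].
\]
The proof then reduces to showing the per-point inequality
\[
  \E_{\AB}\!\big[r^K_x \,\big|\, x \in J\big] \;\le\; O(r^P_x)  \qquad \forall x \in P,
\]
where $r^P_x$ is the $r$-value computed with respect to the full point set $P$; combined with $\sum_{x \in P} r^P_x = \Theta(\OPT)$ from Fact~\ref{fact:rp}, this yields $\E[Z] \le O(\OPT)$.

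To prove the per-point bound, fix $x$ and write $r := r^P_x$. If $r \ge \wopen/4$, then $r^K_x \le \wopen \le 4r$ holds trivially (since $x \in K \cup \{x\}$). Otherwise set $r' := 4r$. By Fact~\ref{fact:rp}, $|P \cap B(x, r')| \ge |P \cap B(x, r)| \ge \wopen/r = 4\wopen/r'$, so conditioned on $x \in J$ the size $|K \cap B(x, r')|$ is binomial with mean at least $3\wopen/(2r') > 1$. A Chernoff bound gives $\Pr[\,|K \cap B(x, r')| < \wopen/r'\,] \le e^{-\Theta(\wopen/r')}$. On the good event, the elementary observation that $|S \cap B(x, r')| \ge \wopen/r'$ implies $r^S_x \le 2r'$ (applied with $S := K \cup \{x\}$) yields $r^K_x \le 8r$, while on the bad event we use only the deterministic bound $r^K_x \le \wopen$. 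Writing $t := \wopen/r'$, the total expectation is at most $8r + \wopen \cdot e^{-\Theta(t)} = O(r)(1 + t e^{-\Theta(t)}) = O(r)$.

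The main obstacle is precisely the per-point bound above: a naive ``random halving preserves local density'' argument fails when $\wopen/r^P_x$ is only a small constant, because Chernoff concentration is too weak there. The remedy is the dual trick of inflating the reference radius to $4 r^P_x$ so that the expected number of surviving near-neighbors is at least a useful constant, while on the complementary low-probability event one absorbs the failure using the crude deterministic bound $r^K_x \le \wopen$; the clean interplay between these two regimes is what makes the expected ratio $\E[r^K_x\,|\,x \in J] / r^P_x$ uniformly bounded.
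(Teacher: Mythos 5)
Your proposal is correct and follows essentially the same route as the paper's proof: reduce to the per-point bound $\E_{\AB}[r^K_x] \le O(r_x)$, then combine a Chernoff bound on the number of neighbors of $x$ surviving into $K$ with the crude fallback $r^K_x \le \wopen$ on the low-probability tail event. One minor imprecision: since $J,K$ is an exact half-split of $P$, the membership indicators are negatively associated rather than independent (so the count is not literally binomial), but as the paper notes, Chernoff bounds still apply under negative association.
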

\begin{proof}
	Using that $\widehat{r}^K_{x^J}\le O(r^K_{x^J})$ and
	$\widehat{\Pr}[x] \ge \Pr[x]$ 
    (since we condition on the success of Theorem~\ref{thm:impt}), we bound
    \[
        \begin{aligned}
		\E[Z]
		&= \E_{\AB}\left[ \E_{\AG}[ 1 / \widehat{\Pr}[x^J] \cdot \widehat{r}^K_{x^J} \mid \AB, \PA, \PB] \right] \\
		&\leq O(1)\cdot \E_{\AB}\left[ \E_{\AG}[ 1 / \Pr[x^J] \cdot r^K_{x^J} \mid \AB, \PA, \PB] \right] \\
		&= O(1)\cdot \E_{\AB}\left[ \sum_{x \in J} r_x^K \right] \\
		&\leq O(1)\cdot \E_{\AB}\left[ \sum_{x \in P} r_x^K \right] \\
		&= O(1)\cdot \sum_{x \in P} \E_{\AB}\left[ r_x^K \right]\,.
        \end{aligned}
    \]
	It is sufficient to upper bound $\E_{\AB}[r_x^K]$ by $O(r_x)$ for every $x \in P$,
	since $\sum_{x \in P} r_x = O(\OPT)$ by Fact~\ref{fact:rp}.
	Fix some $x \in P$.
	By Fact~\ref{fact:rp}, we have $|B(x, r_x) \cap P|\ge \wopen/r_x$.
	Let $Y = |(B(x, r_x) \cap K)\cup\{x\}|$ be the number of points in $B(x, r_x)$ that 
	are in the second half of the stream
	($x$ is always included in $Y$).
	Note that $r^K_x \le O( \wopen/Y)$ (possibly less due to points outside $B(x, r_x)$,
	which we do not take into account).
	Thus, by bounding $\E[ \wopen/Y]$ we get the desired result. In the remaining part of the proof,
	the expectation is over the randomness of $\AB$ only.
	
	To this end, for $u\in B(x, r_x) \cap P\setminus\{x\}$ let $Y_u$ be the indicator random variable 
	equal to 1 if and only if $u\in K$. We have $\E[Y_u] = 1/2$ and
	$Y = \sum_{u \in B(x, r_x) \cap P} Y_u$. Thus, by linearity of expectation 
	$\E[Y]\ge |B(x, r_x) \cap P|/2 \ge  \wopen/(2r_x)$.
	Furthermore, variables $Y_u$ are negatively associated,
    so Chernoff bound still applies~\cite{DBLP:journals/rsa/DubhashiR98}, and we have 
    \[
	 \Pr\left[Y \le  \wopen/(4r_x)\right]
	 \le \Pr\left[Y \le 0.5\E[Y]\right]
	 \le \exp(-\Theta(\E[Y]))
	 = \exp(-\Theta( \wopen/r_x))\,.
    \]
	We use this to bound the expectation of $\wopen / Y$ as follows:
    \[\begin{aligned}
	\E[ \wopen/Y] = \sum_{y \ge 1} \Pr[Y = y]\cdot \frac{ \wopen}{y}
	&\le O(r_x) + \Pr\left[Y \le  \wopen/(4r_x)\right]\cdot \sum_{1\le y\le  \wopen/(4r_x)} \frac{\wopen}{y}
	\\
	&\le O(r_x) + \wopen \cdot \exp(-\Theta( \wopen/r_x))\cdot O(\log( \wopen/r_x))
	= O(r_x)\,.
    \end{aligned}\]
	It follows that $\E_{\AB}[r_x^K] = O(r_x)$, which concludes the proof.	
\end{proof}

\paragraph{Lower Bound.}
Consider $Z = 1 / \widehat{\Pr}[x^J] \cdot \widehat{r}_{x^J}^K$ as returned by \textsc{One-Sample}.
Let
\[
	Z' := 1 / \Pr[x^J] \cdot \min\{ r^J_{x^J}, r^K_{x^J} \}\,.
\]
Clearly, $Z \geq Z'/2$ with probability $1$, since by Fact~\ref{fact:rp_estimation}, $ \widehat{r}^K_{x^J}\ge r^K_{x^J} \geq r_{x^J} $
and since we condition on the the event that $ \widehat{\Pr}[x^J] \leq 2\Pr[x^J] $.
Therefore, we focus on showing that $Z'$ is $\Omega(\OPT)$ with high constant probability, and this implies $Z$ is also $\Omega(\OPT)$ with high constant probability.

The first step is to observe that by Definition~\ref{def:rp},
\begin{equation}
	\label{eqn:min_lb}
	\forall x \in P, \qquad
	\min\{ r^J_x, r^K_x \} \geq r_x.
\end{equation}
We now show that $\OPT^J$ is not much smaller
than $\OPT$ with high probability
\begin{lemma}
	\label{lemma:opt_a}
	$\OPT^J \geq \Omega(\OPT)$ with high constant probability (over the randomness of $\PA, \PB, \AB$).
\end{lemma}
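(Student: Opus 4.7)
The plan is to reduce the claim to bounding $\sum_{p\in J} r_p$, which is easier to analyze than $\OPT^J$ directly. First, applying Fact~\ref{fact:rp} to the point set $J$ gives $\OPT^J = \Theta\bigl(\sum_{p\in J} r^J_p\bigr)$. Next, the inequality~\eqref{eqn:min_lb} yields $r^J_p \geq r_p$ for every $p\in J$, hence $\sum_{p\in J} r^J_p \geq \sum_{p\in J} r_p$. So it suffices to prove that the random variable $X := \sum_{p \in J} r_p$ is $\Omega(\OPT)$ with high constant probability, where the randomness is over the uniform partition of $P$ into two halves $J, K$ of equal size.

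I would then split into two regimes according to the relation between $\OPT$ and $\wopen$. In the ``small $\OPT$'' regime where $\OPT \leq C\wopen$ for a suitable constant $C$, we have the trivial lower bound $\OPT^J \geq \wopen = \Omega(\OPT)$ that holds deterministically whenever $J \neq \emptyset$ (which we can assume since otherwise the statement is vacuous, and in any case $|P|/2$ is large w.l.o.g.). In the ``large $\OPT$'' regime where $\OPT \gg \wopen$, I would use concentration. Since by Fact~\ref{fact:rp} we have $R := \sum_{p\in P} r_p = \Theta(\OPT)$, and by symmetry of $J,K$ we have $\E[X] = R/2$.

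For the variance, I would note that $X$ is a sum over a uniformly random half of $P$, so sampling is without replacement. This makes the indicator variables $\mathbf{1}[p\in J]$ negatively associated (analogously to the negative association invoked in the proof of Lemma~\ref{lemma:random_order_ub}), and standard bounds give
\[
   \Var[X] \;\leq\; \sum_{p\in P} r_p^2 \;\leq\; \bigl(\max_p r_p\bigr)\cdot R \;\leq\; \wopen\cdot R.
\]
Chebyshev's inequality then gives $\Pr[X < R/4] \leq O(\wopen/R) = o(1)$ in this regime, so $X \geq R/4 = \Omega(\OPT)$ with high probability. Combining the two regimes yields the desired high-constant-probability bound on $\OPT^J$.

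The main technical subtlety will be the variance bound under sampling without replacement, which requires invoking negative association rather than independence; this is exactly the same machinery already exploited in the previous lemma, so it should go through cleanly. The case split around $\OPT$ vs.\ $\wopen$ is the conceptual point that makes the argument robust: concentration by itself fails precisely when a handful of points dominate the sum $\sum_p r_p$, but that is exactly the regime in which the trivial ``one facility costs $\wopen$'' lower bound already suffices.
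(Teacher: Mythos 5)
Your proposal is correct and follows essentially the same route as the paper: reduce $\OPT^J$ to $\sum_{p\in J} r_p$ via Fact~\ref{fact:rp} and~\eqref{eqn:min_lb}, handle $\OPT = O(\wopen)$ by the trivial one-facility bound, and otherwise apply concentration to the negatively associated indicators of the random half. The only (immaterial) difference is that you use Chebyshev with a variance bound where the paper invokes a Chernoff bound for negatively associated variables; both yield the required constant-probability guarantee.
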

\begin{proof}
    If $\OPT \leq O(\wopen)$, then the bound holds with probability $1$,
    since any feasible solution of $J$ must open at least one facility,
    which implies $\OPT^J \geq \wopen \geq \Omega(\OPT)$.

    Now, assume $\OPT \geq \Omega(\wopen)$.
	It holds that $\OPT^J = \Theta(1) \cdot \sum_{x \in J}{r^J_x} \geq \Theta(1) \cdot \sum_{x \in J}{r_x}$ by Fact~\ref{fact:rp} and~\eqref{eqn:min_lb}.
	For $u\in P$, let $X_u$ be the indicator variable such that $X_u = 1$ if and only if $u \in J$;
	by the definition of $J$, we have $\E[X_u] = 0.5$.
	Let $X := \sum_{u \in P}{X_u \cdot r_u / \wopen}$.
	Then $\sum_{x \in J}r_x = \wopen \cdot X$, and $\wopen \cdot \E[X] = \Theta(1) \cdot \OPT$. Observe that $X_u$'s are negatively associated, and we can still apply Chernoff bounds~\cite{DBLP:journals/rsa/DubhashiR98}.
	Therefore,
    \[\begin{aligned}
		\Pr[\OPT^J \leq O(1)\cdot \OPT)]
		&\leq \Pr[\wopen \cdot X - \wopen\cdot \E[X] < - 0.5\cdot\wopen \cdot \E[X]] \\
        &\leq \exp(-\Theta(1) \cdot \E[X]) \\
        &\leq \exp(-\Theta(1) \cdot \OPT / \wopen) \\
		&\leq O(1),
    \end{aligned}\]
	where the first step holds for a small enough constant hidden in $O(1)$
	and the last step follows from the assumption that $\OPT \geq \Omega(\wopen)$.
\end{proof}
Now, we condition on the the event $\mathcal{E}$ that $\OPT^J \geq \Omega(\OPT)$,
which happens with high probability by Lemma~\ref{lemma:opt_a}.
Thus, in the remaining part we fix the randomness of $\AB, \PA, \PB$ such that event $\mathcal{E}$ happens. 
Then by~\eqref{eqn:min_lb} we get
\[
\E_{\AG}[Z'  ] \ge \E_{\AG}[ 1 / \Pr[x^J] \cdot r_{x^J} ]
	= \E_{\AG}[\OPT^J ] = \Omega(\OPT).
\]
By Theorem~\ref{thm:impt},
\[\begin{aligned}
	\E_{\AG}[Z'^2]
	&= \E_{\AG}[ 1 / (\Pr[x^J])^2 \cdot (\min\{ r^J_{x^J} , r^K_{x^J} \})^2] \\
	&= \sum_{x \in J} 1 / \Pr[x^J] \cdot (\min\{ r^J_{x} , r^K_{x} \})^2
	\\
	&\leq \poly(d\cdot\log\Delta)
		 \cdot \OPT^J \cdot \sum_{x \in J} (\min\{ r^J_{x} , r^K_{x} \})^2 / r^J_x \\
	&\leq \poly(d\cdot\log\Delta)
		 \cdot \OPT^J \cdot \sum_{x \in J} r^J_{x}  \\
	&= \poly(d\cdot\log\Delta)
		 \cdot  (\OPT^J)^2  \\
	&\leq \poly(d\cdot\log\Delta) \cdot \OPT^2
\end{aligned}\]
Hence, conditioned on $\mathcal{E}$, \textsc{One-Sample} is a low-variance sampler.
A straightforward application of Chebyshev's inequality
implies that $Z'$ is at least $\Omega(\OPT)$ with probability $5 / 6$.
As $Z \geq Z'$ with probability~1, the same holds for the estimator $\widehat{Z}$ returned by the final \textsc{Main} procedure in Algorithm~\ref{alg:random_order}.
\qed

\subsection{A One-Pass $O(\Gamma)$-Approximation in Arbitrary-Order Streams}
\label{s:1p}

\begin{theorem}
    \label{thm:1p}
    If for some $\Gamma, \Lambda > 0 $ there exists a $(\Gamma, \Lambda)$-hash
    such that the hash value for any point in $\RR^d$ can be evaluated in space $\poly(d)$, then
    there is a one-pass randomized algorithm that
    computes an $O(\Gamma)$-approximation
    to Uniform Facility Location of an input $P\subseteq [\Delta]^d$
    presented as a dynamic geometric stream,
    using $\poly(d \cdot\log\Delta) \cdot O(\Gamma^2\Lambda^2)$ bits of space.
\end{theorem}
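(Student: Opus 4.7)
The overall approach is to implement, in one pass, the importance-sampling strategy of \Cref{thm:impt} combined with a streaming \emph{tester} that decides, from information available at the moment of sampling, whether a sampled point belongs to a prescribed level. I partition $P$ into $L = O(d\log\Delta)$ levels $P_i := \{p \in P : r_p \in (2^{-i}\wopen, 2^{-i+1}\wopen]\}$ and write $\OPT = \Theta(\sum_i W_i)$ with $W_i := \sum_{p\in P_i} r_p = \Theta(2^{-i}\wopen \cdot |P_i|)$. For each level $i$ the plan is to build, in parallel, an estimator for $W_i$, average $\poly(d\log\Delta)$ independent copies, and finally output the sum over $i$. Since the construction is level-by-level, I describe it for a fixed $i$.

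The new ingredient is to augment each level's hash-bucket data structure with \emph{enlarged counters}. Let $\hash_i$ be the given $(\Gamma,\Lambda)$-hash with diameter bound $\ell_i = 2^{-i}\wopen/10$, and set $\beta_i := \ell_i/(2\Gamma)$. For each bucket $Q = \hash_i^{-1}(z)$, I wish to track $c_Q := |P \cap B(Q,\beta_i)|$. When a point $x$ is inserted or deleted, I update $c_Q$ for every bucket in $\hash_i(B(x,\beta_i))$; since $\Diam(B(x,\beta_i)) \le 2\beta_i \le \ell_i/\Gamma$, the consistency condition of the hash guarantees this set has size at most $\Lambda$, bounding the update cost and (after instantiating \Cref{lemma:2dl0} on a virtual matrix whose rows are buckets and columns are points that ever updated a bucket) the space overhead. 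A single call to the two-level $\ell_0$-sampler then simultaneously returns a sampled point $p^*$, its bucket $Q(p^*)$, the counter $c_{Q(p^*)}$, and the estimate $\widehat{\Pr}[p^*]$ needed by \Cref{thm:impt}.

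The tester uses the rule: accept $p^*$ as a representative of $P_i$ iff $c_{Q(p^*)} \le T$ for a threshold $T = \Theta(2^i)$, and on acceptance output $1/\widehat{\Pr}[p^*] \cdot 2^{-i}\wopen$; otherwise output $0$. Correctness rests on two applications of \Cref{fact:rp}. If $r_{p^*} \ge 2^{-i}\wopen$, then the diameter-$\ell_i$ bucket containing $p^*$ sits inside $B(p^*,\ell_i)$, so $B(Q(p^*),\beta_i) \subseteq B(p^*, \ell_i+\beta_i) \subseteq B(p^*, r_{p^*}/2)$, whence $c_{Q(p^*)} \le 2\wopen/r_{p^*} \le T$ and $p^*$ is accepted. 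Conversely, if $r_{p^*} \le \beta_i/2$ then $B(p^*,r_{p^*}) \subseteq B(Q(p^*),\beta_i)$, forcing $c_{Q(p^*)} \ge \wopen/r_{p^*} \ge \Omega(\Gamma\cdot 2^i)$, far exceeding $T$, so $p^*$ is rejected. The indeterminate band $r_{p^*} \in (\Theta(2^{-i}\wopen/\Gamma),\,2^{-i}\wopen]$ spans $O(\log\Gamma)$ consecutive levels, and a point in this band may be erroneously absorbed into the estimate for level $i$; this is exactly the source of the $O(\Gamma)$ multiplicative slack in the final approximation ratio.

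The remainder is a routine variance-concentration argument: each per-level estimator has second moment bounded by $\poly(d\log\Delta)\cdot\Gamma$ times $\E[\cdot]^2$, so averaging $\poly(d\log\Delta)\cdot O(\Gamma^2)$ copies yields an $O(\Gamma)$-approximation of $W_i$ by Chebyshev, and a union bound over the $L$ levels together with the decomposition $\OPT = \Theta(\sum_i W_i)$ gives the claimed $O(\Gamma)$-approximation to $\OPT$. The per-sampler space is $\poly(d\log\Delta)\cdot O(\Lambda)$ coming from the $\Lambda$-fold enlargement of the $\ell_0$-sampler's support, multiplied by the $O(\Gamma^2)$ repetitions, giving the stated $\poly(d\log\Delta)\cdot O(\Gamma^2\Lambda^2)$ budget. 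The main obstacle I anticipate is adapting the non-empty-bucket bound of \Cref{lemma:impt_i} to the enlarged counters: I must show that even after the $\beta_i$-enlargement has multiplied the active-bucket count by up to $\Lambda$, the two-level $\ell_0$-sampler still produces an importance-proportional distribution on $P$ (up to a $\poly(d\log\Delta,\Gamma,\Lambda)$ factor), which is what ties the tester-modified estimator back to the guarantees of \Cref{thm:impt}.
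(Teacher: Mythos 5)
Your overall architecture matches the paper's: level decomposition $P_i$, a $(\Gamma,\Lambda)$-hash with diameter bound $2^{-i}\wopen/10$, counters over $(\epsilon_i/2)$-enlarged buckets whose update footprint is controlled by the consistency condition, an approximate membership tester whose $O(\log_2\Gamma)$-level slack is exactly what produces the $O(\Gamma)$ ratio, and a final variance argument. However, the obstacle you flag at the end is a genuine gap, and your proposed setup does not resolve it. You maintain the \emph{exact} counter $c_Q=|P\cap B(Q,\beta_i)|$, updated by every stream point. The consistency bound only limits the number of buckets touched \emph{per point} to $\Lambda$; it does not limit the number of distinct buckets with non-zero counter, which is what determines the support of the $\ell_0$-sampler and hence the scaling factor $|\calL_i^{(j)}|$ in the estimator. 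A single dense cluster of $\Theta(\wopen/\Diam(C))=\Theta(2^i\Gamma)$ points with diameter exceeding $\epsilon_i$ can activate $\Theta(2^i\Gamma\Lambda)$ ``fake'' buckets, so the support can exceed $\poly(d\log\Delta)\cdot\OPT/\wopen$ by a factor of $2^i$, destroying both the upper bound on the estimator needed for concentration and the importance-proportional sampling guarantee. The paper's fix is that the counter updates are themselves \emph{subsampled} at rate $2^{-i}$ (via $T=O(d\log\Delta)$ independent hash functions $h_i^{(t)}$, with a median $\widehat{C}_a$ compared to a constant threshold rather than $\Theta(2^i)$); then each surviving cluster contributes only $\poly(d\log\Delta)\cdot O(\Gamma\Lambda)$ active buckets, which is what Lemma~\ref{lemma:1pbucket} proves and what Lemma~\ref{lemma:counter} needs to certify the tester. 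This subsampled-counter-plus-median mechanism is a necessary idea, not a routine adaptation.

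Two further points are missing. First, the concentration of the per-level estimators (your ``routine variance argument'') requires $\OPT=\Omega(\poly(d\log\Delta))\cdot\wopen$: the Chernoff bounds in Lemma~\ref{lemma:unbiased_est} need the expected subsampled counts to be at least $\Omega(d\log\Delta)$ to achieve failure probability $1/\poly(\Delta^d)$, and levels with $2^{k-i}W_k$ below the threshold $\Theta(\OPT/L^3)$ must be argued to contribute only an additive $O(\OPT/L^2)$ rather than being estimated multiplicatively. The paper handles the complementary small-$\OPT$ regime by running Indyk's $O(d\log^2\Delta)$-approximation as a detector and, when triggered, falling back on $k$-median coresets for all $k\le\poly(d\log\Delta)$; your writeup has no mechanism for this case. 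Second, your tester's acceptance band is stated as $r_{p^*}\in(\Theta(2^{-i}\wopen/\Gamma),2^{-i}\wopen]$; to convert this into the claimed $O(\Gamma)$ ratio you still need the summation argument of Lemma~\ref{lemma:1pone_i}, namely that $\sum_{k\le i+\alpha}2^{k-i}W_k$ summed over $i$ telescopes to $O(2^\alpha)\sum_k W_k=O(\Gamma)\cdot\OPT$; this is straightforward but should be stated since it is where the geometric-series cancellation happens.
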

\begin{remark}
    \label{remark:ratio}
    We obtain two space-ratio tradeoffs for \Cref{thm:1p}
    by using the hash construction in \Cref{thm:EucSparsePartition} with different parameters.
    Note that \Cref{thm:EucSparsePartition} gives a $(\Gamma, \Lambda = e^{\frac{8d}{\Gamma}} \cdot O(d \log d))$-hash, for any $\Gamma \in [8, 2d]$,
    and the hash value for a point in $\RR^d$ may be evaluated in space $\poly(d)$ (however, the time complexity is exponential in $d$).
    \begin{enumerate}
        \item Plugging in $\Gamma = \Theta(\frac{d}{ \log d})$, we get
        $\Lambda = \poly(d)$; this immediately yields \Cref{thm:1p-intro}.
        \item Alternatively, we assume the algorithm has an advance knowledge of $n$, the number of data points\footnote{
        An $O(1)$-factor estimate of $n$ is sufficient. This assumption can be removed (with a modest increase of space complexity), say,
        by trying all powers of $2$ up to $\Delta^d$, which is an upper bound on $n$.}.
        Then for every data point we apply the Johnson-Lindenstrauss transform~\cite{JL84} with distortion $0.1$ (which can be any fixed small constant) and hence target dimension $O(\log n)$,
        and then feed it into \Cref{thm:1p}.
        Now, the new dimension becomes $d = O( \log n )$, and we pick $\Gamma = O(\eps^{-1})$
        so that $\Lambda = O(n^{\eps/2} \cdot \poly(\log n)) = \tilde{O}(n^{\eps/2})$.
        This implies an $O(\eps^{-1})$-ratio and $\tilde{O}(n^{\eps})$-space tradeoff for \Cref{thm:1p}.
    \end{enumerate}
\end{remark}

\paragraph{One-pass Algorithm Outline.}
Given the successful application of the importance sampling algorithm
in two passes or in one pass over a random-order stream, a natural idea is to attempt to implement this approach in one pass over an arbitrary-order stream as well.
However, recall that a crucial step in the importance sampling is to estimate the $r_p$ value for
some $p$ sampled by the algorithm, and we managed to do so by using Fact~\ref{fact:rp_estimation}
for the random-order and two-pass settings.
However, Fact~\ref{fact:rp_estimation} does not apply to the one-pass arbitrary-order setting,
since it only works in the case when $p$
is given in advance, but in our case $p$ is given only after processing the stream.
In other words, after processing the stream one has to answer an estimate $r_p$ for a \emph{query} point $p$, with no foreknowledge of this point.
Indeed, answering this query within any constant factor $c$ requires $\Omega(n)$ space, even for 1D, which follows by a reduction from the one-way communication problem of INDEX;
see Appendix~\ref{sec:LBs-index} for details.

Hence, we turn to a more structured estimator.
In particular, to estimate $\sum_{p \in P} r_p$ (which gives an $O(1)$-approximation for Uniform Facility Location by Fact~\ref{fact:rp}),
we consider $i = 0, \ldots, L$ and let $W_i := \sum_{p \in P_i} r_p$.
We focus on estimating $W_i$ separately for each $i$.
Recall that $\sub_i(P)$ samples each point in $P$ independently with probability $2^{-i}$.
Since $W_i = \sum_{p \in P_i}{r_p} = \Theta(\wopen \cdot |P_i| / 2^i)$,
the number of points $p \in \sub_i(P)$ with $r_p \in P_i$ is a nearly-unbiased
estimator for $W_i / \wopen$.
Therefore, our plan is to estimate $|\sub_i(P_i)|$.

To this end, we use ideas from Algorithm~\ref{alg:impt_i} which defines an 
importance sampling on level~$i$.
We alternatively interpret it as sampling from $\sub_i(P)$,
and it can be seen that an estimator $Z = 1 / \Pr[x] \cdot I(x \in P_i)$
is an unbiased estimator for $|\sub_i(P_i)|$.
However, the estimator may not be well-concentrated around the expectation
when $W_i$ is too small (compared with $\OPT$).
In case this happens, we argue that the estimator is also very small
in the absolute sense (i.e., compared to $\OPT$ and not relative to the expectation), with high probability.
This turns out to be sufficient for our purpose,
since in such a case $W_i$ contributes very little to $\OPT$ anyway (which means we can ignore it),
and the small value of the estimator actually correctly reflects this.
We formalize this idea in Lemma~\ref{lemma:unbiased_est}.
Recall that $L := d \cdot \log_2\Delta$.

\begin{lemma}
    \label{lemma:unbiased_est}
    Let $\kappa$ be such that $\wopen \cdot \kappa = \Theta(\OPT / L^3)$ and suppose that $\OPT = \Omega(L^4) \cdot \wopen$.
    Then for every $0 \leq k \leq L$,
    each of the following holds with probability at least $1 - 1/\poly(\Delta^d)$.
    \begin{itemize}
        \item If $2^{k-i} \cdot W_k \geq \wopen \cdot \kappa$,
        then $\wopen \cdot |\sub_i(P_k)| = \Theta(2^{k-i}) \cdot W_k$;
        \item Otherwise, $\wopen \cdot |\sub_i(P_k)| \leq O(\OPT / L^2)$.
    \end{itemize}
\end{lemma}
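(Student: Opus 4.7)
The plan is to reduce both clauses of the lemma to the Chernoff-style tail bounds of \Cref{fact:card_cluster} applied to $P_k$ at sampling rate $2^{-i}$. The first step is a simple accounting identity: since every $p\in P_k$ satisfies $r_p\in(2^{-k}\wopen, 2^{-k+1}\wopen]$, we have $W_k=\Theta(\wopen\cdot |P_k|/2^k)$, and therefore
\[
\mu \;:=\; \E[|\sub_i(P_k)|] \;=\; |P_k|/2^i \;=\; \Theta\!\bigl(2^{k-i} W_k/\wopen\bigr).
\]
Thus the hypothesis $2^{k-i}W_k\ge\wopen\cdot\kappa$ of the first clause is exactly $\mu\ge\Omega(\kappa)$, and from $\wopen\kappa=\Theta(\OPT/L^3)$ combined with $\OPT\ge\Omega(L^4)\wopen$ this forces $\mu\ge\Omega(L)$. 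Since $L=d\log_2\Delta$, any $\exp(-\Omega(L))$ with a large enough implicit constant is at most $1/\poly(\Delta^d)$, which gives us the slack needed for the Chernoff bounds below.

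For the first clause, I would apply the first bullet of \Cref{fact:card_cluster} in both tails. The bound $\mu\ge\Omega(L)\gg 1$ automatically implies $|P_k|\ge 2^i$, placing us in the ``$|S|\ge 2^i$'' regime of the fact; a standard two-sided Chernoff gives
\[
  \Pr\bigl[|\sub_i(P_k)|\notin [\mu/2,2\mu]\bigr] \;\le\; 2\exp(-\Theta(\mu)) \;\le\; \exp(-\Theta(L)) \;\le\; 1/\poly(\Delta^d),
\]
provided the constant hidden in $\OPT\ge\Omega(L^4)\wopen$ is chosen large enough. On the complementary event, $\wopen|\sub_i(P_k)|=\Theta(\wopen\mu)=\Theta(2^{k-i}W_k)$, which is the required equality.

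For the second clause I would set the target $T:=C\cdot\OPT/(L^2\wopen)$ for a suitable constant $C$ absorbing the $\Theta$-factors, and split on whether $|P_k|\ge 2^i$. In the first subcase, apply the first bullet of \Cref{fact:card_cluster} with $t=T/\mu$; since $\mu<O(\kappa)=O(\OPT/(L^3\wopen))$ we get $t\ge\Omega(L)$, so $\Pr[|\sub_i(P_k)|\ge T]\le\exp(-\Theta(L))\le 1/\poly(\Delta^d)$. In the second subcase $|P_k|<2^i$, apply the second bullet with $t=\Theta(L)$; this yields $|\sub_i(P_k)|\le O(L)$ with the same failure probability, and hence $\wopen|\sub_i(P_k)|\le O(L\wopen)\le O(\OPT/L^3)\le O(\OPT/L^2)$, using $\OPT\ge\Omega(L^4)\wopen$ once more. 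The only real subtlety, and the ``hard part'' of the write-up, is coordinating the various constants: the constant in the assumption $\OPT\ge\Omega(L^4)\wopen$ must dominate both the polynomial degree in $\poly(\Delta^d)$ and the constants inside the Chernoff exponents of \Cref{fact:card_cluster}. This is purely bookkeeping, not a conceptual issue, so no new ideas are required beyond the hash-oblivious tail bounds already proved.
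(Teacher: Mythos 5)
Your proposal is correct and follows essentially the same route as the paper's proof: compute $\mu = \E[|\sub_i(P_k)|] = \Theta(2^{k-i})\cdot W_k/\wopen$ from the definition of $P_k$, note that the first clause's hypothesis forces $\mu \ge \Omega(\kappa) \ge \Omega(L)$ so a two-sided Chernoff bound gives the concentration, and in the second clause bound $|P_k| \le O(\kappa)\cdot 2^i$ and invoke the tail bounds of \Cref{fact:card_cluster}. Your explicit split on $|P_k|\ge 2^i$ versus $|P_k|<2^i$ in the second clause is just an unpacking of the paper's single appeal to \Cref{fact:card_cluster}, and the constant-coordination you flag is indeed the only bookkeeping involved.
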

\begin{proof}
    For $u \in P_k$, let $X_u \in \{0, 1\}$ be the indicator random variable
    such that $X_u = 1$ if and only if $u \in \sub_i(P_k)$,
    so $\Pr[X_u = 1] = 2^{-i}$ for every $u \in P_k$.
    Then $|\sub_i(P_k)| = \sum_{u \in P_k}{X_u}$,
    and $\E[|\sub_i(P_k)|] = |P_k| \cdot 2^{-i}$.

    Let $\mu := \E[\sub_i(P_k)]$.
By the definition of $P_k$, every $p \in P_k$ satisfies
    $2^{-k} \wopen < r_p \leq 2^{-k+1} \wopen$.
    Hence $\frac{|P_k|}{2^k} \wopen < W_k = \sum_{p \in P_k} r_p \leq \frac{|P_k|}{2^{k-1}} \wopen$,
    and this implies
    \begin{equation}
        \mu = \Theta(2^{k - i}) \cdot W_k / \wopen. \label{eqn:mu}
    \end{equation}
    \begin{itemize}
        \item If $2^{k - i} \cdot W_k \geq \wopen\cdot \kappa$, then $\mu \geq \Omega(\kappa) \geq \Omega(d\cdot\log\Delta)$ by \eqref{eqn:mu} and $\OPT = \Omega(L^4) \cdot \wopen$.
        We apply the Chernoff bound to get,
        \[
            \Pr[|\sub_i(P_k)| \in (1 \pm 0.5) \cdot \mu]
            \geq 1 - \exp(-\Omega(\kappa)) = 1 - 1 / \poly(\Delta^d).
        \]
        \item Otherwise, $\mu \leq O(\kappa)$,
        so $|P_k| \leq O(\kappa) \cdot 2^i$.
        Using that $O(d\cdot \log \Delta)\cdot \wopen \cdot \kappa = O(\OPT / L^2)$, we apply Fact~\ref{fact:card_cluster} to $P_k$ and get,
        \[
            \Pr[\wopen \cdot |\sub_i(P_k)| \leq O(\OPT / L^2)]
            \geq 1 - 1 / \poly(\Delta^d).
        \]
\end{itemize}
    This concludes the proof of Lemma~\ref{lemma:unbiased_est}.
\end{proof}

Another more severe issue is that the estimator $I(x \in P_i)$ turns out
to be very sensitive, and it seems difficult to estimate its value even
up to a constant factor (say, test if $r_x \geq 2^{-i} \wopen / 100$), in one pass.
To bypass this, we design an approximate tester for $x \in P_i$, and analyze
how such an approximation affects the ratio.

\begin{definition}[Approximate Tester]
    \label{def:approx_tester}
    We say that $\widehat{I}_{ \leq  i} : P \to \{0, 1\}$ is an $\alpha$-approximate tester
    for level $i$ if for all points $p\in P$ it holds that, 
    \begin{itemize}
        \item if $p \in P_{\leq i}$, then $\widehat{I}_{\leq i}(p) = 1$, and
        \item if $\widehat{I}_{\leq i}(p) = 1$, then $p \in P_{\leq i + \alpha}$,
    \end{itemize}
    where $P_{\leq i} = \bigcup_{j \leq i}{P_j}$.
\end{definition}
Note that $P_{\leq i}$ corresponds to points $p$ with $r_p$ value larger than $2^{-i} \wopen$,
and this approximate tester tolerates points with a not too small value of $r_p$.
This is not a problem since for $j \geq i$, $\wopen \cdot |\sub_i(P_j)| \approx W_j / 2^{i-j}$
and taking this into the estimation would result in an $O(2^\alpha)$ times more
contribution of $W_j$ overall (due to summing the geometric sequence over $i\leq j$).

To implement this tester,
we observe that by the definition of the $r_p$ value (Definition~\ref{def:rp}),
if a point $x$ has $r_x \approx 2^{-i} \wopen$, then
a point $y$ with $r_y \ll 2^{-i} \wopen$ cannot appear in the ball $B(x, r_x / 5)$,
since otherwise by Fact~\ref{fact:rp},
$|B(x, r_x / 2) \cap P| \leq O(2^{i})$ while $B(y, r_y) \subseteq B(x, r_x / 5)$
and $|B(y, r_y) \cap P| \gg 2^i$.
Therefore, we can assert that a bucket cannot contain any point $p$ with a large enough $r_p$ value
if the number of points inside is too large
(the buckets are those defined by the $(\Gamma, \Lambda)$-hash $\hash_i$, where we set the diameter bound $\ell_i = 2^{-i} \wopen / 10$).
This also means a bucket either consists of points with large $r_p$ values only,
or none at all.
Using this observation,
for each bucket we maintain a counter for points inside.
When some point is sampled, we also retrieve this counter for its bucket,
and see if it is not too large, in order to test if the bucket consists of points with large enough $r_p$ values.

However, it is possible that some point $y$ with $r_y \ll 2^{-i} \wopen$
lies inside the bucket, but the number of points inside the bucket is still small
as the majority of the large number of points in
$B(y, r_y) \cap P$ actually lie outside of the bucket.
To resolve this issue, we observe that by a similar reasoning as in the previous paragraph,
if a bucket contains a point $x$ with $r_x \approx 2^{-i}\wopen$,
no point $y$ with $r_y \ll 2^{-i} \wopen$ may be \emph{close} to the bucket with $x$.
Hence, we actually need to count the number of points in a slightly enlarged area,
i.e., for a bucket $Q \subset \mathbb{R}^d$ we need to 
count the points in the neighborhood $B(Q, r_y) \cap P$.
Hence, if we maintain the counter for $B(Q, \beta) \cap P$ for some $\beta > 0$
then we can reject $y$'s with $r_y \ll O(\beta)$.

Now, suppose we are to maintain the counter for some $\beta$.
To implement this, whenever we see a data point $x$ arrives,
we should increase the counter for all buckets $Q$ such that $x \in B(Q, \beta)$.
However, this becomes challenging in the streaming setting,
primarily due to the fact that the number of buckets $Q$ such that $x \in B(Q, \beta)$ can be huge,
and more importantly, most of them may be ``fake'', in that they do not contain any data point.
Consequently, these fake nonempty buckets can enlarge the support of the $\ell_0$-sampler significantly,
which makes it difficult to obtain a uniform sample from the non-empty buckets that are not ``fake''.

To tackle this challenge, we make use of the guarantee from the geometric hashing,
that any subset of $\mathbb{R}^d$ with diameter less than $\epsilon_i := \ell_i / \Gamma$
is mapped to $\le \Lambda$ buckets (see Definition~\ref{def:decomp}).
Hence, if we choose $\beta = \epsilon_i / 2$, the effect of enlarging the
buckets by an additive $\beta = \epsilon_i / 2$ is essentially
making every data point $x$ a ball $B(x, \epsilon_i / 2)$, and add the image
of this ball to the buckets/counters.
We show that this is sufficient to implement the tester with $\alpha = \lceil\log_2 \Gamma\rceil$.

We implement these algorithmic steps 
using an extended $\ell_0$-sampler stated in Lemma~\ref{lemma:ext_l0_sampler} (this extension is somewhat simpler than that of the two-level $\ell_0$-sampler in Lemma~\ref{lemma:2dl0}).
We show in Lemma~\ref{lemma:1pbucket} that we can still bound the number of nonempty buckets,
even including the fake ones, by $\poly(d\cdot\log\Delta)\cdot \OPT$,
which is a strengthened version of Lemma~\ref{lemma:fsubi_ub}.
This is needed to have a bounded support for the $\ell_0$-samplers.

\begin{lemma}[$\ell_0$-Sampler with Data Fields]
    \label{lemma:ext_l0_sampler}
    Let $X$ be a length-$n$ frequency vector, with the maximum frequency bounded by $\poly(n)$ and with each index also associated with a data field of 
    at most $\poly(\log n)$ bits.
    Suppose $X$ is presented in a dynamic data stream consisting of
    (positive or negative) updates to the frequency entries of $X$,
    and with updates to the associated data fields.
    Then, there exists a streaming algorithm that for every such $X$,
    returns an index $i$ of $X$, together with its frequency and data field,
    such that $i$ is chosen uniformly at random (u.a.r.) among indices with non-zero frequency or non-zero data field,
    using space $\poly(\log n)$ and with success probability at least $1 - 1 / \poly(n)$.
\end{lemma}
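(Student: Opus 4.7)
The strategy is to extend the standard $\ell_0$-sampler template (subsampling the universe at geometric rates plus $1$-sparse recovery via moment sketches) by treating each coordinate as the pair $(f_i, d_i)$ rather than a scalar, so that an index is deemed \emph{active} whenever $(f_i, d_i) \ne (0,0)$, and the recovered sample is returned together with both its frequency and its data field. The skeleton of the algorithm is unchanged: fix a prime $p = \poly(n)$, draw an $\Theta(\log n)$-wise independent hash $h:[n]\to[2^L]$ with $L=\lceil \log_2 n\rceil$, let $I_j = \{ i : h(i) < 2^{L-j} \}$ be the level-$j$ subsample for $j=0,1,\dots,L$, and return the sample from the coarsest level at which the active set restricted to $I_j$ has size exactly one.

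At every level $j$ I will maintain, modulo $p$, the six moment sketches
\[
  s^{f}_t(j) \;=\; \sum_{i\in I_j} i^t f_i, \qquad s^{d}_t(j) \;=\; \sum_{i\in I_j} i^t d_i, \qquad t\in\{0,1,2\},
\]
where $d_i$ is interpreted as the integer encoded by its $\poly(\log n)$-bit data field. All sketches are linear in the stream, hence are trivially maintained under additive updates to either $f_i$ or $d_i$, and storing the sketches across the $O(\log n)$ levels uses $\poly(\log n)$ bits in total. To recover a sample I scan the levels and try two $1$-sparse tests at each level: if $s^f_0(j)\ne 0$ and $s^f_0(j)\cdot s^f_2(j) \equiv (s^f_1(j))^2 \pmod p$, decode $i^* = s^f_1(j)/s^f_0(j)$ and read off $(f_{i^*},d_{i^*})=(s^f_0(j), s^d_0(j))$; otherwise apply the symmetric test on the $d$-sketches, which is crucial for catching the case $f_{i^*}=0$ but $d_{i^*}\ne 0$. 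Running $O(\log n)$ independent parallel copies and returning the first success boosts the success probability to $1-1/\poly(n)$.

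The uniformity claim is the standard one: by the symmetry of $h$ on its active restriction, conditioned on exactly one active index surviving at the chosen level, that index is distributed uniformly over the active set, and the $\Theta(\log n)$-wise independence of $h$ is enough for the usual second-moment argument to show that some level has a unique active survivor with constant probability. The main technical obstacle I anticipate is ruling out \emph{false positives} of the $1$-sparse moment test, where two or more active indices survive in $I_j$ but $s^f$ or $s^d$ accidentally satisfies $s_0 s_2 \equiv s_1^2 \pmod p$ and $s_1/s_0$ happens to land in $[n]$. This I handle in the standard way, by replacing the raw moments $i^t$ by evaluations of the generating polynomial $\sum_i f_i x^i$ (resp.\ $\sum_i d_i x^i$) at a fresh random $r\in\mathbb{F}_p$ for each parallel copy, so that Schwartz--Zippel bounds the per-level false-positive probability by $O(n/p) = 1/\poly(n)$; a union bound over the $O(\log n)$ levels and $O(\log n)$ copies absorbs all such failures into the claimed $1/\poly(n)$ error.
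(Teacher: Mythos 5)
Your overall plan---reproving the $\ell_0$-sampler from scratch with the pair $(f_i,d_i)$ in place of a scalar---can be made to work, but as written it has a genuine gap in the step that detects whether exactly one \emph{active} index survives at a level. You run the $1$-sparse test separately on the $f$-sketches and on the $d$-sketches, and accept as soon as one of them passes. Consider a level $j$ whose surviving active set is $\{i_1,i_2\}$ with $f_{i_1}\neq 0,\ d_{i_1}=0$ and $f_{i_2}=0,\ d_{i_2}\neq 0$ (or, worse, $d_{i_1}\neq 0$ as well). The frequency vector restricted to $I_j$ is genuinely $1$-sparse, so the $f$-test passes honestly---this is not a Schwartz--Zippel false positive and cannot be fixed by polynomial fingerprints---and your algorithm decodes $i_1$ and outputs the data field $s^d_0(j)=d_{i_1}+d_{i_2}$, which is wrong. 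More importantly, the algorithm now accepts at levels where two or more active indices survive, so the conditioning underlying the uniformity argument (``exactly one active survivor, hence uniform over the active set'') no longer holds, and the output distribution is biased toward indices with non-zero frequency. The repair is to make activity a property of a \emph{single} linear quantity: encode each coordinate as one integer, e.g.\ $v_i = f_i\cdot 2^{B} + d_i$ with $B=\poly(\log n)$ exceeding the bit-length of the data field, which is zero iff $(f_i,d_i)=(0,0)$, is linear under updates to either component, and is bounded by $\exp(\poly(\log n))$; then run a single $1$-sparse test on the $v$-sketches and unpack $(f_{i^*},d_{i^*})$ from the recovered value.

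Once you make that change you have essentially re-derived the paper's proof, which is a two-line black-box reduction: the standard $\ell_0$-sampler (Lemma~\ref{lemma:ell_0-sampler}) already tolerates multiplicities up to $\exp(\poly(\log n))$ and returns the sampled element's multiplicity, so one simply views the pair (frequency, data field) as that multiplicity. Your route is more self-contained but buys nothing beyond what the black-box invocation gives, and in its current form it does not establish the stated guarantee.
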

\begin{proof}[Proof sketch.] The lemma follows directly from the guarantees of a standard $\ell_0$-sampler (see Lemma~\ref{lemma:ell_0-sampler}) by only viewing the multiplicity of a point
    summarized by the $\ell_0$-sampler as a pair of frequency and data field.    
\end{proof}

With this, we state the complete algorithm in three parts: Algorithm~\ref{alg:1pinit} describes the initialization procedure before processing the stream and
Algorithm~\ref{alg:1pupdate} outlines the procedure to insert a new point
(one can remove a point in the same way, just by decreasing instead of increasing the
frequencies and counters).
After the streams ends, we call Algorithm~\ref{alg:1pquery} that returns an
estimate $\widehat{Z}$ for the Uniform Facility Location cost.

We note that $\{C_a^{(t)}\}_t$'s are independent counters
for the points lying in the enlarged buckets.
We show that each $\{C_a^{(t)}\}_t$ has a suitable expectation 
to yield an approximate tester, and we use $T$ of them
to implement the median trick to boost the success probability of an accurate estimation.
A few more remarks regarding our algorithm are in order:
\begin{itemize}
	\item In line~\ref{line:hash} of Algorithm~\ref{alg:1pinit} we need to initialize $T$ fully random
		hash functions $h^{(t)}_i : [\Delta]^d \to \{0, 1\}$. Each of them would require $[\Delta]^d$ bits to store,
		however, we use Nisan's pseudorandom generator (PRG)~\cite{Nisan92}
		for space bounded computation, in the same way as described in Section~\ref{sec:mp_sample}.
\item In Algorithm~\ref{alg:1pquery}, we made one simplification: 
	In line~\ref{line:counter_else} we need $|\calL^{(j)}_i|$, the size of the support of $\calL^{(j)}_i$, which we cannot obtain exactly but an $O(1)$-approximation 
	can be computed in streaming (see e.g.~\cite{DBLP:conf/pods/KaneNW10}) and
	is sufficient as it introduces only a small constant factor error in the analysis.
\end{itemize}
\begin{algorithm}[ht]
	\caption{One-pass streaming algorithm: initialization}
	\label{alg:1pinit}
	\begin{algorithmic}[1]
		\For{$i = 0, \ldots, L$}
			\State $\hash_i: \mathbb{R}^d \rightarrow \mathbb{R}^d$ be a $(\Gamma, \Lambda)$-hash 
			with diameter bound $\ell_i = 2^{-i} \wopen / 10$
            \Comment{define $\epsilon_i := \ell / \Gamma$}
			\State set $m \gets \poly(d\cdot\log\Delta) \cdot O(\Gamma^2 \Lambda^2)$
			\State set $T \gets O(d\cdot\log\Delta)$ \State for $0 \leq t \leq T$, let $h^{(t)}_i : [\Delta]^d \to \{0, 1\}$ be a fully random hash function such that
            \begin{align*}
                \forall x \in [\Delta]^d,\quad \Pr[h_i^{(t)}(x) = 1] = 2^{-i}
            \end{align*}
            \label{line:hash}
            \State for $0 \leq t \leq T$, $ S \subseteq [\Delta]^d$,
            define $\sub_i^{(t)}(S) \gets \{ x \in S : h^{(t)}_i(x) = 1 \}$
            \State let $\sub_i \gets \sub_i^{(0)}$ \State initialize $m$ extended $\ell_0$-samplers of Lemma~\ref{lemma:ext_l0_sampler} on domain $\hash_i([\Delta]^d)$
			and with $T$ counters $\{C^{(t)}\}_{t=1}^{T}$ as the data field of each element; we denote them  $\calL^{(j)}_i$ 
			for $j = 1, \dots, m$.
		\EndFor
	\end{algorithmic}
\end{algorithm}

\begin{algorithm}[ht]
	\caption{One-pass streaming algorithm: insert $p \in [\Delta]^d$}
	\label{alg:1pupdate}
	\begin{algorithmic}[1]
		\For{$i = 0, \ldots, L$}
				\label{line:subsampling}
                \For{$j = 1, \ldots, m$}
	                \If{$p$ survives subsampling $\sub_i$, i.e., $h_i^{(0)}(p) = 1$} \Comment{$\sub_i$ subsamples with rate $2^{-i}$}
                    	\State increase the frequency of $\hash_i(p)$ in $\calL^{(j)}_i$ \EndIf
\For{$u \in \hash_i(B(p, \epsilon_i / 2))$}
                        \State for $1 \leq t \leq T$, increase the counter $C^{(t)}$
                        associated with $u$ in $\calL_i^{(j)}$ by $h_i^{(t)}(p)$
                        \label{line:counter}
                        \State \Comment{we increase $C^{(t)}$ by 1 iff $h_i^{(t)}(p) = 1$}
                    \EndFor
                \EndFor
		\EndFor
	\end{algorithmic}
\end{algorithm}

\begin{algorithm}[ht]
	\caption{One-pass streaming algorithm: query procedure}
	\label{alg:1pquery}
	\begin{algorithmic}[1]
		\For{$i = 0, \ldots, L$} 
			\For{$j = 1, \ldots, m$} \Comment{recall that $m = \poly(d\cdot\log\Delta)$}
				\State query $\calL^{(j)}_i$,
                let $a$ be the sampled non-empty bucket of $\hash_i$, and let $n_a$ and $\{C_a^{(t)}\}_{t=1}^{T}$ be the frequency and counters of $a$ returned by $\calL^{(j)}_i$, respectively
                \label{line:sample_a}
                \State let $\widehat{C}_a$ be the median of
                $\{C_a^{(1)}, \ldots, C_a^{(T)} \}$ \Comment{recall that $T = O(d\cdot\log\Delta)$}
				\If{$n_a = 0$ or $\widehat{C}_a > c$ for some constant $c$ (to be determined from the analysis)}
				\label{line:counter_if}
					\State let $\widehat{Z}_i^{(j)} \gets 0$
				\Else
					\State let $\widehat{Z}_i^{(j)} \gets |\calL^{(j)}_i| \cdot n_a \cdot \wopen$
					\Comment{$|\calL^{(j)}_i|$ is the size of the support of $\calL^{(j)}_i$}
					\label{line:counter_else}
				\EndIf
			\EndFor
			\State let $\widehat{Z}_i \gets \frac{1}{m} \cdot \sum_{j=1}^{m}\widehat{Z}_i^{(j)}$
		\EndFor
		\State return $\sum_{i = 0}^{L} \widehat{Z}_i$
	\end{algorithmic}
\end{algorithm}

\paragraph{Support Size of $\ell_0$-Samplers at Level $i$.}
To analyze our algorithm, we first bound the number of buckets of $\hash_i$ that are either non-empty (i.e., have non-zero frequency) or have a non-zero counter.

\begin{lemma}
	\label{lemma:1pbucket}
	For every level $i$ and any $j = 1, \dots, m$, with probability at least $1 - 1 / \poly(\Delta^d)$,
	$\wopen \cdot |\calL^{(j)}_i| \leq \poly(d\cdot\log\Delta)  \cdot O(\Gamma \Lambda) \cdot \OPT$.
\end{lemma}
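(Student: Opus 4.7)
The plan is to extend the proof of Lemma~\ref{lemma:fsubi_ub} to account also for the ``fake'' buckets whose presence in the support of $\calL^{(j)}_i$ is caused only by counter updates. An entry $a$ of $\calL^{(j)}_i$ has nonzero frequency or nonzero counter precisely when either (i) $a \in \hash_i(\sub_i(P))$ (from a frequency update in Algorithm~\ref{alg:1pupdate}), or (ii) for some $t \in [T]$ and $p \in P$ with $h_i^{(t)}(p) = 1$, we have $a \in \hash_i(B(p, \epsilon_i/2))$ (from a counter update). Lemma~\ref{lemma:fsubi_ub} already bounds the first piece by $\poly(d\log\Delta)\cdot(\Gamma+\Lambda)\cdot \OPT/\wopen$, so I focus on the second piece.

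For the second piece, I reuse the MP-clustering $\mathcal{C}$ from Lemma~\ref{lemma:extend_MP_cluster}, which partitions $P$ into $|\mathcal{C}|\leq O(d\log\Delta)\cdot\OPT/\wopen$ clusters, and perform the same $\Diam(C)$-vs-$\epsilon_i$ case split as in the proof of Lemma~\ref{lemma:fsubi_ub}. For a \emph{large} cluster with $\Diam(C)>\epsilon_i$, we have $|C|\leq O(\wopen/\Diam(C))\leq O(2^i\Gamma)$, so Fact~\ref{fact:card_cluster} gives $|\sub^{(t)}_i(C)|\leq O(d\log\Delta\cdot\Gamma)$ with the usual failure probability; since each surviving point $p$ contributes at most $|\hash_i(B(p,\epsilon_i/2))|\leq\Lambda$ buckets by consistency (the ball has diameter $\epsilon_i$), the cluster's contribution is at most $O(d\log\Delta\cdot\Gamma\Lambda)$. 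For a \emph{small} cluster with $\Diam(C)\leq\epsilon_i$, I instead bound the contribution by the ambient-set inequality $\bigcup_{p\in C}\hash_i(B(p,\epsilon_i/2))\subseteq \hash_i(B(C,\epsilon_i/2))$ and argue that $|\hash_i(B(C,\epsilon_i/2))|\leq O(\Gamma\Lambda)$. Summing the cluster contributions, multiplying by $T=O(d\log\Delta)$ for the union over counter indices, and union-bounding the bad events across clusters, counters, samplers $j\in[m]$, and levels $i\in\{0,\dots,L\}$ yields the desired bound $\wopen\cdot|\calL^{(j)}_i|\leq\poly(d\log\Delta)\cdot O(\Gamma\Lambda)\cdot\OPT$ with failure probability $1/\poly(\Delta^d)$.

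The main technical obstacle is controlling $|\hash_i(B(C,\epsilon_i/2))|$ for small clusters: the enlarged set $B(C,\epsilon_i/2)$ can have diameter up to $2\epsilon_i$, which exceeds the consistency scale of $\hash_i$, so Definition~\ref{def:decomp} does not apply verbatim. The natural workaround is to re-read the same hash as a $(\Gamma/3,\Lambda')$-hash instead of $(\Gamma,\Lambda)$-hash, with a matching (larger) consistency count. For the tradeoff of \Cref{thm:EucSparsePartition}, this gives $\Lambda'=\exp(O(d/\Gamma))\cdot\poly(d\log d)$, and one checks that $\Lambda'\leq O(\Gamma\Lambda)$ in both parameter regimes used in \Cref{remark:ratio}: the $\Gamma=\Theta(d/\log d)$ setting (where $\Lambda,\Lambda'=\poly(d)$), and the $\Gamma=\Theta(1/\eps)$ setting with post-Johnson-Lindenstrauss dimension $d=O(\log n)$ (where $\Lambda=\tilde O(n^{\eps/2})$ and $\Lambda'=n^{O(\eps)}\cdot\poly(\log n)$). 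Plugging this $O(\Gamma\Lambda)$ bound into the small-cluster case completes the argument.
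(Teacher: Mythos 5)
Your architecture is the paper's: reduce the support of $\calL^{(j)}_i$ to the images of the enlarged neighborhoods $B(\sub_i^{(t)}(C),\epsilon_i/2)$ over the MP-clustering $\mathcal{C}$ of Lemma~\ref{lemma:extend_MP_cluster}, split on cluster diameter, use Fact~\ref{fact:card_cluster} plus the per-point consistency bound $|\hash_i(B(p,\epsilon_i/2))|\le\Lambda$ for large clusters, and multiply by $T$ and $|\mathcal{C}|$ at the end. (Your separation into a ``frequency'' piece handled by Lemma~\ref{lemma:fsubi_ub} and a ``counter'' piece is cosmetically different but subsumed by the paper's single extended dataset $\widetilde{P}^{(t)}$, since $\sub_i=\sub_i^{(0)}$ and $p\in B(p,\epsilon_i/2)$.)

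The genuine issue is your small-cluster case. You place the threshold at $\Diam(C)\le\epsilon_i$, so $B(C,\epsilon_i/2)$ can have diameter up to $2\epsilon_i$, and you then need the hash to be consistent at a scale strictly coarser than $\ell_i/\Gamma$. Definition~\ref{def:decomp} gives \emph{no} guarantee for sets of diameter exceeding $\ell/\Gamma$, so ``re-reading the same hash as a $(\Gamma/3,\Lambda')$-hash'' is not a property of an arbitrary $(\Gamma,\Lambda)$-hash --- it is a property of the particular construction in \Cref{thm:EucSparsePartition}. Since \Cref{thm:1p} (and hence this lemma) is stated conditionally on \emph{any} $(\Gamma,\Lambda)$-hash evaluable in small space, your argument does not prove the lemma in the required generality; moreover, even for \Cref{thm:EucSparsePartition} the bound at gap $\Gamma/3$ is $\Lambda'=e^{24d/\Gamma}\cdot\poly(d)$, which in the $\Gamma=\Theta(1/\eps)$, $d=O(\log n)$ regime is $n^{3\eps/2}\cdot\polylog(n)$ and does \emph{not} satisfy $\Lambda'\le O(\Gamma\Lambda)$ with $\Lambda=\tilde O(n^{\eps/2})$ (it is salvageable only by rescaling $\eps$). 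The paper avoids all of this by putting the case split at $\Diam(C)\le\epsilon_i/2$: then $B(C,\epsilon_i/2)$ has diameter $O(\epsilon_i)$, the consistency guarantee applies verbatim with the same $\Lambda$, and in the complementary case one still has $|C|\le O(\wopen/\Diam(C))\le O(2^i\Gamma)$, so the large-cluster argument goes through unchanged up to constants. Adjusting your threshold to $\epsilon_i/2$ repairs the proof and removes the dependence on the specific hash construction.
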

\begin{proof}
	Fix $t = 0, \dots, T$.
    Define an extended dataset $\widetilde{P}^{(t)} := \bigcup_{p \in \sub_i^{(t)}(P)}{B(p, \epsilon_i / 2)} \subseteq \mathbb{R}^d$ consisting of neighborhoods of points subsampled by $\sub_i^{(t)}$.    
    Apply Lemma~\ref{lemma:extend_MP_cluster} on $P$
    and obtain a partition $\mathcal{C}$ of $P$.
    It suffices to upper bound
    $|\hash_i(\bigcup_{C \in \mathcal{C}} B(\sub_i^{(t)}(C), \epsilon_i / 2))|$,
    since $\widetilde{P}^{(t)} \subseteq \bigcup_{C \in \mathcal{C}} B(\sub_i^{(t)}(C), \epsilon_i / 2)$.
    We show this upper bound for a fixed $t$; 
    the final upper bound on $|\mathcal{L}^{(j)}_i|$ will be only $T = O(d\cdot \log \Delta)$ times larger.
    
    Examine each $C \in \mathcal{C}$.
    If $\Diam(C) \leq \epsilon_i / 2$,
    then $\Diam(B(C, \epsilon_i / 2)) \leq  \epsilon_i$.
    Hence, by the guarantee of Definition~\ref{def:decomp},
    $|\hash_i(B(\sub_i(C), \epsilon_i / 2))| \leq \Lambda$.
    Otherwise, $\Diam(C) > \epsilon_i / 2$.
    In this case, we use the guarantee of $\mathcal{C}$
    that
    $|C| \leq O(\wopen / \Diam(C))$,
    so $|C| \leq O(\epsilon_i^{-1}\wopen) \leq O(2^i \cdot \Gamma)$.
    By Fact~\ref{fact:card_cluster},
    with probability at least $1 - 1 / \poly(\Delta^d)$,
    $|\sub_i^{(t)}(C)| \leq \poly(d\cdot\log\Delta) \cdot O(\Gamma)$.
    We apply the union bound and condition on the success of this for every $C \in \mathcal{C}$.
    The guarantee of the geometric hashing implies that for every $p\in \sub_i^{(t)}(C)$,
    $|\hash_i(B(p, \epsilon_i / 2))| \le \Lambda$.
    We thus have
    \[
        |\hash_i(B(\sub_i^{(t)}(C), \epsilon_i / 2))|
        \leq \Lambda \cdot |\sub_i^{(t)}(C)|
        \leq \poly(d\cdot\log\Delta) \cdot O(\Gamma \Lambda).
    \]

    Therefore, 
    \begin{align*}
        |\mathcal{L}^{(j)}_i|
        \leq \sum_{t = 1}^{T}\left|\hash_i\left( \bigcup_{C \in \mathcal{C}} B(\sub_i^{(t)}(C), \epsilon_i / 2) \right)\right|
        &\leq T\cdot \poly(d\cdot\log\Delta)  \cdot |\mathcal{C}| \cdot O(\Gamma \Lambda) \\
        &\leq \poly(d\cdot\log\Delta) \cdot O(\Gamma \Lambda)  \cdot \OPT / \wopen.
    \end{align*}
    This concludes the proof of Lemma~\ref{lemma:1pbucket}.
\end{proof}

\paragraph{The Accuracy Bound for the Counters.}
Recall that we maintain multiple counters $\{C_a^{(t)}\}$ for the enlarged
buckets, and we now analyze their accuracy.
The fact that we need $T = \poly(d\cdot\log\Delta)$ of them is used to ensure
a success probability $1 - 1 / \poly(\Delta^d)$.

\begin{lemma}
    \label{lemma:counter}
    Let $a \in \mathbb{R}^d$ be a bucket sampled in Algorithm~\ref{alg:1pquery},
    and let $Q = \hash_i^{-1}(a)$ be the point set corresponding to bucket $a$.
    Then with probability at least $1 - 1 / \poly(\Delta^d)$ (over the randomness of $\sub_i^{(t)}$ for $1 \leq t \leq T$),
    \begin{itemize}
        \item if $|B(Q, \epsilon_i / 2) \cap P| / 2^{i} \leq O(1)$ then $\widehat{C}_a \leq O(1)$, and
        \item otherwise $\widehat{C}_a = \Theta(1) \cdot |B(Q, \epsilon_i / 2) \cap P|/2^{i}$.
    \end{itemize}
\end{lemma}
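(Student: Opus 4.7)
The plan is to recognize each counter $C_a^{(t)}$ as a sum of independent Bernoulli random variables with a known mean, and then apply a Chernoff concentration followed by the standard median trick. Specifically, by inspecting Algorithm~\ref{alg:1pupdate}, the counter $C^{(t)}$ attached to the bucket $a$ gets incremented on insertion of a point $p$ exactly when $a \in \hash_i(B(p,\epsilon_i/2))$, which (unfolding $\hash_i^{-1}$) is equivalent to $p \in B(Q,\epsilon_i/2)$; and the increment is $h_i^{(t)}(p) \in \{0,1\}$ with $\Pr[h_i^{(t)}(p)=1]=2^{-i}$. Hence, writing $N := |B(Q,\epsilon_i/2)\cap P|$, we have
\[
   C_a^{(t)} \;=\; |\sub_i^{(t)}(B(Q,\epsilon_i/2)\cap P)| , \qquad \E[C_a^{(t)}] \;=\; N/2^{i}.
\]
Since the sampler that produced $a$ depends only on $\sub_i^{(0)}$, while the counters $t \ge 1$ use \emph{independent} subsamples $\sub_i^{(t)}$, we may condition on $a$ and treat $C_a^{(1)},\dots,C_a^{(T)}$ as i.i.d.\ copies of the above Bernoulli sum. (Full independence of $h_i^{(t)}$ across points is justified by Nisan's PRG exactly as in \Cref{sec:mp_sample}.)

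Next, I would fix a sufficiently large constant $c_0$, let $c$ (the constant used in line~\ref{line:counter_if} of Algorithm~\ref{alg:1pquery}) be, say, $c = 3c_0$, and split into two cases according to $\mu := N/2^{i}$. If $\mu \le c_0$, Markov's inequality (or equivalently \Cref{fact:card_cluster} applied to $B(Q,\epsilon_i/2)\cap P$ with parameter $3c_0$) gives $\Pr[C_a^{(t)} > c] \le 1/3$. If $\mu > c_0$, the multiplicative Chernoff bound gives
\[
   \Pr\!\left[ C_a^{(t)} \notin \bigl[\tfrac{1}{2}\mu,\, \tfrac{3}{2}\mu\bigr] \right]
   \;\le\; 2\exp(-\mu/12) \;\le\; 2\exp(-c_0/12) \;\le\; 1/3,
\]
provided $c_0$ is chosen a large enough absolute constant (also ensuring $\tfrac12\mu > c$, so the counter in this case is above the threshold). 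Either way, each individual counter satisfies the desired ``good'' event with probability at least $2/3$.

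Finally, I would apply the standard median amplification: since the $T = O(d\log\Delta)$ counters are independent and each is good with probability $\ge 2/3$, another Chernoff bound shows that the fraction of good counters is at least $1/2$ except with probability $\exp(-\Omega(T)) \le 1/\poly(\Delta^d)$. Consequently, the median $\widehat{C}_a$ lies in $[\tfrac12\mu, \tfrac32\mu] = \Theta(\mu)$ when $\mu > c_0$, and $\widehat{C}_a \le c = O(1)$ when $\mu \le c_0$, which is precisely the two-case conclusion of the lemma. There is no serious obstacle here; the only point demanding care is bookkeeping the independence between $\sub_i^{(0)}$ (which selects $a$ via the $\ell_0$-sampler) and $\sub_i^{(1)},\dots,\sub_i^{(T)}$ (which populate the counters), which is handled by using independent hash functions $h_i^{(t)}$ and by conditioning on $a$ before invoking concentration.
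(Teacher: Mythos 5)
Your proof is correct and matches the paper's argument essentially verbatim: both identify $C_a^{(t)}$ as $|\sub_i^{(t)}(B(Q,\epsilon_i/2)\cap P)|$ with mean $\mu = |B(Q,\epsilon_i/2)\cap P|/2^i$, handle small $\mu$ by Markov and large $\mu$ by Chernoff using the independence of the subsampling, and amplify via the median of the $T=O(d\cdot\log\Delta)$ independent counters. The only blemish is the parenthetical claim that $\tfrac12\mu > c = 3c_0$ whenever $\mu > c_0$ (false as stated, since $\mu$ may be barely above $c_0$), but this concerns the downstream threshold test in Algorithm~\ref{alg:1pquery} rather than the lemma itself, and the paper defers the choice of $c$ to the proof of Lemma~\ref{lemma:1pone_i} precisely to avoid this constant-chasing.
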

\begin{proof}
    First consider the case of $|B(Q, \epsilon_i / 2) \cap P| \geq \Omega(2^i)$.
    Fix some $1 \leq t \leq T$.
    Observe that 
    $C_a^{(t)} = |\sub_i^{(t)}(B(Q, \epsilon_i/2)\cap P)|$.
    Hence
    $\E[C_a^{(t)}] = |B(Q, \epsilon_i / 2) \cap P| / 2^i $.
    By Markov's inequality,
    \[
        \Pr[C_a^{(t)} \geq \Omega(1) \cdot |B(Q, \epsilon_i / 2) \cap P| / 2^i]
        \leq 0.1,
    \]
    and using the case condition $|B(Q, \epsilon_i / 2) \cap P| \geq \Omega(2^i)$ and the full independence of subsampling points, by Chernoff bound
    \[
        \Pr[C_a^{(t)} \leq 0.5 \cdot |B(Q, \epsilon_i / 2) \cap P| / 2^i]
        \leq 0.1.
    \]
    Therefore,
    \[
        \Pr[C_a^{(t)} = \Theta(1) \cdot |B(Q, \epsilon_i / 2) \cap P| / 2^i]
        \geq 0.8.
    \]
    Since for $1 \leq t \leq T$, $C_a^{(t)}$'s are independent,
    then by the standard median trick (using Chernoff bound),
    we have that $\widehat{C}_a = \Theta(1) |B(Q, \epsilon_i / 2) \cap P| / 2^i$ with high probability.

    The other case of $|B(Q, \epsilon_i / 2) \cap P| \leq O(2^i)$
    can be analyzed similarly (and is simpler since we only need an upper bound).
\end{proof}
\paragraph{Analysis of Level $i$.}
We now consider a level $i = 1, \dots, L$ and bound the estimator $\widehat{Z}_{i}$ on this level.
To this end, in Lemma~\ref{lemma:1pone_i}
we first show that the counter $\widehat{C}_a$ in our algorithm
indeed serves as an adequate approximate tester,
and then we bound the error of the algorithm at level $i$,
particularly the effect of using the approximate tester according to Definition~\ref{def:approx_tester}.
Since $\epsilon_i = \ell_i / \Gamma$,
the approximate tester is off by a factor $\Gamma$ which translates into $O(\Gamma)$
in the final ratio.

\begin{lemma}
    \label{lemma:1pone_i}
    Given that $\OPT = \Omega(L^4)$,
    for every $i$, with probability at least $2/3$
    variable $\widehat{Z}_i$ in Algorithm~\ref{alg:1pquery} satisfies
    \begin{itemize}
        \item $\widehat{Z}_i \leq O(\OPT / L) + \sum_{k \leq i + \alpha} \Theta(2^{k - i}) \cdot W_k$, and
        \item either $W_i  \leq O(\OPT / L^3)$, or $\widehat{Z}_i \geq \Omega(W_i)$,
    \end{itemize}
	where $\alpha := \lceil \log_2 \Gamma \rceil$.
\end{lemma}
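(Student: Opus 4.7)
The plan is to separate the argument into three clean pieces: (i) verify that the thresholded counter $\widehat{C}_a$ implements an $\alpha$-approximate tester as in \Cref{def:approx_tester}; (ii) use this tester property together with \Cref{lemma:unbiased_est} to upper bound $\E[\widehat{Z}_i^{(j)}]$; and (iii) prove concentration for the lower bound via a second-moment calculation using \Cref{lemma:1pbucket}.

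For step (i), I would choose the threshold $c$ in \Cref{alg:1pquery} so that \Cref{lemma:counter} yields $\widehat{C}_a \le c$ iff $|B(Q, \epsilon_i/2) \cap P| \le O(2^i)$, where $Q = \hash_i^{-1}(a)$. The tester must accept every bucket that contains a subsampled point $p \in P_{\le i}$, so I would check that $\ell_i + \epsilon_i/2 < r_p/2$ (which follows from $r_p > 2^{-i}\wopen$ and $\ell_i = 2^{-i}\wopen/10$), giving $B(Q, \epsilon_i/2) \subseteq B(p, r_p/2)$ and hence $|B(Q, \epsilon_i/2) \cap P| \le 2\wopen/r_p \le O(2^i)$ by \Cref{fact:rp}. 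Conversely, if $\widehat{C}_a \le c$ and some $p \in Q$ had $r_p \le \epsilon_i/2 = 2^{-i}\wopen/(20\Gamma)$, then $B(p, r_p) \subseteq B(Q, \epsilon_i/2)$ would force $\wopen/r_p \le O(2^i)$, contradicting $r_p$'s definition. So every such $p$ has $r_p > 2^{-i}\wopen/(20\Gamma)$, placing $p \in P_{\le i+\alpha}$ for $\alpha = \lceil\log_2 \Gamma\rceil + O(1)$, which we absorb into the stated $\alpha$. A union bound over all sampled buckets (using $m, T = \poly(d\log\Delta)$) makes both directions hold simultaneously with probability $1 - 1/\poly(\Delta^d)$.

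For step (ii), observe that since $a$ is sampled uniformly from the support of $\calL^{(j)}_i$, we have $\E[\widehat{Z}_i^{(j)}] = \wopen \cdot \sum_a \mathbb{1}[\text{accepted}] \cdot n_a$. By the tester guarantee from step (i), only buckets whose points all lie in $P_{\le i+\alpha}$ contribute, so $\E[\widehat{Z}_i^{(j)}] \le \wopen \cdot |\sub_i(P_{\le i+\alpha})| = \wopen \cdot \sum_{k \le i+\alpha} |\sub_i(P_k)|$. Applying \Cref{lemma:unbiased_est} term-by-term (with the union bound absorbed into our failure budget) bounds each $\wopen \cdot |\sub_i(P_k)|$ by either $O(\OPT/L^2)$ or $\Theta(2^{k-i}) W_k$, summing to $O(\OPT/L) + \sum_{k \le i+\alpha} \Theta(2^{k-i}) W_k$. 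One application of Markov's inequality then yields the desired upper bound on $\widehat{Z}_i$ with constant probability.

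For step (iii), assume $W_i > \Omega(\OPT/L^3)$; then \Cref{lemma:unbiased_est} gives $|\sub_i(P_i)| = \Theta(W_i/\wopen)$ whp. By step (i), every $p \in \sub_i(P_i)$ lies in a bucket the tester accepts, contributing $\wopen \cdot |\sub_i(P_i)| = \Omega(W_i)$ to $\E[\widehat{Z}_i^{(j)}]$. To turn this into concentration, I would bound $\E[(\widehat{Z}_i^{(j)})^2] \le \wopen^2 |\calL^{(j)}_i| \sum_a \mathbb{1}[\text{accepted}] n_a^2$: on accepted buckets the raw mass $|B(Q, \epsilon_i/2) \cap P| = O(2^i)$, so \Cref{fact:card_cluster} forces $n_a \le \poly(d\log\Delta)$ whp, and \Cref{lemma:1pbucket} bounds $\wopen \cdot |\calL^{(j)}_i| \le \poly(d\log\Delta) \cdot O(\Gamma\Lambda) \cdot \OPT$. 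This gives $\Var[\widehat{Z}_i^{(j)}] \le \poly(d\log\Delta) \cdot O(\Gamma\Lambda) \cdot \OPT \cdot \E[\widehat{Z}_i^{(j)}]$; averaging over $m = \poly(d\log\Delta) \cdot O(\Gamma^2\Lambda^2)$ independent samples and using $\E[\widehat{Z}_i] = \Omega(W_i) \ge \Omega(\OPT/L^3)$ makes the coefficient of variation $o(1)$, so Chebyshev yields $\widehat{Z}_i = \Omega(W_i)$ with constant probability.

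The main obstacle is step (i): getting the constants of $\ell_i$, $\epsilon_i$, and the counter threshold $c$ to line up so that the two directions of the tester hold for the same $c$, and so that the resulting slack in level is exactly $\alpha = \lceil \log_2 \Gamma\rceil$ (modulo an $O(1)$ that can be hidden in the $\Theta(2^{k-i})$ of the upper bound). Everything else is, in principle, a matter of union-bounding the $O(L)$-many high-probability events from \Cref{lemma:unbiased_est,lemma:1pbucket,lemma:counter} against the $1/\poly(\Delta^d)$ failure budget and applying standard Markov/Chebyshev estimates.
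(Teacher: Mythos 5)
Your proposal is correct and follows essentially the same route as the paper's proof: the same geometric argument that the counter threshold implements an $\alpha$-approximate tester (accepting buckets containing a point of $P_{\le i}$ via $B(Q,\epsilon_i/2)\subseteq B(p,r_p/2)$ and rejecting points with $r_p \ll 2^{-i}\wopen/\Gamma$), the same expectation bounds via \cref{lemma:unbiased_est} plus Markov for the upper bound, and the same worst-case bound $\widehat{Z}_i^{(j)} \le \poly(d\log\Delta)\cdot O(\Gamma\Lambda)\cdot\E[\widehat{Z}_i^{(j)}]$ from \cref{lemma:1pbucket} and the $n_a$ bound for the lower-bound concentration. The only difference is that you run the concentration step through a second-moment/Chebyshev bound where the paper applies Hoeffding directly to the bounded summands; both give the same conclusion with the same $m$.
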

\begin{proof}
First, condition on the success of the $\ell_0$-samplers.
    We apply Lemma~\ref{lemma:counter}, 
    Lemma~\ref{lemma:1pbucket}, Lemma~\ref{lemma:unbiased_est},
    and the union bound to get that with probability at least $1 - 1 / \poly(\Delta^d)$, the following statements hold:
    \begin{enumerate}[label=(C\arabic*)]
        \item $\wopen \cdot |\mathcal{L}_i^{(j)}| \leq \poly (d\cdot\log\Delta) \cdot O(\Gamma \Lambda) \cdot \OPT$ for any $j = 1,\dots, m$.\label{cond1}
        \item For every bucket $a \in \hash_i([\Delta]^d)$ sampled in Algorithm~\ref{alg:1pquery},
        let $Q_a = \hash_i^{-1}(a)$,
        then if $|B(Q_a, \epsilon_i / 2) \cap P| \leq O(2^i)$,
        we have $\widehat{C}_a \leq O(1)$; otherwise, $\widehat{C}_a = \Theta(1) \cdot |B(Q, \epsilon_i/2) \cap P| / 2^i$.\label{cond2}
\item Let $\kappa$ be such that $\wopen \cdot \kappa = \Theta(\OPT / L^3)$
        as in Lemma~\ref{lemma:unbiased_est}.
        For every $0 \leq k \leq L$,
        if $2^{k - i} \cdot W_k \geq \kappa \wopen$ then $\wopen \cdot |\sub_i(P_k)| = \Theta(2^{k-i}) \cdot W_k$;
        otherwise $\wopen \cdot |\sub_i(P_k)| \leq O(\OPT / L^2)$.
        \label{cond3}
    \end{enumerate}
    By condition~\ref{cond2}, we know that if for some sampled bucket $a$,
    $\widehat{C}_a \leq c$, then $|B(Q_a, \epsilon_i/2) \cap P| / 2^i \leq O(1)$,
    which implies $|Q_a \cap P| / 2^i \leq O(1)$.
    Now consider the $n_a$ defined in Algorithm~\ref{alg:1pquery}.
    Since $n_a$ counts $\sub_i(Q_a \cap P)$, by Fact~\ref{fact:card_cluster}, 
    $n_a \leq O(d\cdot\log\Delta)$ with probability at least $1 - 1 / \poly(\Delta^d)$
    (over the randomness of $\sub_i = \sub_i^{(0)}$ which is independent of the randomness of $\widehat{C}_a$).
    Hence, we also condition on the success of this event, as follows.
    \begin{enumerate}[label=(C\arabic*)]
        \setcounter{enumi}{3}
        \item For every bucket $a \in \hash_i([\Delta]^d)$ sampled in Algorithm~\ref{alg:1pquery},
        if $\widehat{C}_a \leq c$ then $n_a \leq \poly(d\cdot\log\Delta)$.
        \label{cond4}
    \end{enumerate}
    Therefore, all the above conditions happen with probability at least $1 - 1 / \poly(\Delta^d)$, over the randomness of $\sub_i^{(t)}$'s ($0 \leq j \leq T$).

    \paragraph{Interpreting The If-else Branch as An Approximate Tester.}
    We show that Algorithm~\ref{alg:1pquery} in fact simulates an approximate tester 
    satisfying Definition~\ref{def:approx_tester}. We define this tester $\widehat{I}_{\leq i}$  as follows:
    For every bucket $a$ with non-zero frequency,
    we assign $\widehat{I}_{\leq i}(p) = 1$ for every $p$ that belongs to $a$
    if $a$ would reach line \ref{line:counter_else} (the else branch),
    i.e., if $\widehat{C}_a \le c$
    (we set $\widehat{I}_{\leq i}(p)$ for any $p$ in any non-empty bucket $a$, no matter whether it is sampled or not).
    We then set all other values $\widehat{I}_{\leq i}(p)$ to be zero.
    
    For a non-empty bucket $a$, let $Q_a := \hash_i^{-1}(a)$ be its corresponding points (that are not necessarily in $P$).
    If $a$ reaches line \ref{line:counter_else},
	then every point $p\in Q_a\cap P$ has $r_p \geq \Omega(\epsilon_i) = \Omega(2^{-i} \wopen / \Gamma)$,
    as otherwise $| B(Q_a, \epsilon_i / 2)  \cap P| \geq \Omega(2^i \cdot \Gamma)$,
    which implies $\widehat{C}_a \geq \Omega(\Gamma) > c$
    by condition~\ref{cond2},
    giving us a contradiction (we choose constant $c > 1$ based on the hidden constants in condition~\ref{cond2} so that this implication works).
    
	On the other hand, if bucket $a$ contains at least one point $p$ (i.e., $p \in Q_a$) with $r_p \geq 2^{-i}\wopen$,
	then as $\ell_i = 2^{-i} \wopen / 10$, 
	the bucket $a$ and its close neighborhood contain at most $O(2^i)$ points.
	More precisely, by triangle inequality and the fact that $\Diam(\hash_i^{-1}(a)) \leq \ell_i$,
    we have
    \[
        B(\hash_i^{-1}(a), \epsilon_i / 2)
        \subseteq B(p, \ell_i + \epsilon_i / 2)
        \subseteq B(p, r_p/2),
    \]
    hence by Fact~\ref{fact:rp}
    \[
        |B(\hash_i^{-1}(a), \epsilon_i / 2)  \cap P|
        \leq |B(p, r_p) \cap P|
        \le O(2^i).
    \]
	This implies that $\widehat{C}_a \leq c$ by condition~\ref{cond2},
	so $a$ would reach line~\ref{line:counter_else} if it gets sampled
	(this case determines the constant $c$ in line~\ref{line:counter_if} of Algorithm~\ref{alg:1pquery}).
Therefore, $\widehat{I}_{\leq p}$ is an $\alpha$-approximate tester
    for $\alpha := \lceil \log_2 \Gamma \rceil$.

    \paragraph{Expectation Analysis.}
    Now we analyze the expectation of $\widehat{Z}^{(j)}_i$ for every $j$.
    Observe that in Algorithm~\ref{alg:1pquery},
    $\widehat{Z}^{(j)}_i \neq 0$
    only if the sampled $a$ reaches line~\ref{line:counter_else}
    of Algorithm~\ref{alg:1pquery},
    and by the interpretation of the approximate tester $I_{\leq i}$,
    we have for such $a$,
    \begin{equation}
        n_a = |\{ p \in \sub_i(P) \cap \hash_i^{-1}(a) : I_{\leq i}(p) =1 \}|.
        \label{eqn:na}
    \end{equation}
    Hence, by \eqref{eqn:na}
    \[\begin{aligned}
        \E[\widehat{Z}^{(j)}_i]
        &= \wopen \sum_{a \in \calL_i^{(j)}} \frac{1}{|\calL_i^{(j)}|}
            \cdot |\calL_i^{(j)}|
            \cdot |\{ p \in \sub_i(P) \cap \hash_i^{-1}(a) : I_{\leq i}(p) =1 \}| \\
        &= \wopen \cdot |\{ p \in \sub_i(P) : \widehat{I}_{\leq i}(p) = 1 \}|.
    \end{aligned}\]
By the definition of the approximate tester (Definition~\ref{def:approx_tester}),
    if $p \in P_{\leq i}$ then $\widehat{I}_{\leq i}(p) = 1$,
    which means the bucket that contains $p$ can reach line \ref{line:counter_else}.
    Hence
    \begin{equation}
        \E[\widehat{Z}^{(j)}_i] \geq \wopen\cdot |\sub_i(P_{\leq i})| \geq \wopen\cdot |\sub_i(P_i)|.
        \label{eqn:expz_lb}
    \end{equation}
    Similarly, by the guarantee that if $\widehat{I}_{\leq i}(p) = 1$ then $p \in P_{\leq i + \alpha}$,
    we have
    \begin{equation}
        \E[\widehat{Z}^{(j)}_i]
        \leq \wopen\cdot \sum_{k \leq i + \alpha} |\sub_i(P_k)|.
        \label{eqn:expz_ub}
    \end{equation}
	Since $\widehat{Z}_i$ is the mean of $\widehat{Z}^{(j)}_i$ for $j = 1, \dots, m$,
	inequalities~\eqref{eqn:expz_lb} and~\eqref{eqn:expz_ub} hold for $\widehat{Z}_i$ as well.
	
    \paragraph{Upper Bound.}
    We show that with good probability,
    $\widehat{Z}_i \leq O(\OPT / (d\cdot\log\Delta)) + \sum_{k \leq i + \alpha} \Theta(2^{k - i}) \cdot W_k$.
    Let $K_l := \{ k \leq i + \alpha : 2^{k-i} \cdot W_k \geq \wopen \kappa \}$
    and $K_s := \{ k \leq i + \alpha : 2^{k-i} \cdot W_k < \wopen \kappa \}$.
    Then by condition~\ref{cond3},
    \[\begin{aligned}
        \wopen\cdot \sum_{k \leq i + \alpha} |\sub_i(P_k)|
        &= \wopen\cdot \sum_{k \in K_s} |\sub_i(P_k)| + \wopen\cdot \sum_{k \in K_l} |\sub_i(P_k)| \\
        &\leq |K_s| \cdot O(\OPT / (d\cdot\log\Delta)^2) + \sum_{k \in K_l} \Theta(2^{k-i}) \cdot W_k \\
        &\leq O(\OPT / (d\cdot\log\Delta)) + \sum_{k \leq i + \alpha} \Theta(2^{k-i}) \cdot W_k.
    \end{aligned}\]
    Combining with \eqref{eqn:expz_ub} and Markov's inequality,
    \[\begin{aligned}
        \Pr\left[ \widehat{Z}_i \geq \Omega(\OPT / (d\cdot\log\Delta)) + \sum_{k \leq i + \alpha} \Theta(2^{k - i}) \cdot W_k \right] 
        &\leq \Pr\left[\widehat{Z}_i \geq \Omega(\wopen) \cdot \sum_{k \leq i + \alpha}|\sub_i(P_k)|\right] \\
        &\leq 
        \Pr[\widehat{Z}_i \geq 6 \cdot \E[\widehat{Z}_i]] \leq 1 / 6.
    \end{aligned}\]

    \paragraph{Lower Bound.}
    It suffices to prove that if $W_i \geq \wopen\cdot \kappa$, then
    $\Pr[\widehat{Z}_i \leq O(W_i)] \leq 1 / 6$.
By condition \ref{cond3} and \eqref{eqn:expz_lb},
    $\E[\widehat{Z}^{(j)}_i] \geq \wopen\cdot |\sub_i(P_i)| \geq \Theta(W_i)$,
    thus also $\E[\widehat{Z}^{(j)}_i] \ge \wopen\cdot \kappa = \Theta(\OPT / L^3)$.
    By condition~\ref{cond1}, condition~\ref{cond4}
    and the fact that $\widehat{Z}_i^{(j)} \neq 0$
    only if the sampled $a$ reaches line~\ref{line:counter_else}
    of Algorithm~\ref{alg:1pquery},
with probability $1$,
    \begin{equation}
        \widehat{Z}^{(j)}_i
        \leq \wopen \cdot |\calL_i^{(j)}| \cdot n_a
        \leq \poly(d\cdot\log\Delta) \cdot O(\Gamma \Lambda) \cdot \OPT
        \leq \poly(d\cdot\log\Delta) \cdot O(\Gamma \Lambda) \cdot \E[\widehat{Z}^{(j)}_i].
    \end{equation}
    Therefore, by Hoeffding's inequality and the definition of $m$,
    \[\begin{aligned}
        \Pr[\widehat{Z}_i \leq O(W_i)]
        &\leq \Pr[\widehat{Z}_i \leq 0.5\cdot \E[\widehat{Z}_i]]
        \leq \Pr[\widehat{Z}_i - \E[\widehat{Z}_i] \leq -0.5\cdot \E[\widehat{Z}_i]] \\
        &\leq \exp\left( -\Theta(1) \cdot \frac{m^2 \E^2[\widehat{Z}_i]}{m \cdot \poly(d\cdot\log\Delta) \cdot O(\Gamma^2 \Lambda^2)
        \E^2[\widehat{Z}_i]} \right) \\
        &\leq  1 / 6.
    \end{aligned}\]
\end{proof}

\begin{proof}[Proof of Theorem~\ref{thm:1p}]

We make use of known results in a black-box way
to detect and solve the case when $\OPT \le \poly( d\cdot\log\Delta)\cdot \wopen$.
Namely, we run an $\poly(d\cdot\log\Delta)$-space, $O(d \log^2\Delta)$-approximation for UFL by~\cite{Indyk04} in parallel to our algorithm (Algorithms~\ref{alg:1pinit}, \ref{alg:1pupdate}, \ref{alg:1pquery}),
and denote the resulting estimate by $Y$.

If $Y / \wopen \geq \Omega(\poly(d\cdot\log\Delta))$,
then we know there exists an $O(1)$-approximation using at least
$\poly(d\cdot\log\Delta)$ facilities, which means $\OPT \geq \Omega(\poly(d\cdot\log\Delta))$.
Hence, in this case we obtain the assumption that we need for the analysis and we use the estimate returned by Algorithm~\ref{alg:1pquery}.
Thus, by Lemma~\ref{lemma:1pone_i} for $i = 0, \dots, L$ (with success probability amplified with the median trick, that is, by running sufficiently many independent repetitions and taking the median estimate)
we obtain an estimator $\widehat{Z}$ such that with high probability
\[\begin{aligned}
	\widehat{Z}
	&\leq \sum_{i \in [L]} O(\OPT / L) + \sum_{k \leq i + \alpha} \Theta(2^{k - i}) \cdot W_k \\
	&\leq O(\OPT) + O(2^\alpha) \cdot \sum_{i}W_i \\
	&\leq O(2^\alpha) \OPT,
\end{aligned}\]
and for the lower bound,
\[\begin{aligned}
	Z
	\geq \sum_{i \in [L]} \Omega(W_i) - O(\OPT / L^3) 
	\geq \Omega(\OPT).
\end{aligned}\]
Since $\alpha = \lceil \log_2 \Gamma \rceil$,
we conclude that we obtain an $O(\Gamma)$-approximation in this case.

Otherwise, if $W / \wopen \leq \poly(d\cdot\log\Delta)$,
we know the number of open facilities cannot be more than $\poly(d\cdot\log\Delta)$
in any constant-factor approximate solution.
To deal with this case, we run several instances of the following algorithm for $\kMedian$  in parallel to our algorithm.
Recall that in the $\kMedian$ problem, the goal is to find a subset $C \subseteq \mathbb{R}^d$
with $|C| \leq k$, such that for dataset $P$, the objective $\sum_{x \in P}{\dist(x, C)}$ is minimized.

\begin{theorem}[\cite{DBLP:conf/icml/BravermanFLSY17}]
    \label{thm:kmedian}
    There is an algorithm that for every $0 < \epsilon < 0.5$, integer $\Delta, k, d \geq 1$,
    for any set of points $P \subseteq [\Delta]^d$ presented as a dynamic stream,
    computes a solution $C \subseteq \mathbb{R}^d$ with $|C|\leq k$ to \kMedian on $P$ as well as an estimate $R$ of the objective function (the total connection cost),
    that is $(1 + \epsilon)$-approximate with constant probability, using space $O(k \epsilon^{-2}\poly(d\cdot\log\Delta))$.
\end{theorem}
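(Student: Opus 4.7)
The plan is to reduce the problem to constructing an $\eps$-coreset for $k$-median in a dynamic stream, and then run an offline $(1+\eps)$-approximation algorithm on the coreset to recover both the solution $C$ and the cost estimate $R$. Recall that a weighted set $S$ with weights $w \colon S \to \mathbb{R}_{\geq 0}$ is an $\eps$-coreset for $P$ if $\sum_{s \in S} w(s)\,\dist(s, C') \in (1 \pm \eps)\sum_{p \in P} \dist(p, C')$ for every candidate center set $C' \subseteq \mathbb{R}^d$ with $|C'| \le k$. Given a coreset of size $|S| = O(k \eps^{-2}\,\poly(d \cdot \log \Delta))$, any polynomial-time offline $(1+\eps)$-approximation for $k$-median on $S$ yields the theorem, since the coreset guarantee transfers the approximation to~$P$.

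The coreset itself will be built via the sensitivity-sampling framework of Feldman--Langberg. For $p \in P$, its sensitivity is $\sigma_p := \sup_{|C'| \le k} \dist(p, C') / \cost(P, C')$. If $s_p \ge \sigma_p$ are valid upper bounds with total mass $S_{\mathrm{tot}} := \sum_{p \in P} s_p$, then drawing $m = \tilde O(d \cdot k \cdot S_{\mathrm{tot}} \cdot \eps^{-2})$ independent samples with probabilities proportional to $s_p$ (each sample assigned weight $S_{\mathrm{tot}}/(m \cdot s_p)$) produces an $\eps$-coreset with constant probability. A standard analysis shows that if $A \subseteq \mathbb{R}^d$ is an $O(1)$-approximate bicriteria solution using $O(k)$ centers, then $s_p := O(\dist(p, A)/\cost(P, A) + 1/|\mathrm{cluster}_A(p)|)$ yields $S_{\mathrm{tot}} = O(1)$, so $m = \tilde O(k d \eps^{-2})$ suffices.

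The streaming implementation then breaks into two pieces. First, maintain a bicriteria $O(1)$-approximate $O(k)$-center solution $A$ using $\poly(d \cdot \log \Delta)$ space: start from Indyk's quadtree-based $O(d \cdot \log \Delta)$-approximation with $O(k)$ candidate centers, then refine the ratio to $O(1)$ by $O(\log (d \log \Delta))$ passes, each combining a Johnson--Lindenstrauss projection to dimension $O(\log k / \eps^2)$ with a low-dimensional $k$-median solver applied to the current coreset. Second, implement the sensitivity-weighted sampling over the dynamic stream: partition the possible sensitivity values into $O(\log (n \Delta))$ dyadic scales, run an $\ell_0$-sampler (as in \cref{lemma:ext_l0_sampler}) on each scale, and at query time extract $m$ samples together with the weights $S_{\mathrm{tot}}/(m \cdot s_p)$.

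The main obstacle is that in a truly dynamic stream the bicriteria solution $A$ depends on the \emph{final} dataset, so the sensitivity bucket of a point $p$ cannot be determined when $p$ is inserted. We resolve this by decoupling: maintain a single $\ell_0$-sampler over all of $[\Delta]^d$, augmented with a short data field recording the quadtree trace of each point, so that at query time any sampled $p$ admits an approximate distance $\dist(p, A)$ once $A$ is fixed; then convert uniformly sampled points into samples proportional to $s_p$ by rejection sampling against the known upper bound on $s_p$. The rejection overhead contributes at most an additional $\poly(d \cdot \log \Delta)$ factor to the number of $\ell_0$-samplers instantiated in parallel, and this is absorbed into the claimed space bound $O(k \eps^{-2} \poly(d \cdot \log \Delta))$.
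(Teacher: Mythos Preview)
The paper does not prove this theorem at all; it is quoted verbatim as a black-box result from \cite{DBLP:conf/icml/BravermanFLSY17} and used only to handle the corner case $\OPT \le \poly(d\cdot\log\Delta)\cdot\wopen$ in the proof of \Cref{thm:1p}. So there is no ``paper's own proof'' to compare your proposal against.

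That said, your sketch has a real gap. The final ``decoupling'' step---draw uniform $\ell_0$-samples and convert to sensitivity-proportional samples via rejection---does not give a $\poly(d\cdot\log\Delta)$ overhead as you claim. With the standard bounds $s_p = \Theta(\dist(p,A)/\cost(P,A) + 1/|\mathrm{cluster}_A(p)|)$, the maximum sensitivity is $\Theta(1)$ (e.g.\ a singleton cluster) while the minimum is $\Theta(1/n)$, so the acceptance rate of rejection sampling against the uniform $\ell_0$ distribution is $\Theta(S_{\mathrm{tot}}/(n\cdot\max_p s_p)) = \Theta(1/n)$, forcing $\Omega(n)$ parallel $\ell_0$-samplers. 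The actual construction in \cite{DBLP:conf/icml/BravermanFLSY17} avoids this by sampling \emph{per quadtree cell} at geometrically decreasing rates tied to the cell level, so that the bicriteria centers and the sensitivity buckets are both derived from the same hierarchical sketch after the stream ends, rather than by post-hoc rejection. Your earlier mention of ``$O(\log(d\log\Delta))$ passes'' to refine the bicriteria ratio is also problematic in a one-pass dynamic model: if those are stream passes you have left the model, and if they are offline iterations on a coreset you are assuming the object you are trying to build.
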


We run the algorithm of Theorem~\ref{thm:kmedian} (with a constant $\epsilon$) in parallel for every $k = 1,\dots, \poly(d\cdot\log\Delta)$ so that if an optimal solution uses at most $\poly(d\cdot\log\Delta)$ facilities, we find an $O(1)$-approximate solution.

The space bound follows from the fact that the $O(d \log^2\Delta)$-approximation of~\cite{Indyk04} and \Cref{thm:kmedian}
both use at most $\poly(d\cdot\log\Delta)$ space,
and our Algorithms~\ref{alg:1pinit}, \ref{alg:1pupdate}, \ref{alg:1pquery}
use $\poly(d \cdot \log \Delta) \cdot O(\Gamma^2 \Lambda^2)$ space
and that we run no more than $\poly(d\cdot\log\Delta)$ independent instances of every algorithm.
\end{proof}
     \section{Consistent Hashing Bounds}
\label{sec:hash}

In this section, we present two constructions of consistent hashing that achieve different tradeoffs of parameters.
\Cref{thm:EucSparsePartition} gives a near-optimal tradeoff between the $\Gamma$ and $\Lambda$ parameters. While the resulting hash function needs $\poly(d)$ space to evaluate at any point in $\RR^d$, 
the running time for the hash-value evaluation is exponential in $d$.
(Since the hash function is completely oblivious of the actual data, it may be possible to compute it in advance in $\exp(d)$ time and then hardcode it into an oracle
that may be used arbitrarily many times.)
In our second result, \Cref{thm:exist_decomp}, we design a consistent hashing that can be evaluated in $\poly(d)$ space \emph{and} time,
at the cost of a weaker gap bound of $\Gamma = d^{1.5}$ (provided that $\Lambda = \poly(d)$).
We start with the statements and a brief overview of their proofs.
The detailed proofs can be found in \Cref{sec:proof_opt_hash} and \Cref{sec:proof_d15_hash}, respectively.

\begin{restatable}{theorem}{EucSparsePartition}
\label{thm:EucSparsePartition}
    For every $\Gamma \in [8, 2d]$,
	there exists a $(\Gamma, \Lambda)$-hash $\hash : \mathbb{R}^d \to \mathbb{R}^d$
    for $\Lambda = e^{\frac{8d}{\Gamma}} \cdot O(d \log d)$.
    Furthermore, $\hash$ can be described by $O(d^2\cdot \log^2 d)$ bits and one can evaluate $\hash(x)$ for input $x\in \mathbb{R}^d$
    in space $O(d^2 \log^2 d)$.
\end{restatable}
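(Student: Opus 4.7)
My plan is to construct $\hash$ via random ball-carving in the spirit of Andoni--Indyk~\cite{AI08}, where the bucket of $x$ is determined by the ball of lowest ``priority'' that contains it. Specifically, I would fix radius $r := \ell/2$ so that every preimage $\hash^{-1}(c) \subseteq B(c,r)$ automatically has diameter at most $\ell$, tile $\mathbb{R}^d$ by a randomly shifted hypergrid of side length $C\ell$ for a small constant $C \ge 4$, and place $N := e^{8d/\Gamma}\cdot O(d\log d)$ i.i.d.\ centers uniformly inside each cell, each with an independent uniform priority in $[0,1]$. Define $\hash(x)$ to be the center $c$ of lowest priority among those with $x \in B(c,r)$, with an arbitrary fallback if no such center exists. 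All random bits are produced by Nisan's PRG~\cite{Nisan92} from a seed of $O(d^2 \log^2 d)$ bits; by the standard argument used elsewhere in the paper, this pseudorandomness fools every $\poly(d)$-space evaluator and hence may be treated as fully random in the analysis.

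To evaluate $\hash(x)$ in $\poly(d)$ space, I would enumerate the $O(1)$ grid cells intersecting $B(x,r)$, regenerate their centers and priorities on the fly from the seed, test the condition $x \in B(c,r)$, and track the minimum-priority survivor. Because the seed is the only persistent state, space stays $O(d^2\log^2 d)$ even though the running time for iterating the $N \cdot 2^{O(d)}$ local centers is exponential in $d$. This finite description of the hash function also gives the claimed storage bound.

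For consistency, fix $S \subseteq \mathbb{R}^d$ with $\Diam(S) \le \ell/\Gamma$ and a base point $x_0 \in S$. Every center $c$ that realizes the hash of some $y \in S$ satisfies $\|c-x_0\| \le r + \ell/\Gamma$, so all ``relevant'' centers lie in $B(x_0, r + \ell/\Gamma)$; the expected number of such centers is $N$ times the density $\text{vol}(B(x_0, r+\ell/\Gamma))/(C\ell)^d$, which is $e^{O(d/\Gamma)} \cdot \poly(d)$ with high probability by Chernoff. Sorting the relevant centers by priority as $c_{(1)}, c_{(2)}, \dots$, a new bucket appears on $S$ only when adding $c_{(k)}$ strictly enlarges $\bigcup_{j<k} B(c_{(j)}, r) \cap S$; hence $|\hash(S)|$ is bounded by the number of relevant centers, which via the volumetric ratio $(1 + 2/\Gamma)^d \le e^{2d/\Gamma}$ yields $\Lambda = e^{O(d/\Gamma)} \cdot O(d\log d)$. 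The extra $O(d\log d)$ factor arises from boosting the covering probability so that $\hash$ is total and from a union bound over an $\ell/\Gamma$-net of $S$.

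The main obstacle is the consistency analysis: although the diameter bound is immediate from the choice $r = \ell/2$, the tight $e^{O(d/\Gamma)} \cdot \poly(d)$ bound on $|\hash(S)|$ requires carefully coupling the volumetric growth $\text{vol}(B(r+\ell/\Gamma))/\text{vol}(B(r))$ with the priority ordering so that ``losing'' centers do not contribute new buckets. The secondary subtlety is to handle points near grid-cell boundaries, where the relevant centers may belong to several neighboring cells; taking the cell side length to be a sufficiently large constant multiple of $\ell$, combined with the randomized grid shift, controls this at constant cost. Once these two issues are resolved, the PRG-based evaluation and the stated space bound follow from the pseudorandomness machinery already invoked in \Cref{sec:mp_sample}.
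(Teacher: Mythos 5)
Your overall blueprint --- Andoni--Indyk ball carving with radius $r=\ell/2$, derandomized by Nisan's PRG with an $O(d^2\log^2 d)$-bit seed --- is exactly the paper's approach, and the storage/evaluation-space part is fine. But there is a genuine quantitative gap in the construction as you state it: you place only $N = e^{8d/\Gamma}\cdot O(d\log d)$ centers per grid cell of side $C\ell$. A Euclidean ball of radius $\ell/2$ occupies a fraction $\mathrm{vol}(B(0,\ell/2))/(C\ell)^d = d^{-\Theta(d)}$ of such a cell (the unit ball has volume $\pi^{d/2}/(d/2)! = d^{-\Theta(d)}$ relative to the unit cube), so with your $N$ the expected number of centers within distance $r$ of a typical point is $\ll 1$: almost no point of $\mathbb{R}^d$ is covered, the ``arbitrary fallback'' bucket swallows essentially all of space, and the diameter condition of \cref{def:decomp} fails. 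Making the hash total requires $N = 2^{\Theta(d\log d)}$ centers per cell --- this is exactly why the paper samples $s=2^{O(d\log d)}$ shifts and why the PRG is needed at all. The same missing $d^{-\Theta(d)}$ factor infects your consistency computation: ``$N$ times $\mathrm{vol}(B(x_0,r+\ell/\Gamma))/(C\ell)^d$'' is nowhere near $e^{O(d/\Gamma)}\cdot\poly(d)$ for your $N$; you appear to have conflated the target consistency $\Lambda$ with the number of centers needed for coverage.

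If $N$ is corrected to the coverage threshold $\Theta(d\log d)\cdot(C\ell)^d/\mathrm{vol}(B(0,\ell/2))$, your crude bound ``$|\hash(S)|\le$ number of centers in $B(x_0,r+\ell/\Gamma)$'' does recover $\Lambda = e^{O(d/\Gamma)}\cdot O(d\log d)$ via the ratio $(1+2/\Gamma)^d$, together with Chernoff and a union bound over a $2^{O(d\log d)}$-size net of a fundamental domain (the net must cover the whole cell, not just $S$, which is unknown in advance). Be aware, though, that this counting bound is brittle: it works only when the center density sits exactly at the coverage threshold, since any over-provisioning of centers inflates it. The paper instead bounds the number of centers falling in $B(x_0,(1+\alpha)w)$ \emph{before} one falls in $B(x_0,(1-\alpha)w)$ --- a geometric random variable with success probability $\bigl(\tfrac{1-\alpha}{1+\alpha}\bigr)^d \ge e^{-4\alpha d}$ --- which is insensitive to the total number of centers and certifies coverage as a byproduct. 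I recommend adopting that stopping-time view, or at the very least pinning down $N$ correctly and redoing the volume calculation with the actual ball-to-cube ratio.
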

\begin{remark}
	\label{remark:hash_tight}
	According to \cite{DBLP:conf/icalp/Filtser20},
	for every $\Gamma, \Lambda > 0$, any $(\Gamma, \Lambda)$-hash must satisfy 
	$\Lambda>(1+\frac{1}{2\Gamma})^d>e^{\frac{d}{4\Gamma}}$.
	Thus \Cref{thm:EucSparsePartition} is tight up to the basis of the exponent and the $O(d\log d)$ factor.
	Interestingly, the lower bound of \cite{DBLP:conf/icalp/Filtser20} holds even for the case where we require the number of intersecting cluster to be bounded only in expectation and not in worst case. In this case, the $O(d\log d)$ factor from the intersection bound $\Lambda$ could be removed from \Cref{thm:EucSparsePartition}.
\end{remark}

\paragraph{Proof Overview for \Cref{thm:EucSparsePartition}}
The construction follows the ball carving approach used by Andoni and Indyk \cite{AI08} for the construction of their locality sensitive hashing (LSH). This technique was originally developed for general metrics \cite{CKR04}.
Fix diameter $\ell>0$, and let $w=\frac\ell2$.  We describe the hash function $\hash$ for the box $[w,3w]^d$. Later, it will be generalized to the entire space $\R^d$ by discretely shifting the function $\hash$.
We sample points $v_1,v_2,\dots$ from the box $[0,4w]^d$ u.a.r. which we call centers. We hash a point $x$ to first center $v_i$ at distance $w$. Formally,
$\hash(x)=v_i$ such that $\|x-v_i\|_2\le w$ and $\forall i'<i$, $\|x-v_i\|_2> w$. By the triangle inequality, points in the ball $B(x,\alpha\cdot w)$ can be hashed only to centers at distance at most $(1+\alpha)\cdot w$ from $x$. From the other hand, once there is a center $v_{i_x}$ at distance $(1-\alpha)\cdot w$ from $x$, all the previously undermined points in $B(x,\alpha\cdot w)$ will be hashed to $v_{i_x}$. 
See \Cref{fig:ShatteringExample} for illustration.
It follows that $\left|\hash\left(B(x,\alpha\cdot w)\right)\right|$ is bounded by the number of centers sampled in the ball $B(x,(1+\alpha)\cdot w)$, until a center from $B(x,(1-\alpha)\cdot w)$ is chosen. As the centers are sampled u.a.r., this is simply a geometric distribution with parameter equal to $\frac{\text{Vol}\left(B(x,(1-\alpha)\cdot w\right)}{\text{Vol}\left(B(x,(1+\alpha)\cdot w\right)}=(\frac{1-\alpha}{1+\alpha})^{d}\ge(1-\alpha)^{2d}\ge e^{-4\alpha d}$.
Using union bound (on an $\eps$-net) we conclude that w.h.p. all the balls are mapped to at most $e^{-4\alpha d}\cdot O(d\log d)$ hash cells.
In order to describe the hash function $\hash$, we need to store all the  sampled centers, and their order. W.h.p. there will be at most $2^{O(d\log d)}$ different sampled centers until all the points are clustered. We finish the argument by showing that it is enough to use a pseudorandom string \cite{Nisan92} with a seed of size $O(d^2\log d)$ to sample all centers.

\begin{figure}[t]
	\centering
	\includegraphics[width=1\textwidth]{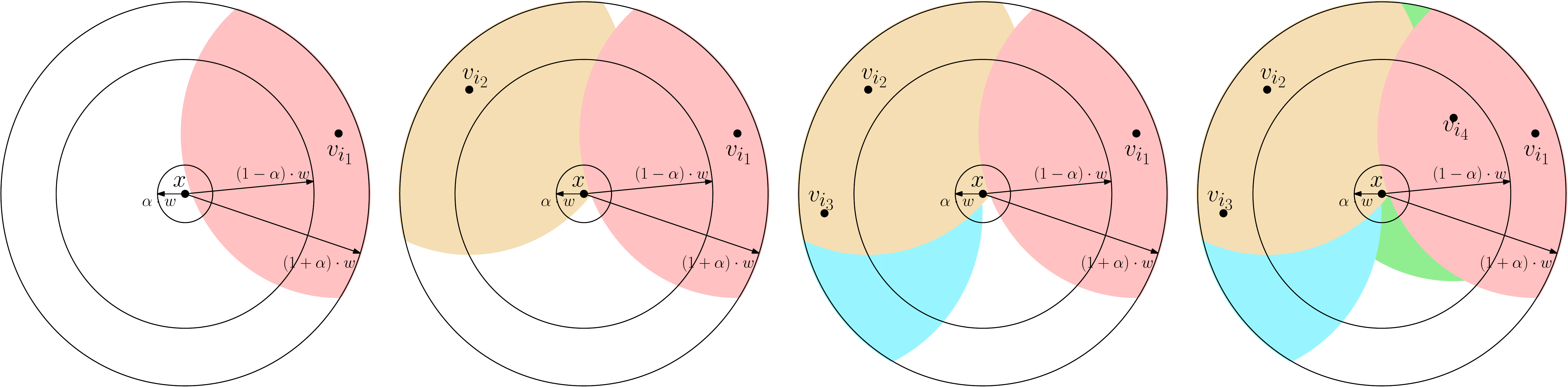}	
	\caption{Illustration of the ball carving process w.r.t. the ball $B$ of radius $\alpha\cdot w$ around $x$. Point in $B$ can be hashed only to centers from $B(x,(1+\alpha)\cdot w)$. Once there is a center $v_{i_x}\in B(x,(1-\alpha)\cdot w)$ is chosen ($v_{i_4}$ in the figure) all the previously undecided vertices in $B$ will be hashed to $v_{i_x}$.
	}
	\label{fig:ShatteringExample}
	
\end{figure}

\begin{restatable}{theorem}{DecompTimeEfficient}
    \label{thm:exist_decomp}
    There exists a $(\Gamma, d+1)$-hash $\hash : \mathbb{R}^d \to \mathbb{R}^d$
    for $\Gamma = \Theta(d \sqrt{d})$.
    Furthermore, $\hash$ is characterized solely by $d$ and the diameter bound,
    and one can evaluate $\hash(x)$ for input $x\in \mathbb{R}^d$
    in time and space $\poly(d)$.
\end{restatable}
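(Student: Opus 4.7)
The plan is to build $\hash$ by overlaying $d+1$ canonical shifts of a single axis-aligned hypercubic grid and then letting each point self-select which shift it uses. Set the side length to $s := \ell/\sqrt{d}$ so that each axis-aligned cube of side $s$ has Euclidean diameter exactly $\ell$; this immediately meets the diameter requirement of \Cref{def:decomp}. For $i=0,1,\dots,d$ let $G_i$ denote the grid whose cells are translates of $[0,s)^d$ by $t_i := \tfrac{i}{d+1}\cdot s\cdot \mathbf{1}$, where $\mathbf{1}$ is the all-ones vector in $\RR^d$. Fix a buffer parameter $\tau := s/(3(d+1))$ and, for each $x\in\RR^d$, define $i(x)\in\{0,\dots,d\}$ to be the smallest index $i$ such that in every coordinate $j$, the value $x_j - t_{i,j}$ is at distance at least $\tau$ from $s\Z$. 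Set $\hash(x) := (i(x), c_{i(x)}(x))$, where $c_i(x)$ denotes the cell of $G_i$ that contains $x$.

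The main step is to verify that $i(x)$ is always well-defined, via a pigeonhole argument on fractional parts. For each coordinate $j$, letting $f_j := x_j/s \bmod 1$, the set of \emph{bad} shifts -- those $i$ for which $(f_j - i/(d+1)) \bmod 1$ lies within $\tau/s$ of $0$ -- corresponds to an arc of length $2\tau/s = 2/(3(d+1))$ on the circle $\RR/\Z$. Since the $d+1$ values $i/(d+1)$, $i=0,\dots,d$, are evenly spaced on this circle with gap $1/(d+1) > 2\tau/s$, each such arc contains at most one of them, so at most one shift is bad per coordinate. Summing over the $d$ coordinates, at most $d$ of the $d+1$ shifts are bad, hence at least one good shift exists and $i(x)$ is defined.

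To verify the consistency bound, suppose $S\subseteq\RR^d$ has diameter at most $\ell/\Gamma$ for $\Gamma := 3\sqrt{d}(d+1) = \Theta(d^{1.5})$, which by construction equals $\tau$. Pick any $y\in S$. Because $y$ lies at coordinate-wise distance at least $\tau$ from every hyperplane of $G_{i(y)}$, the Euclidean ball $B(y,\tau)$ is contained in the cell $c_{i(y)}(y)$, and therefore $S \subseteq c_{i(y)}(y)$. Consequently, for any other $y'\in S$ with $i(y')=i(y)$, we get $c_{i(y')}(y') = c_{i(y)}(y)$ and so $\hash(y')=\hash(y)$. Hence $|\hash(S)|$ is bounded by the number of distinct values of $i(\cdot)$ on $S$, which is at most $d+1$.

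Finally, both $s$ and $\tau$ depend only on $d$ and $\ell$, so $\hash$ is deterministic and data-oblivious once $d$ and $\ell$ are fixed, exactly as required by the theorem. To evaluate $\hash(x)$ we iterate $i=0,1,\dots,d$, each time computing $c_i(x)$ via one coordinate-wise division and rounding, and checking the $d$ coordinate distances against $\tau$; this takes $O(d)$ arithmetic operations per shift and $O(d^2)$ operations in total, which is well within $\poly(d)$ time and space. The only delicate point, handled by the pigeonhole bookkeeping, is ensuring that a mere $d+1$ shifts cover all possible positions of $x$ while still leaving a buffer of size $\Theta(s/d)$ along every coordinate; the constant $3$ in $\tau = s/(3(d+1))$ is chosen so that $2\tau/s < 1/(d+1)$ strictly, which is what makes the per-coordinate bad-set have size at most one.
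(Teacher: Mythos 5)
Your proof is correct, but it takes a genuinely different route from the paper's. The paper also starts from axis-aligned cubes of side $\ell/\sqrt{d}$, but then subdivides each cube ``bottom-up'': for $i=0,\dots,d-1$ it assigns the $\ell_\infty$-neighborhood of radius $(d-i)\epsilon$ of each $i$-dimensional face (minus everything already assigned), and finally the leftover interior; consistency follows because any two regions arising in the same round are at distance more than $\epsilon:=\ell/\Gamma$ apart, so a set of diameter at most $\epsilon$ meets at most one region per round (\Cref{lemma:perp_gap_object,lemma:perp_gap,lemma:remain_gap}). You instead overlay $d+1$ diagonally shifted copies of the grid and let each point select the first shift in which it is $\tau$-far from every grid hyperplane, with the pigeonhole argument on fractional parts guaranteeing such a shift exists; consistency follows because a set of diameter at most $\tau$ lies inside a single cell of every shift selected by any of its points. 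Both constructions give $\Lambda=d+1$ and $\Gamma=\Theta(d^{1.5})$ and are evaluable deterministically in $\poly(d)$ time and space, and your shifted-grid argument is arguably shorter and easier to verify than the paper's case analysis of faces. Two cosmetic points: (i) the theorem's codomain is $\RR^d$, so you should replace the pair $(i(x),c_{i(x)}(x))$ by a fixed representative point chosen inside each bucket --- the buckets are pairwise disjoint subsets of $\RR^d$, so this is injective, and it is exactly what the paper does; (ii) with half-open cells, a point of $S$ lying exactly on the far boundary of $c_{i(y)}(y)$ technically falls in the neighboring cell, but any such point is at distance $0$ from a hyperplane of $G_{i(y)}$ and hence has $i(\cdot)\neq i(y)$, so your final count of at most one bucket per value of $i(\cdot)$ on $S$ is unaffected.
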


  \paragraph{Proof Overview for \Cref{thm:exist_decomp}.}
  This $O(d^{1.5})$-gap construction first partitions the entire $\mathbb{R}^d$  into unit hypercubes.
  Then we apply an identical decomposition/partition on every unit hypercube,
  and the partition of the entire space is the collection of all the parts (across all unit hypercubes).
The partition into hypercubes immediately implies a diameter bound.
  Consider now the partition of an (arbitrary) unit hypercube $H$;
  intuitively, we partition $H$ into $d+1$ \emph{groups} of regions,
such that regions in the same group are at distance $\geq \epsilon := \ell/\Gamma$ of each other (where $\ell = O(\sqrt{d})$ when we start with unit hypercubes).
This way, a subset of diameter $< \epsilon$ can only intersect one region from each of the $d+1$ groups, which ensures consistency.
Roughly, each of the $d+1$ groups, say the $i$-th group,
corresponds to all the $i$-dimensional faces of the hypercube.
However, simply taking these $i$-dimensional faces cannot work,
since they intersect and the minimum distance is $0$.
To make them separated, we employ a sequential process,
where iteration $i=0,1,2,\ldots$ takes as our next group
the $i$-dimensional faces and their close $\ell_\infty$ neighborhoods,
excluding (i.e., removing) points that are
sufficiently close (in the $\ell_\infty$ distance) to $(i-1)$-dimensional faces.
In particular, we crucially use the following geometric fact.
\begin{fact}[See \Cref{lemma:perp_gap}]
  Consider two orthogonal $i$-dimensional subspaces $S_1$, $S_2$
  and denote their intersection by $I$.
  Denoting by $A^{+t}$ the $\ell_\infty$ neighborhood of $A$ of radius $t$,
  define $S_i' := S_i^{+\epsilon} \setminus I^{+2\epsilon}$ for $i=1,2$.
  Then $\dist(S_1', S_2') > \sqrt{2} \epsilon$.
\end{fact}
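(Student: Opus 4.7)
The plan is to introduce coordinates aligned with the axis-directions of $S_1$ and $S_2$, reduce the sets $S_1',S_2'$ to transparent coordinate descriptions, and then conclude by a two-coordinate Pythagorean inequality. The whole argument is routine once the right ``witness'' coordinates are identified; the key structural input is simply that orthogonality of axis-aligned subspaces forces their free-coordinate sets to be disjoint, and this disjointness is exactly what produces the $\sqrt{2}$ factor.

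\emph{Coordinate setup.} In the intended application the subspaces are axis-aligned faces of a hypercube, so I write each $S_k$ ($k=1,2$) as an axis-aligned affine $i$-subspace with free-coordinate set $A_k \subseteq [d]$ of size $i$. Orthogonality of the underlying direction spaces forces $A_1 \cap A_2 = \emptyset$ (any common index would yield a direction lying in both $S_1$ and $S_2$, violating perpendicularity). Because $I = S_1 \cap S_2 \neq \emptyset$, the fixed coordinates outside $A_1 \cup A_2$ must coincide in $S_1$ and $S_2$; translating so that $I$ is placed at the origin gives
\[
S_1 = \{x \in \mathbb{R}^d : x_j = 0 \text{ for all } j \notin A_1\}, \quad S_2 = \{x \in \mathbb{R}^d : x_j = 0 \text{ for all } j \notin A_2\}, \quad I = \{0\}.
\]

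\emph{Explicit neighborhoods and witnesses.} Immediately $S_k^{+\epsilon} = \{x : |x_j| \leq \epsilon \text{ for } j \notin A_k\}$, while $I^{+2\epsilon} = [-2\epsilon,2\epsilon]^d$. For any $p \in S_1^{+\epsilon}$, every coordinate $j \notin A_1$ already satisfies $|p_j| \leq \epsilon < 2\epsilon$, so $p \notin I^{+2\epsilon}$ is equivalent to the existence of a \emph{witness} $k_1 \in A_1$ with $|p_{k_1}| > 2\epsilon$. Symmetrically, $q \in S_2'$ iff $q \in S_2^{+\epsilon}$ and some witness $k_2 \in A_2$ satisfies $|q_{k_2}| > 2\epsilon$.

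\emph{The Pythagorean step.} Take arbitrary $p \in S_1'$, $q \in S_2'$ with witnesses $k_1 \in A_1$, $k_2 \in A_2$. Since $A_1 \cap A_2 = \emptyset$, these are distinct indices: $k_1 \neq k_2$. From $k_1 \notin A_2$ and $q \in S_2^{+\epsilon}$ we get $|q_{k_1}| \leq \epsilon$, so
\[
|p_{k_1} - q_{k_1}| \;\geq\; |p_{k_1}| - |q_{k_1}| \;>\; 2\epsilon - \epsilon \;=\; \epsilon,
\]
and the symmetric argument gives $|p_{k_2} - q_{k_2}| > \epsilon$. Because $k_1 \neq k_2$ these are two distinct coordinate contributions to the Euclidean distance, so
\[
\|p - q\|_2^2 \;\geq\; (p_{k_1}-q_{k_1})^2 + (p_{k_2}-q_{k_2})^2 \;>\; 2\epsilon^2,
\]
which yields $\|p - q\|_2 > \sqrt{2}\epsilon$ as claimed.

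I do not anticipate a genuine obstacle in this argument; it is essentially triangle inequality plus Pythagoras, powered by the observation that orthogonality of axis-aligned subspaces gives disjoint free-coordinate sets and hence two independent witness coordinates. The only caveat worth mentioning is the degenerate case $I = \emptyset$, which does not arise for orthogonal hypercube faces with matching fixed-coordinate values; under the natural convention $I^{+2\epsilon} = \emptyset$, some fixed coordinate outside $A_1 \cup A_2$ would have to differ between $S_1$ and $S_2$ by at least the hypercube side length, so $S_1^{+\epsilon}$ and $S_2^{+\epsilon}$ are already separated by a large margin for all relevant choices of $\epsilon$, and the inequality holds trivially.
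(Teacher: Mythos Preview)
Your proof is correct and follows the same approach as the paper: set up axis-aligned coordinates, extract a witness index from each subspace's free-coordinate set where the point escapes the $I$-neighborhood, and apply a two-coordinate Pythagorean bound using that the witness indices are distinct. The paper's \Cref{lemma:perp_gap} is phrased slightly more generally (the two coordinate subspaces may share some basis directions, so $I$ can have positive dimension rather than being a single point), but your strict reading of ``orthogonal'' matches the Fact as stated, and the argument is structurally identical---the shared coordinates simply play no role in the distance estimate.
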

Our construction iterates over $i = 0, \ldots, d - 1$ and repeatedly uses the above fact.
Denoting by $\mathcal{S}_i$ the union of the $i$-dimensional faces,
the $i$-th group consists of the connected regions of
$\mathcal{S}_i^{+(d - i) \epsilon} \setminus \mathcal{S}_{i - 1}^{+ (d - i + 1) \epsilon}$.
Finally, the $d$-th group consists of a single region corresponding to the hypercube interior after removing the previous groups.
\subsection{A Near-optimal Tradeoff using Small Space: Proof of \Cref{thm:EucSparsePartition}}
\label{sec:proof_opt_hash}

\EucSparsePartition*
\begin{proof}
	Let $\ell > 0$ be the diameter bound of the hashing.
Let $w=\frac12 \ell$, and $\alpha=\frac2\Gamma$, note that $\alpha\in[\frac{1}{d},\frac14]$. We will construct a partition where every ball of radius $\alpha\cdot w=\frac{\ell}{\Gamma}$ intersect at most $e^{4\alpha d}\cdot O(d\log d)=e^{\frac{8 d}{\Gamma}}\cdot O(d\log d)$ clusters.

We define the
following random clustering procedure (introduced by Andoni and Indyk \cite{AI08}). 
The point set
$4w\cdot\mathbb{Z}^{d}
	=\left\{\left(4w\cdot \alpha_1,\dots,4w\cdot \alpha_d\right)\in\R^d \mid \alpha_1,\dots,\alpha_d\in\Z\right\}$
is the $d$-dimensional axis parallel lattice with side length $4w$. $G^{d,w}=\cup_{x\in4w\cdot\mathbb{Z}^{d}}B(x,w)$
is the union of all the balls with radius $w$ and center at a lattice point
$4w\cdot\mathbb{Z}^{d}$. 
Given a point $v\in\mathbb{R}^{d}$,
$v+G^{d,w}$ is simply $G^{d,w}$ shifted by $v$; we will denote
this set by $G_{v}^{d,w}$. 
We create clusters $\mathcal{C}$ of $\mathbb{R}^{d}$ as
follows: 
let $v_{1},v_{2},\dots,v_{l},\dots\in[0,4\cdot w)^{d}$ be an infinite series of random points chosen independently (i.i.d) and uniformly.
For every $v_{i}$, and
$u\in\mathbb{Z}^{d}$ there will be a cluster $C_{v_{i},u}$ containing
all yet unclustered points in the ball $B\left(v_{i}+4w\cdot u,w\right)$. In other words, $C_{v_{i},u}$ contains all the points $p$ at distance less then $w$ from the center $v_{i}+4w\cdot u$, where there is
no $v_{i'}$ with $i'<i$ and $u'\in\mathbb{Z}^{d}$ such that $p\in B\left(v_{i'}+4w\cdot u',w\right)$.
The set of created clusters is denoted $\mathcal{C}_{w}$. 
See \Cref{fig:AI08clustering2} for illustration.

\begin{figure}[t]
	\centering
	\includegraphics[width=.75\textwidth]{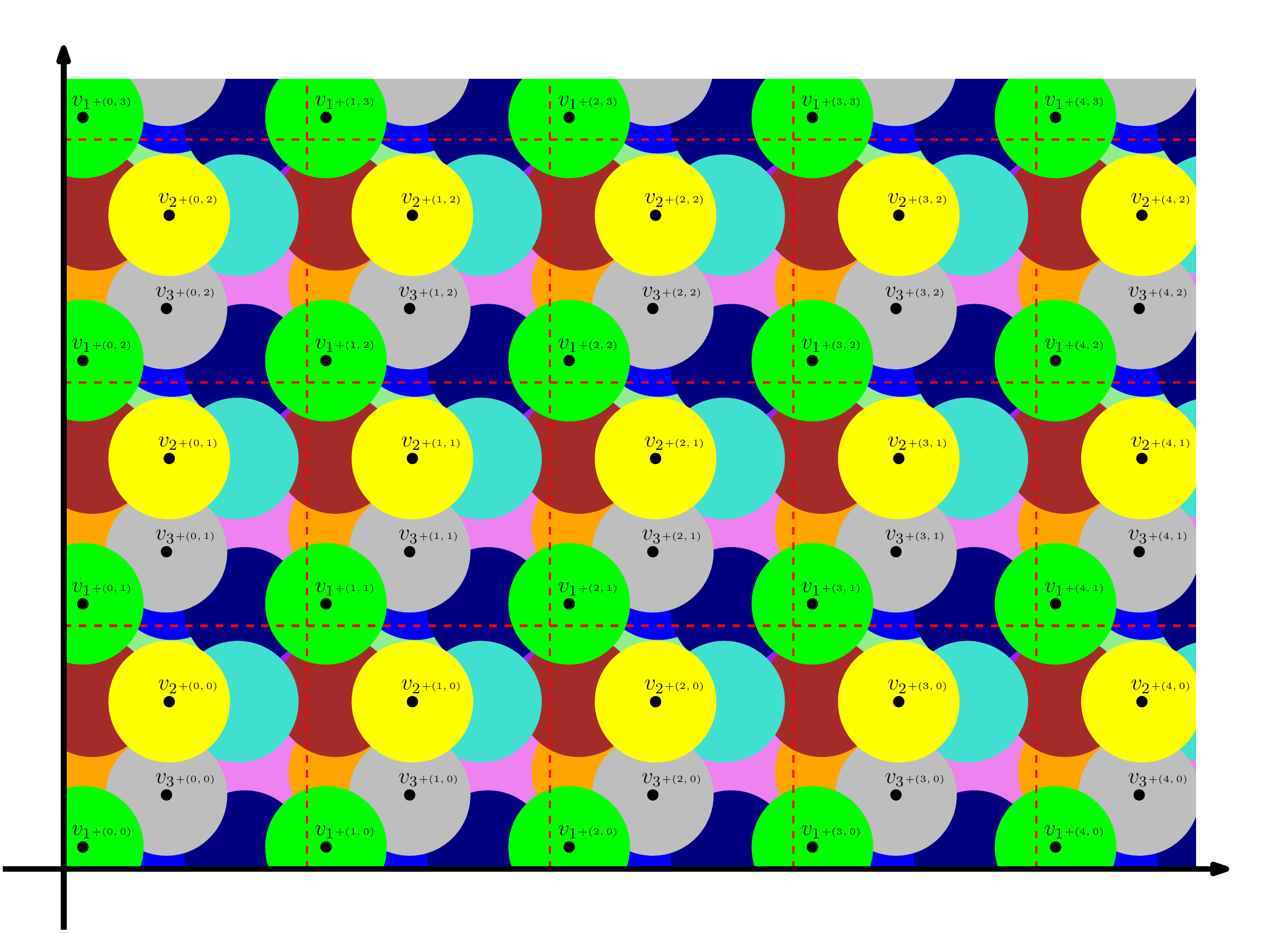}	
\caption{
		Illustration of the clustering process for $w=\frac14$ and $d=2$.
		First the point $v_1\in[0,4\cdot w)^d=[0,1)^2$ is chosen, and we add the clusters $G^{2,\frac14}_{v_1}$ (colored in green).
		The algorithm continues to create clusters in this manner until the entire space is clustered.
		}
\label{fig:AI08clustering2}

\end{figure}

Consider a vector $v\in \R^d$, and denote $B_{\alpha}=B(v,\alpha\cdot w)$.
Let $X_v$ denote the number of centers chosen in the ball  $B_{1+\alpha}=B(v,(1+\alpha)\cdot w)$, until a center chosen in $B_{1-\alpha}=B(v,(1-\alpha)\cdot w)$. Then $X_v$ is a random variable, distributed according to geometric distribution with parameter $p\ge(1-\alpha)^{2d}$.
Indeed, as the points are chosen uniformly i.i.d., the probability of choosing a center in $B_{1-\alpha}$, given that a center from $B_{1+\alpha}$  was chosen is equal to the ratio between the volumes of these two balls:
\[
p=\Pr\left[x\in B_{1-\alpha}\mid x\in B_{1+\alpha}\right]=\frac{\text{Vol}\left(B_{1-\alpha}\right)}{\text{Vol}\left(B_{1+\alpha}\right)}=(\frac{1-\alpha}{1+\alpha})^{d}\ge(1-\alpha)^{2d}\ge e^{-4\alpha d}~.
\]
Note that $X_v$ is an upper bound on the number of clusters intersecting $B_\alpha$.
Indeed, every cluster centered at a point out of $B_{1+\alpha}$ will contain no points from $B_{\alpha}$. 
Furthermore, if $x\in B_{1-\alpha}$, then all the unclustered points in $B_{\alpha}$ will join the cluster centered in $x$, and no future cluster will contain any point from $B_{\alpha}$.

Fix $\delta=\frac{\alpha}{d^{1.5}}$. Consider the box $[0,8\cdot w)^{d}$, and let $N$ be all the points of the axis parallel lattice with side length $\delta\cdot w$ inside the box. Clearly, as $\Gamma \le 2d$, we have that $\alpha=\frac2\Gamma\ge\frac1d$,
which implies $|N|=(\frac{8}{\delta})^{d}=(\frac{8d^{1.5}}{\alpha})^{d}=2^{O(d\log d)}$.
For a net point $v\in N$, the probability that $X_v$ is larger than $\frac 1p(\ln|N|+1)$ is bounded by
\[
\Pr\left[X_{v}>\frac{1}{p}(\ln|N|+1)\right]=(1-p)^{\frac{1}{p}(\ln|N|+1)}\le e^{-\ln|N|-1}=\frac{1}{|N|\cdot e}~.
\]
Denote by $\Psi$ the event  that for every lattice point $v\in N$, it holds that  $X_v\le m=\frac 1p(\ln|N|+1)\le e^{4\alpha d}\cdot O(d\log d)$.
By union bound, $\Pr[\Psi]\ge1-\frac1e$.
Suppose that $\Psi$ occurred, and consider an arbitrary point $u\in [0,8\cdot w)^{d}$. Let $v\in N$ be a point such that $\|u-v\|_2\le \sqrt{d}\cdot\delta\cdot w$.
As $B(u,(\alpha-\sqrt{d}\cdot\delta)\cdot w)\subseteq B(v,\alpha\cdot w)$ and $X_v\le m$, if follows that at most $m$ clusters intersect $B(u,(\alpha-\sqrt{d}\cdot\delta)\cdot w)$.
As this property holds for every point in $[0,8\cdot w)^{d}$, by symmetry, every ball in $\R^d$ of radius $(\alpha-\sqrt{d}\cdot\delta)\cdot w=(1-\frac{1}{ d})\cdot\alpha\cdot w$ intersects at most $m$ clusters.

Denote by $\Phi$ the event that for every point $v\in N$, there is a center chosen in $B(v,(1-\alpha)\cdot w)\supseteq B(v,\frac12\cdot w)$ among the first $s=2^{O(d\log d)}$ centers. The probability of this event not occurring for $v\in N$ is 
\begin{align*}
	\Pr\left[\left\{ v_{1},\dots,v_{s}\right\} \cap B(v,(1-\alpha)\cdot w)=\emptyset\right]
	& \le\left(1-\frac{\text{Vol}\left(B(v,\frac{1}{2}\cdot w)\right)}{\text{Vol}\left([0,4\cdot w]^{d}\right)}\right)^{s}
	=\left(1-\frac{\frac{\pi^{\frac{d}{2}}(\frac{1}{2}\cdot w)^{d}}{\frac{d}{2}!}}{(4\cdot w)^{d}}\right)^{s}\\
	& =\left(1-2^{-O(d\log d)}\right)^{s}\le e^{s\cdot2^{-O(d\log d)}}=2^{-O(d\log d)}~,
\end{align*}
for large enough $s$ and assuming without loss of generality $d$ is even.
By union bound, $\Pr[\Phi]\le|N|\cdot2^{-O(d\log d)}\le 2^{-O(d\log d)}$.

The probability of $\Psi\cap\Phi$ to occur is at least $\frac12$.
If both these events $\Psi$ and $\Phi$ occurred, then 
we created a partition described by $2^{O(d\log d)}$ centers,
where all the clusters have diameter at most $2w=\ell$,
and each ball $B(u,(\alpha-\delta)\cdot w)$ (for any $u\in \R^d$) intersects at most $\frac 1p(\ln|N|+1)\le e^{4\alpha d}\cdot O(d\log d)$ clusters.
Set $\alpha'=(1-\frac{1}{d})^{-1}\cdot\alpha\le(1+\frac{2}{d})\cdot\alpha$, and $\delta'=\frac{\alpha'}{d^{1.5}}$.
Note that $\alpha'-\sqrt{d}\cdot\delta'=\alpha'\cdot(1-\frac{1}{d})=\alpha$.
Then by the same analysis (adapting for $\alpha',\delta'$), with probability $\frac12$ both $\Psi$ and $\Phi$ occurred, and each ball $B(u,\alpha\cdot w)=B(u,(\alpha'-\sqrt{d}\cdot\delta)\cdot w)$ intersects at most $e^{4\alpha'd}\cdot O(d\log d)=e^{4\alpha d}\cdot O(d\log d)$
 clusters.

Consider an algorithm $\mathcal{A}$ that samples $s$ centers in $v_{1},v_{2},\dots,v_{l}\in[0,4\cdot w)^{d}$, and then checks whether both events
$\Psi$ and $\Phi$ occur together. The algorithm can simply go over all the points in $N$ and check them one by one.
Algorithm $\mathcal{A}$ uses a string $R$ of $2^{O(d\log d)}$ random bits (to choose the centers), while the running space is just $S=O(d\log d)$ space for the different counters.
Using Nisan's pseudorandom generator (PRG) \cite{Nisan92}, there is a seed of $O(S\cdot\log R)=O(d^2\cdot \log^2 d)$ truly random bits, that generates a pseudorandom string $R'$ of $2^{O(d\log d)}$ bits such that no algorithm with running space $S$ can distinguish between $R$ and $R'$.
In particular, when we run $\mathcal{A}$ on the pseudorandom string $R'$, it finds that $\Psi$ and $\Phi$ occur together with probability $\frac12$.
We conclude that there is a pseudorandom string $R'$ with seed of size $O(d^2\log^2 d)$ such that in the partition defined by $R'$, the diameter of each cluster is at most $2w=\ell$, and each ball $B(u,\alpha\cdot w)$ intersects at most $ e^{4\alpha d}\cdot O(d\log d)$ clusters.
\end{proof}

\subsection{A $\poly(d)$-Time $(d^{1.5}, d+1)$-Hash: Proof of \Cref{thm:exist_decomp}}
\label{sec:proof_d15_hash}

\DecompTimeEfficient*
Let $\ell > 0$ and $\epsilon := \ell / \Gamma$, where $\Gamma = \Theta(d \sqrt{d})$.
We use the following construction to obtain a $(\Gamma, d + 1)$-hash
$\hash : \mathbb{R}^d \to \mathbb{R}^d$ with diameter bound $\ell$.
Let $T := \frac{\ell}{\sqrt{d}} = \Theta(d \epsilon)$ and partition $\mathbb{R}^d$ into hypercubes of side length $T$.
Note that each of these hypercubes is of diameter $\ell$.
Let $l_i:=(d-i)\epsilon$ for $i=0,1,\ldots,d-1$. 
For a point set $S\subseteq\mathbb{R}^d$, let $N_{i}^{\infty}(S)$
be the $l_i$-neighborhood of $S$ under the $\ell_\infty$ distance, i.e.,
\[
    N_{i}^{\infty}(S):=\{x\in \mathbb{R}^d:\exists s\in S, \Vert x-s\Vert_{\infty}\le l_i\}.
\]
\begin{enumerate}
    \item For $i=0,\ldots,d$, let $\mathcal{A}_i$ be the set of $i$-dimensional faces of the hypercubes,
    and let $\mathcal{B}_i:=N^{\infty}_{i}(\mathcal{A}_i)$.
    Define $\mathcal{B}_{-1}:=\emptyset$ and $\mathcal{B}_{\le i}:=\bigcup_{j\le i}\mathcal{B}_j$.
    \item For every $i=0,\ldots,d-1$, for every $i$-dimensional face $Q$,
    let $\hatQ := N^{\infty}_{i}(Q)\setminus \mathcal{B}_{\le i-1}$ and for every $x\in \hatQ$ assign $\hash(x)=q$ where $q\in \hatQ$ is an arbitrary but fixed point.
    Observe that $\widehat{Q}  \subseteq \mathcal{B}_i$
    and thus $x \in \widehat{Q}$ will \emph{not} be assigned
    again in later iterations. (See Figure~\ref{fig:hashing2D} for an illustration.)
    \item For every $d$-dimensional face $Q$, i.e., a hypercube,
    let $\hatQ$ be the remaining part of $Q$ whose $\hash(x)$ has not been assigned, 
    and assign $\hash(x)=q$ for every $x\in \hatQ$, where $q\in \hatQ$ is arbitrary but fixed point.
\end{enumerate}

\begin{figure}[t]
	\begin{center}
		\includegraphics[scale=1]{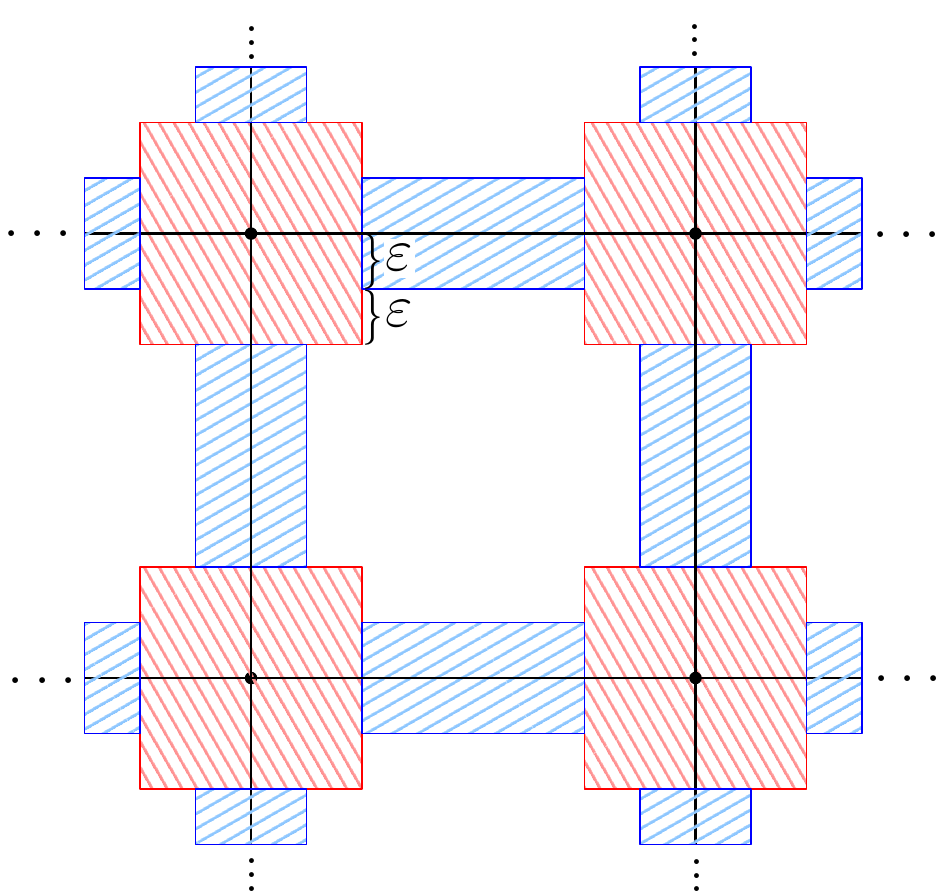}
		\caption{An illustration of the construction of a $(\Gamma, d + 1)$-hash in 2D, showing only a vicinity
			of a single square $[0,T]^2$ of the initial partition (in black).
			The $(l_0 = 2\epsilon)$-neighborhoods of vertices, which form $\mathcal{B}_0$, are depicted in red (north-west shaded areas),
			and the $(l_1 = \epsilon)$-neighborhoods of edges, that is, set $\mathcal{B}_1\setminus \mathcal{B}_0$, are depicted in blue (north-east shaded areas).}
		\label{fig:hashing2D}
	\end{center}
\end{figure}

In the following two lemmas, for any $0\le i\le d$, we show that every two $i$-dimensional faces of the hypercubes, including the $d$-dimensional interiors, are at distance larger than $\epsilon$ after we remove $\mathcal{B}_{\le i-1}$,
which is the union of the $l_i$-neighborhoods (under the $\ell_\infty$ distance) of $j$-dimensional faces for $j < i$.
This implies the property that every point set of diameter at most $\epsilon = \ell / \Gamma$ maps to at most $d+1$ buckets.

\begin{lemma}[Analysis for $i \leq d - 1$]
    \label{lemma:perp_gap_object}
    Suppose $0\le i\le d-1$ is an integer.
    Let $Q_1, Q_2$ be two distinct $i$-dimensional faces of the hypercubes.
    Let $\hatQ_1:=N^{\infty}_{i}(Q_1)\setminus \mathcal{B}_{\le i-1}$,
    and $\hatQ_2 := N^{\infty}_{i}(Q_2)\setminus \mathcal{B}_{\le i-1}$ as in the construction.
    Then $\dist(\hatQ_1, \hatQ_2)>\epsilon$.
\end{lemma}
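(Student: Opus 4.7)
The plan is to reduce to a single-coordinate comparison: for any $p_1\in\widehat{Q}_1$ and $p_2\in\widehat{Q}_2$, I will exhibit a coordinate $m\in\{1,\dots,d\}$ with $|(p_1)_m-(p_2)_m|>\epsilon$, which immediately yields $\|p_1-p_2\|_2>\epsilon$. To this end, encode each $i$-face $Q_j$ by its fixed-coordinate set $F_j\subseteq\{1,\dots,d\}$ of size $d-i$, its fixed values $c_k(Q_j)\in T\mathbb{Z}$ for $k\in F_j$, and the free $T$-intervals $[a_k(Q_j),a_k(Q_j)+T]$ for $k\notin F_j$. Since $Q_1\ne Q_2$, the pair falls into one of three cases: (a)~$F_1=F_2$ with some fixed value different; (b)~$F_1\ne F_2$; (c)~$F_1=F_2$ and all fixed values agree, but $Q_1\ne Q_2$ as $i$-cubes inside their common affine subspace~$V$.

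The easy sub-cases are settled directly by grid spacing. In case~(a), choose any $m\in F_1$ with $c_m(Q_1)\ne c_m(Q_2)$; both lie in $T\mathbb{Z}$, hence differ by at least~$T$, and the bound $|(p_j)_m-c_m(Q_j)|\le l_i$ from $p_j\in N_i^\infty(Q_j)$ gives $|(p_1)_m-(p_2)_m|\ge T-2l_i>\epsilon$ provided that the absolute constant in $\Gamma=\Theta(d\sqrt d)$ is chosen so that $T=\ell/\sqrt d>(2d+1)\epsilon>2l_i+\epsilon$. The same bound handles the sub-case of~(b) where the chosen witness $m\in F_1\setminus F_2$ satisfies $c_m(Q_1)\notin\{a_m(Q_2),a_m(Q_2)+T\}$, and the sub-case of~(c) where $Q_1\cap Q_2=\emptyset$.

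The heart of the argument is the remaining common scenario: in~(b) with $c_m(Q_1)\in\{a_m(Q_2),a_m(Q_2)+T\}$, and in~(c) with $Q_1\cap Q_2=:G$ a nonempty face of dimension $j\le i-1$. In both situations one can identify an axis-aligned face $G\in\mathcal{A}_{\le i-1}$ incident to both $Q_1$ and $Q_2$, together with a coordinate $m$ (``extra'' in the sense that $m\in F_G\setminus F_1$) along which the $m$-ranges of $Q_1,Q_2$ are two consecutive $T$-intervals meeting at the common endpoint $c_m(G)$. The key fact is that $p_2\notin\mathcal{B}_{\le i-1}$ forces $\|p_2-g\|_\infty>l_{i-1}$ for every $g\in G$, and a coordinate-wise decomposition of this $\ell_\infty$ distance (taking $g$ to be the closest point of $G$) shows that every coordinate other than the extra ones of $G$ contributes at most $l_i<l_{i-1}$; hence the excess must be realized at some extra coordinate, which I designate as~$m$. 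Since $(p_2)_m$ is constrained by $p_2\in N_i^\infty(Q_2)$ to lie within $l_i$ of $Q_2$'s $m$-range (which lies on one side of $c_m(G)$), the inequality $|(p_2)_m-c_m(G)|>l_{i-1}$ forces $(p_2)_m$ to overshoot $c_m(G)+l_{i-1}$ on the $Q_2$-side; meanwhile $p_1\in N_i^\infty(Q_1)$ keeps $(p_1)_m\le c_m(G)+l_i$ on the opposite side. Thus $(p_2)_m-(p_1)_m>l_{i-1}-l_i=\epsilon$.

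The main obstacle I anticipate is the combinatorial bookkeeping needed to correctly identify the witnessing extra coordinate~$m$, particularly in case~(c) with $j<i-1$ where $G$ has multiple extra coordinates and the $\ell_\infty$ excess could a priori be realized at any one of them. The resolution is structural: any extra coordinate $k^*$ of $G$ is, by the definition of $G=Q_1\cap Q_2$, necessarily one in which the $k^*$-ranges of $Q_1$ and $Q_2$ are $T$-offset adjacent intervals sharing the endpoint $c_{k^*}(G)$---if they coincided, $k^*$ would be a free coordinate of $G$ rather than extra---which is exactly the structural property used to derive the one-sided bound $(p_1)_{k^*}\le c_{k^*}(G)+l_i$ above. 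With this in hand the coordinate-wise separation argument runs uniformly across all sub-cases.
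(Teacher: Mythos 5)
Your proof is correct and rests on the same mechanism as the paper's: the removed $\ell_\infty$-neighborhood of the intersection face $G=Q_1\cap Q_2$ forces one point to overshoot $c_m(G)$ by more than $l_{i-1}$ in some coordinate $m$ that is fixed for $G$ but free for its own face, while the other point stays within $l_i$ of the opposite side, yielding separation $l_{i-1}-l_i=\epsilon$; the paper packages this as a same-subspace/different-subspace case split with an auxiliary subspace lemma, whereas you run a unified single-witness-coordinate argument, but the substance is identical. One small repair: in your case (b), the condition $c_m(Q_1)\in\{a_m(Q_2),a_m(Q_2)+T\}$ for one chosen coordinate does not guarantee $Q_1\cap Q_2\neq\emptyset$ (another coordinate could have disjoint ranges), so the dichotomy should be stated as ``$Q_1\cap Q_2=\emptyset$ (easy, via any coordinate with disjoint ranges) versus $Q_1\cap Q_2\neq\emptyset$ (the heart, with $G=Q_1\cap Q_2$)''; likewise the witness coordinate produced by decomposing $p_2$'s distance to $G$ lies in $F_G\setminus F_2$ rather than $F_G\setminus F_1$, and when it happens to be fixed for $Q_1$ the two ranges are a point and an interval rather than two adjacent intervals, though your one-sided bound $(p_1)_m\le c_m(G)+l_i$ still holds in that subcase.
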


\begin{proof}
    We assume that the intersection of $Q_1$ and $Q_2$ is non-empty, as otherwise
    \[
        \dist(\hatQ_1, \hatQ_2)\ge T - 2l_i
        \geq T - 2d\epsilon > \epsilon.
    \]

    Note that every $i$-dimensional face in the hypercube $[0, T]^d$ can be represented
    by fixing $d-i$ coordinates to be either $0$ or $T$,
    and other $i$ coordinates range over $[0, T]$.
    W.l.o.g., assume that $Q_1$ belongs to the hypercube $[0, T]^d$,
    and $Q_1=[0, T]^i\times 0^{d-i}$. 
    Let $Q=Q_1\cap Q_2$ and then $Q$ is a $j$-dimensional face of the hypercube,
    where $j < i$.
    W.l.o.g., assume $Q=[0, T]^j\times 0^{d-j}$, so $0\in Q_1\cap Q_2$.
    Since $Q \subseteq \mathcal{A}_{i-1}$, we have for all $t \in \{1, 2\}$
    \begin{equation}
        \label{eqn:subset}
        \hatQ_t = N_i^{\infty}(Q_t)\setminus \mathcal{B}_{\le i-1}
        \subseteq N_i^{\infty}(Q_t)\setminus N_{i-1}^{\infty}(Q).
    \end{equation}
    Define $I_1=\{1, \ldots, j\}, I_2=\{j+1, \ldots, i\}, I_3=\{i + 1, \ldots, d\}$.
Then by the definition of $N_i^{\infty}$,
    \begin{equation}
        \label{eqn:niq1}
        x\in N_i^{\infty}(Q_1) \iff
        \begin{cases}
                x_k\in [-l_i, T + l_i] & \forall k\in I_1\cup I_2\\
                |x_k|\leq l_i & \forall k\in I_3
        \end{cases}
    \end{equation}
    \begin{equation*}
x\in N_{i-1}^{\infty}(Q) \iff
        \begin{cases}
                x_k\in [-l_{i-1}, T + l_{i-1}] & \forall k\in I_1\\
                |x_k|\leq l_{i-1} & \forall k\in I_2\cup I_3
        \end{cases}
    \end{equation*}
    Then by combining the above with the fact that $l_i < l_{i-1}$, we have
    \begin{equation}
        \label{eqn:nintersect}
        x\in N_i^{\infty}(Q_1)\cap N_{i-1}^{\infty}(Q) \iff
        \begin{cases}
                x_k\in [-l_i, T + l_i] & \forall k\in I_1\\
                x_k\in [-l_i, l_{i-1}] & \forall k\in I_2\\
                |x_k|\leq l_i & \forall k\in I_3
        \end{cases}
    \end{equation}

    \paragraph{Case I: $Q_1$ and $Q_2$ Belong to The Same Subspace.}
    Since $Q_1$ and $Q_2$ are contained in the same linear subspace
    of dimension $i$ and they intersect,
    we can assume that $Q_2=[0, T]^j\times [-T, 0]^{i-j}\times 0^{d-i}$.
    By definition, 
    \[
        x\in N_i^{\infty}(Q_2) \iff
        \begin{cases}
                x_k\in [-l_i, T + l_i] & \forall k\in I_1\\
                x_k\in [-T - l_i, l_i] & \forall k\in I_2\\
                |x_k|\leq l_i & \forall k\in I_3
        \end{cases}
    \]
    By \eqref{eqn:niq1} and \eqref{eqn:nintersect},
    we have that for every $x \in N_i^{\infty}(Q_1) \setminus N_{i-1}^{\infty}(Q)$,
    there exists $p\in I_2$ such that $x_p>l_{i-1}$.
    Moreover, for every $y \in N_i^{\infty}(Q_2)$, we have $y_p\le l_i$. Thus
    \[
        \dist(x,y)=\| x - y \|_2
        \geq |x_p-y_p| > l_{i-1} - l_i = \epsilon
    \]
    which indicates that
    \[
        \dist\left(N_i^{\infty}(Q_1) \setminus N_{i-1}^{\infty}(Q), N_i^{\infty}(Q_2) \setminus N_{i-1}^{\infty}(Q)\right) > \epsilon.
    \]
    By combining \eqref{eqn:subset} and above, we have
    \[
        \dist(\hatQ_1, \hatQ_2) 
        \geq \dist\left(N_i^{\infty}(Q_1) \setminus N_{i-1}^{\infty}(Q), N_i^{\infty}(Q_2) \setminus N_{i-1}^{\infty}(Q)\right)
        > \epsilon
    \]
    \paragraph{Case II: $Q_1$ and $Q_2$ Belong to Different Subspaces.}
    We then consider the case where $Q_1$ and $Q_2$ are contained in different subspaces.
    Let $\mathcal{S}_1$ and $\mathcal{S}_2$ be the $i$-dimensional linear subspaces containing $Q_1$ and $Q_2$, respectively.
    By the definition of $Q_1$ and $Q_2$,
    we know that $\mathcal{S}_1$ and $\mathcal{S}_2$ are not identical,
    and each of them can be spanned by $i$ vectors from
    the orthonormal basis $\{e_1,e_2,\ldots,e_d\}$ of the Euclidean space.
    We start with the following Lemma~\ref{lemma:perp_gap}
    which proves the distance lower bound between
    the whole \emph{subspaces} (instead of bounded subsets that we need to deal with).

    \begin{lemma}
        \label{lemma:perp_gap}
        Suppose $0\leq i\leq d-1$ is an integer.
        Let $\mathcal{S}_1, \mathcal{S}_2$ be two different $i$-dimensional linear subspaces,
        where both $\mathcal{S}_1$ and $\mathcal{S}_2$ can be spanned by $i$ vectors from
        the orthonormal basis $\{e_1,e_2,\ldots,e_d\}$ of the Euclidean space.
        Let $\mathcal{I}=\mathcal{S}_1\cap \mathcal{S}_2$.
        Define $\mathcal{S}_1':=N_i^{\infty}(\mathcal{S}_1)\setminus N_{i-1}^{\infty}(\mathcal{I})$,
        and define $\mathcal{S}_2'$ similarly. Then
        $\dist(\mathcal{S}_1', \mathcal{S}_2') \geq \sqrt{2} \epsilon$.
    \end{lemma}

    \begin{proof}
        W.l.o.g., suppose $\mathcal{S}_1$ is spanned by $e_1,\ldots,e_i$
        and $\mathcal{S}_2$ is spanned by $e_1,\ldots,e_j,e_{i+1},\ldots,e_{d}$, where $0\leq j< i$, and $d=2i-j$. Then $\mathcal{I}=\mathcal{S}_1\cap\mathcal{S}_2$ is spanned by $e_1,\ldots,e_j$.
        
        Let $x=\sum_{k=1}^{d}x_ke_k$. Partition $\{1,\ldots,d\}$ into three parts as follows:
        $I_1=\{1,\ldots,j\}, I_2=\{j+1,\ldots,i\}, I_3=\{i+1,\ldots,d\}$.
        Then
        \[
            x\in N_{i-1}^{\infty}(\mathcal{I}) \iff
            |x_k|\leq l_{i-1}\epsilon, \quad\forall  k\in I_2\cup I_3.
        \]
        Similarly,
        \[
            x\in N_i^{\infty}(\mathcal{S}_1) \iff
            |x_k|\le l_i\epsilon, \quad \forall k\in I_3,
        \]
        and
        \[
            x\in N_i^{\infty}(\mathcal{S}_2) \iff
            |x_k|\le l_i\epsilon, \quad \forall k\in I_2.
        \]
        It follows that $x\in \mathcal{S}_1'=N_i^{\infty}(\mathcal{S}_1)\setminus N_{i-1}^{\infty}(\mathcal{I})$
        if and only if $|x_k|\le l_i \epsilon$ for every $k\in I_3$, 
        and $\exists p \in I_2$ such that $|x_p|>l_{i-i} \epsilon$.
        Also, $x\in \mathcal{S}_2'=N_i^{\infty}(\mathcal{S}_2)\setminus N_{i-1}^{\infty}(\mathcal{I})$
        if and only if $|x_k|\le l_i \epsilon$ for every $k\in I_2$, 
        and $\exists q\in I_3$ such that $|x_q|>l_{i-1}\epsilon$.
        
        For every $x=\sum_{k=1}^{d}x_ke_k\in \mathcal{S}_1'$ and $y=\sum_{k=1}^{d}y_ke_k\in \mathcal{S}_2'$,
        there exists $p\in I_2$ such that $|x_p|>l_{i-1} \epsilon$, and $q\in I_3$ such that $|y_q|>l_{i-1}\epsilon$.
        Note that $|y_p|\leq l_i\epsilon$, and $|x_q|\le l_i\epsilon$. Therefore, we have
        \[\dist(x,y)=\|x - y\|_2
        \geq \left\Vert(x_p-y_p)e_p+(x_q-y_q)e_q\right\Vert_2
\geq \sqrt{2}\epsilon\]
        where the first and the second inequalities are derived from the fact that $\{e_1,\ldots,e_{d}\}$ is orthonormal basis.
        This completes the proof of Lemma~\ref{lemma:perp_gap}.
    \end{proof}

    Let $\mathcal{I} := \mathcal{S}_1\cap \mathcal{S}_2$.
    In order to apply Lemma~\ref{lemma:perp_gap},
    we need the next Claim~\ref{claim:subspace} to relate the intersection of subspaces $\mathcal{I}$ to $Q$.
\begin{claim}
        \label{claim:subspace}
        For all $t \in \{1, 2\}$,
        \begin{equation}
            \label{eqn:subspace}
            N_i^{\infty}(Q_t)\setminus N_{i-1}^{\infty}(Q)
            = N_i^{\infty}(Q_t)\setminus N_{i-1}^{\infty}(\mathcal{I}).
        \end{equation}
    \end{claim}
    \begin{proof}
        Since the cases for $t = 1$ and $t = 2$ are symmetric, we only prove for $t = 1$, and $t = 2$ follows similarly.
        Recall that $Q_1=[0, T]^i \times 0^{d-i}$ and $Q=[0, T]^j \times 0^{d-j}$. Thus $\mathcal{S}_1$ is spanned by $e_1,\ldots,e_i$ and $\mathcal{I}$ is spanned by $e_1,\ldots,e_j$. By definition, 
        \begin{equation}
            \label{eqn:calI}
            x\in N_{i-1}^{\infty}(\mathcal{I}) \iff \forall k\in I_2\cup I_3,|x_k|\leq l_{i-1}
        \end{equation}
        Then by \eqref{eqn:niq1}, \eqref{eqn:calI} and the fact that $l_i < l_{i - 1}$, we have
        \[
            x \in N_i^{\infty}(Q_1)\cap N_{i-1}^{\infty}(\mathcal{I}) \iff \begin{cases}
            x_k\in [-l_i, T+l_i] & \forall k\in I_1\\
            x_k\in [-l_i, l_{i-1}] & \forall k\in I_2\\
            |x_k|\le l_i & \forall k\in I_3
            \end{cases}
        \]
        which is the same conditions for the set $N_i^{\infty}(Q_1) \cap N_{i-1}^{\infty}(Q)$ (see \eqref{eqn:nintersect}).
        Thus $N_i^\infty(Q_1) \cap N_{i - 1}^{\infty}(Q)
        = N_i^\infty(Q_1) \cap N_{i - 1}^\infty(\mathcal{I})$,
        and $N_i^{\infty}(Q_1)\setminus N_{i-1}^{\infty}(Q)=N_i^{\infty}(Q_1)\setminus N_{i-1}^{\infty}(\mathcal{I})$.
        This concludes the proof of Claim~\ref{claim:subspace}.
    \end{proof}
    Finally, by combining \eqref{eqn:subset} and Claim~\ref{claim:subspace}, we have
    \[
        \begin{aligned}
        \dist(\hatQ_1, \hatQ_2)
        &\ge \dist\left(N_i^{\infty}(Q_1)\setminus N_{i-1}^{\infty}(\mathcal{I}), N_i^{\infty}(Q_2)\setminus N_{i-1}^{\infty}(\mathcal{I})\right)\\
        &\ge \dist\left(N_i^{\infty}(\mathcal{S}_1)\setminus N_{i-1}^{\infty}(\mathcal{I}), N_i^{\infty}(\mathcal{S}_2)\setminus N_{i-1}^{\infty}(\mathcal{I})\right)\\
        &>\epsilon
        \end{aligned}
    \]
    where the last inequality is by Lemma~\ref{lemma:perp_gap}.
    This completes the proof of Lemma~\ref{lemma:perp_gap_object}.
\end{proof}
\begin{lemma}[Analysis for $i = d$]
    \label{lemma:remain_gap}
    Let $Q_{1},Q_{2}$ be two distinct hypercubes in the partition,
    and $\hatQ_{1}, \hatQ_{2}$ be the corresponding remaining parts constructed in step $3$.
    Then $\dist(\hatQ_{1}, \hatQ_{2})>\epsilon$.
\end{lemma}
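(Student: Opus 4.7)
My plan is to show that peeling off $\mathcal{B}_{\le d-1}$ shrinks each hypercube strictly in every coordinate direction, and then use that two distinct hypercubes in the axis-aligned grid must separate in at least one coordinate.

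First I would observe that $\widehat{Q}_t \subseteq Q_t \setminus \mathcal{B}_{d-1}$, where $\mathcal{B}_{d-1} = N^{\infty}_{d-1}(\mathcal{A}_{d-1})$ is the $\ell_\infty$ $\epsilon$-neighborhood of all $(d-1)$-faces, since $l_{d-1}=\epsilon$. Write $Q_t = \prod_{j=1}^{d} [a_{t,j}T, (a_{t,j}+1)T]$ for $a_t \in \mathbb{Z}^d$. For any axis-aligned $(d-1)$-face $F$ of $Q_t$, say $F = \{c\} \times \prod_{j \ne i}[a_{t,j}T, (a_{t,j}+1)T]$ with $c \in \{a_{t,i}T, (a_{t,i}+1)T\}$, and any $x \in Q_t$, all coordinates $x_j$ for $j \ne i$ already lie in their respective intervals, so $\|x - F\|_\infty = |x_i - c|$. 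Hence $x \in \widehat{Q}_t$ forces $|x_i - c| > \epsilon$ for every such $i$ and $c$, which implies
\[
  x_i \in \bigl(a_{t,i}T + \epsilon,\; (a_{t,i}+1)T - \epsilon\bigr) \qquad \text{for every } i \in \{1,\ldots,d\}.
\]

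Now suppose $Q_1 \ne Q_2$, so $a_1 \ne a_2$; pick a coordinate $i$ with $a_{1,i} \ne a_{2,i}$, and without loss of generality assume $a_{1,i} < a_{2,i}$, so $a_{2,i} - a_{1,i} \ge 1$. Then for any $x \in \widehat{Q}_1$ and $y \in \widehat{Q}_2$, the shrinkage just derived yields $x_i < (a_{1,i}+1)T - \epsilon$ and $y_i > a_{2,i}T + \epsilon$, hence
\[
  y_i - x_i \;>\; (a_{2,i} - a_{1,i} - 1)T + 2\epsilon \;\ge\; 2\epsilon.
\]
Therefore $\|x-y\|_2 \ge |y_i - x_i| > 2\epsilon > \epsilon$, and taking the infimum over $x,y$ gives $\dist(\widehat{Q}_1, \widehat{Q}_2) > \epsilon$.

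I expect no real obstacle here. The one subtlety to get right is that, to turn the global condition $x \notin \mathcal{B}_{d-1}$ into a one-coordinate statement, one must apply it to a $(d-1)$-face \emph{of $Q_t$ itself} (perpendicular to the separating coordinate $i$), so that the $\ell_\infty$ distance reduces to $|x_i - c|$. Using an arbitrary face from $\mathcal{A}_{d-1}$ would not give this cleanly, and conversely the existence of such a separating coordinate is exactly what distinctness of the two hypercubes provides.
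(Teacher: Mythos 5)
Your proof is correct and rests on the same key observation as the paper's: removing $\mathcal{B}_{\le d-1}$ leaves $\widehat{Q}_t$ at distance more than $\epsilon$ from the boundary of $Q_t$. The only difference is the concluding step — you separate the shrunken boxes coordinate-by-coordinate, while the paper walks along the segment from $x$ to $y$ and uses that the two hypercubes have disjoint interiors; both yield the bound $2\epsilon > \epsilon$.
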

\begin{proof}
    It suffices to show that for every $x \in \hatQ_1$ and $y \in \hatQ_2$,
    $\dist(x, y) > \epsilon$.
    By the construction, we know that $\dist(x, \partial Q_1) \geq l_{d - 1} = \epsilon$,
    and similarly $\dist(y, \partial Q_2) \geq \epsilon$, where for a hypercube $H$, 
    $\partial H \subseteq \mathbb{R}^d$ is the boundary of $H$.

    Let $\seg(x, y)$ be the segment with endpoints $x, y$.
    Pick $x' \in \partial Q_1 \cap \seg(x, y)$ such that $x'$ is closest to $x$,
    and pick $y' \in \partial Q_2 \cap \seg(x, y)$ such that $y'$ is closest to $y$.
    Then,
    \[
        \dist(x, y)
        \geq \dist(x, x') + \dist(y, y')
        \geq \dist(x, \partial Q_1) + \dist(y, \partial Q_2)
        \geq 2 \epsilon,
    \]
    where the first inequality follows from the fact that $Q_1$ and $Q_2$ have disjoint interior.
    This completes the proof of Lemma~\ref{lemma:remain_gap}.
\end{proof}
\begin{proof}[Proof of \Cref{thm:exist_decomp}]
    By \Cref{lemma:perp_gap_object}, \Cref{lemma:remain_gap} and the structure of the hypercubes,
    we know that for every subset $S \subseteq \mathbb{R}^d$ with $\Diam(S) \leq \epsilon$,
    $\hash(S)$ can intersect at most one $\epsilon$-neighborhood (which is defined with respect to $\ell_\infty$)
    of an $i$-dimensional face for every $0 \leq i \leq d$, since otherwise the diameter constraint of $S$ would be violated.
    Therefore, $|\hash(S)| \leq d+1$.
    To bound the diameter of the buckets, we note that
    for every $\widehat{Q}$ in the construction, we have
    $\Diam(\hatQ)< \sqrt{d}\cdot\max\{O(d\epsilon), T\}\leq \ell$.
Finally, it is immediate that one can evaluate $\hash(x)$
    at an input $ x\in \mathbb{R}^d$ in time and space $\poly(d)$.
This completes the proof of \Cref{thm:exist_decomp}.
\end{proof}

     \section{Lower Bound}

In this part, we give an $\Omega(\sqrt{n})$ space complexity lower bound for one-pass streaming algorithm for Uniform Facility Location with a small constant approximation ratio, even in insertion-only streams with $\Delta = \poly(n)$. This implies impossibility to design a streaming $1+\varepsilon$-approximation algorithm for arbitrary $\varepsilon>0$ with space polynomially depending on the dimension $d$ and polylogarithmic in $n$ and $\Delta$. With this result, the $\poly(d\cdot \log \Delta)$ space complexity of our streaming algorithms presented before is justified to be basically optimal. We first state our lower bound result, and then present the proof relying on the reduction from the Boolean Hidden Matching communication problem.

\begin{theorem}\label{thm:lb-of-ufl-in-log-space}
	For any dimension $d$ and every one-pass randomized algorithm that,
	given an insertion-only stream of points from $[\Delta]^d$ for $\Delta = 2^{O(d)}$,
	approximates Uniform Facility Location within ratio better than $1.085$,
	must use space $\exp(\poly(d\cdot \log\Delta))$.
	More precisely, the space lower bound for better than $1.085$-approximation is $\Omega(\sqrt{n})$
	in the setting with $d = \Theta(\log n)$ and $\Delta = \poly(n)$,
	where $n = |P|$ is the number of data points.
\end{theorem}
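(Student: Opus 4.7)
The plan is to reduce from the Boolean Hidden Matching (BHM) problem, whose one-way randomized communication complexity is known to be $\Omega(\sqrt{n})$. In BHM with parameter $n$, Alice holds $x \in \{0,1\}^n$ and Bob holds a perfect matching $M$ on $[n]$ of size $n/2$ together with a bit vector $w \in \{0,1\}^{n/2}$; under the standard promise they must distinguish the \textsc{Yes} case ($x_i \oplus x_j = w_e$ for every edge $e=(i,j)\in M$) from the \textsc{No} case ($x_i \oplus x_j = 1 - w_e$ for every such edge). The goal is to turn any $(1+\epsilon)$-approximate streaming algorithm for UFL with $\epsilon < 0.085$ into a BHM protocol of cost equal to the algorithm's memory.

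The geometric reduction embeds each index $i \in [n]$ into a pair of canonical slot positions $a_i^0, a_i^1 \in [\Delta]^d$, placed so that slots of distinct indices are mutually very far apart; this is achievable with $d = \Theta(\log n)$ and $\Delta = \poly(n)$, for example via a code-based embedding along widely separated directions. Alice inserts $a_i^{x_i}$ for every $i$. For each edge $e=(i,j) \in M$, Bob inserts a small \emph{edge gadget} whose position depends on $w_e$; together with Alice's two relevant points, each edge contributes an $O(1)$-point local configuration whose UFL cost is $\alpha$ in the \textsc{Yes} case (when $x_i \oplus x_j = w_e$) and $\beta > \alpha$ in the \textsc{No} case. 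The gadget itself is a rectangular four-point configuration in two coordinates dedicated to edge $e$, so that gadgets of distinct edges live in essentially orthogonal subspaces and do not interact.

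Concretely, in the matched \textsc{Yes} case Alice's and Bob's points for $e$ align into a short pair served cheaply by a single facility at cost $\wopen + c_1$, while in the mismatched \textsc{No} case they form a diagonal configuration that forces either two facilities or a long connection, yielding cost $\min(\wopen + c_2, 2\wopen)$. Tuning $\wopen$ and the gadget side length pushes $\beta/\alpha$ above the claimed $1.085$; the precise constant comes from optimizing under the symmetry constraint that Bob must choose his gadget without knowing $x_i$ or $x_j$. Summing over the $n/2$ independent gadgets and verifying that an optimal global solution processes them separately (since inter-gadget distances are made much larger than $\wopen$) yields $\OPT_{\textsc{No}} > 1.085 \cdot \OPT_{\textsc{Yes}}$.

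Finally, any one-pass streaming algorithm for UFL with ratio better than $1.085$ and space $S$ induces a BHM protocol using $S$ bits: Alice runs the algorithm on her insertions and transmits the memory state to Bob, who resumes the algorithm on his insertions and decides \textsc{Yes}/\textsc{No} by thresholding the returned estimate at $\sqrt{\alpha\beta}\cdot(n/2)$. Hence $S = \Omega(\sqrt{n})$, which with the parameter choices $d = \Theta(\log n)$ and $\Delta = \poly(n)$ gives the stated bound in the $[\Delta]^d$ model. The main obstacle will be the sharp gadget analysis: pinning down the exact $1.085$ factor as the UFL optimum on a single edge-gadget under the Bob-symmetry constraint, checking that it is not eroded when $n/2$ gadgets are combined (no global solution exploits inter-gadget structure), and ensuring that the stream length and point counts coincide between \textsc{Yes} and \textsc{No} instances so the adversary cannot decide via trivial statistics such as support size or total mass.
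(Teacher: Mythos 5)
Your high-level skeleton (reduction from Boolean Hidden Matching, one gadget per matching edge, a YES/NO cost gap around $1.085$, one-way communication giving $\Omega(\sqrt{n})$ space) matches the paper's approach, but two essential steps are missing and one of them you cannot get around the way you describe. First, the dimension. Your gadgets require coordinates ``dedicated to edge $e$'' and slots per index, which inherently forces dimension $\Theta(n)$, not $\Theta(\log n)$; and a ``code-based embedding'' into $O(\log n)$ dimensions with all slots ``mutually very far apart'' cannot work directly, because Bob's point for edge $(i,j)$ must be at a \emph{specific short distance} (distance $\sqrt2$ in the paper's construction) from candidate facility locations of \emph{both} endpoints $i$ and $j$, for an arbitrary matching unknown to Alice --- if the slots of distinct indices are very far apart, no point can be close to two of them. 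The paper therefore builds the instance in dimension $8n$ and only then transfers the gap to $d=\Theta(\log n)$, using the \emph{terminal} Johnson--Lindenstrauss lemma of Narayanan--Nelson (needed because facilities may be placed at arbitrary points of $\RR^d$, so distances from data points to all of $\RR^d$ must be preserved, not just pairwise distances within the point set) together with Kirszbraun's extension theorem to map a cheap low-dimensional NO-solution back to $\RR^{8n}$ and contradict the high-dimensional lower bound. Without this two-way transfer your NO-case lower bound does not survive the projection.

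Second, the global cost analysis. You correctly identify as ``the main obstacle'' that a global optimum need not respect the per-gadget decomposition, but you do not resolve it, and with Alice inserting a \emph{single} point per index it cannot be resolved: nothing forces the optimum to open a facility at $a_i^{x_i}$, so the local $\alpha$ vs.\ $\beta$ accounting has no anchor. The paper's fix is to place $100n$ clients at each of Alice's active locations, which forces every near-optimal solution to open a facility exactly there, and then to rule out a single facility serving $k>1$ of Bob's clients by computing the geometric median of the $(k-1)$-simplex they span (showing cost $2k\sqrt{1-1/k}+\wopen > k\wopen$ for $\wopen\le 2$). You would need both devices, or substitutes for them, before the ratio $(3\wopen+\sqrt2)/(2\wopen+\sqrt8)\approx 1.085$ can be certified as the true gap between $\OPT_{\textsc{No}}$ and $\OPT_{\textsc{Yes}}$.
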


\paragraph{Boolean Hidden Matching (BHM).}
Recall that in the BHM one-way communication problem Alice gets a binary string $x\in \{0, 1\}^{2n}$, while
Bob receives a perfect matching $M$ on $[2n]$ and a binary string $w\in \{0, 1\}^n$ with one bit $w_{i,j}$ for each $(i,j)\in M$.
It is promised that either $x_i \oplus x_j = w_{i,j}$ for every pair $(i,j)\in M$ (YES instance),
or $x_i \oplus x_j \neq w_{i,j}$ for any $(i,j)\in M$ (NO instance).
The goal is for Alice to send a message to Bob so that he can distinguish the two cases with probability at least $2/3$.
The randomized one-way communication complexity of BHM is $\Omega(\sqrt{n})$ (see e.g.~\cite{GKKRdw09,VY11}).

We use a reduction from BHM to rule out a streaming approximation scheme for the Uniform Facility Location problem,
where the opening cost $\wopen$ is the same for all facilities.
We show the lower bound in a $\Theta(n)$-dimensional Euclidean space,
and then use a more general form of the Johnson–Lindenstrauss lemma from~\cite{DBLP:conf/stoc/NarayananN19} and the Kirszbraun theorem \cite{kirszbraun1934zusammenziehende} to induce the same result in a logarithmic dimension.
While even storing a single point in a $\Theta(n)$-dimensional takes $\Theta(n)$ space,
the point is to show that there is a gap between the optimal costs for YES and NO instances.
In dimension $\Theta(n)$, all input points for Facility Location will belong to $\{0, 1\}^d$, however, an optimal solution may use other points as facilities, too.
Moreover, each input point will have two or four 1s in its coordinates, while the rest of the coordinates will be set to 0; that is,
they can be interpreted as a pair or a four-tuple of coordinates set to~1.
We use $\wopen = 2$ for convenience (by scaling, the result holds for any $\wopen > 0$).

We present an instance of Uniform Facility Location problem in the $8n$-dimensional Euclidean space, with each $i \in [2n]$ independently assigned four coordinates, denoted as $i_0, i_1, i_2, i_3$, respectively.
For each Alice's bit $x_i$, we define two points at distance~2: $s^0_i$ with 1s in coordinates $i_0, i_1$, and $s^1_i$ with 1s in coordinates $i_2, i_3$.
If $x_i = 0$, then Alice adds $100n$~clients to point $s^0_i$ and no client to point $s^1_i$;
for $x_i = 1$, there are $100n$~clients at point $s^1_i$ and no client at point $s^0_i$. (Recall that clients are the data points in $P$. For simplicity, we view $P$ as a multiset in this section, i.e., allow duplicate points; however, by a small-enough perturbation of points, the lower bound holds even when points in $P$ are distinct. Note that 
out of points $s^0_i$ and $s^1_i$, exactly one is in $P$ with multiplicity $100n$, while the other one is not in $P$.) 

For each edge $(i,j)$ in the matching, Bob adds two clients:
one client at $t^0_{i,j}$, with 1s in coordinates $i_0, i_1, j_{2 - 2w_{i,j}}$, and $j_{3 - 2w_{i,j}}$,
and one client at $t^1_{i,j}$, with 1s in coordinates $i_2, i_3, j_{2w_{i,j}}$, and $j_{2w_{i,j} + 1}$
(note that these points have four coordinates set to 1); see Figure~\ref{fig:FacLoc} for an illustration.
Observe that points $t^b_{i,j}$ have disjoint sets of coordinates set to~1 and thus pairwise distances equal to $\sqrt{8}$.

\begin{figure}[!ht]
\begin{center}
\includegraphics[scale=1]{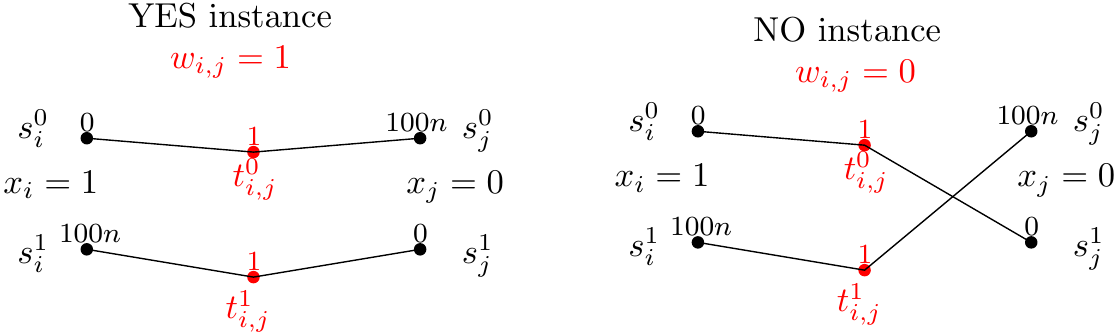}
\caption{An illustration of the reduction from BHM to an instance of Facility Location.
Alice's points are depicted in black, whereas Bob's points are in red.
The number above each point represents the number of clients in that point.
The edge depicts that the connected points are at distance $\sqrt{2}$, whereas 
black and red points not connected by an edge are at distance $\sqrt{6}$.
The distance between any two black points is always 2 and between any two red points always $\sqrt{8}$.}
\label{fig:FacLoc}
\end{center}
\end{figure}

It is crucial to observe that in either YES or NO case, the optimal solution contains a facility at $s_i^{x_i}$, for each $i = 1, \cdots, 2n$,
where the notation $s^{x_i}_i$ represents point $s^0_i$ if bit $x_i = 0$ and point $s^1_i$ otherwise;
see Lemma~\ref{lem:lb-no-case} for a detailed argument.
We show a separation between the costs in the YES and NO cases by the following lemmas.

\begin{lemma}\label{lem:lb-yes-case}
    In the YES case, the optimal cost is at most $2n \cdot (\wopen + \sqrt{2})$.
\end{lemma}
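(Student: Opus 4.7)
The plan is to exhibit an explicit feasible solution achieving cost at most $2n(\wopen + \sqrt{2})$. Concretely, open one facility at each of the $2n$ points $s_i^{x_i}$ for $i \in [2n]$ (i.e., at the unique point of $\{s_i^0, s_i^1\}$ on which Alice places her $100n$ clients). The total opening cost is exactly $2n \cdot \wopen$, and all of Alice's clients coincide with these facilities, so they contribute $0$ to the connection cost. Thus it remains to bound the connection cost of Bob's $2n$ clients (two per edge of the matching $M$, which has $n$ edges).

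First, I plan to compute four key distances, all of which equal $\sqrt{2}$ because in each case the two points share exactly two $1$-coordinates and differ in exactly two. For a fixed edge $(i,j) \in M$:
\begin{itemize}
\item $\dist(t^0_{i,j}, s_i^0) = \sqrt{2}$ (shared coordinates $i_0,i_1$; differing in $j_{2-2w_{i,j}}, j_{3-2w_{i,j}}$);
\item $\dist(t^0_{i,j}, s_j^{1-w_{i,j}}) = \sqrt{2}$ (shared coordinates $j_{2-2w_{i,j}}, j_{3-2w_{i,j}}$; differing in $i_0,i_1$);
\item $\dist(t^1_{i,j}, s_i^1) = \sqrt{2}$ (shared $i_2,i_3$; differing in $j_{2w_{i,j}}, j_{2w_{i,j}+1}$);
\item $\dist(t^1_{i,j}, s_j^{w_{i,j}}) = \sqrt{2}$ (shared $j_{2w_{i,j}}, j_{2w_{i,j}+1}$; differing in $i_2,i_3$).
\end{itemize}

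Next, I would invoke the YES-case promise $x_i \oplus x_j = w_{i,j}$ to verify that at least one of the two candidate facilities above is open, for each Bob client. For $t^0_{i,j}$: if $x_i = 0$ then $s_i^0 = s_i^{x_i}$ is open; otherwise $x_i = 1$ forces $x_j = 1 \oplus w_{i,j} = 1 - w_{i,j}$, so $s_j^{1-w_{i,j}} = s_j^{x_j}$ is open. The argument for $t^1_{i,j}$ is symmetric: if $x_i = 1$ use $s_i^1 = s_i^{x_i}$; otherwise $x_j = 0 \oplus w_{i,j} = w_{i,j}$, so $s_j^{w_{i,j}} = s_j^{x_j}$ is open. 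In either case, the client is served at distance $\sqrt{2}$.

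Summing up, every one of the $2n$ Bob clients pays at most $\sqrt{2}$, contributing a total connection cost of at most $2n \cdot \sqrt{2}$. Combined with the opening cost of $2n \cdot \wopen$, the overall cost of this feasible solution is at most $2n(\wopen + \sqrt{2})$, which upper-bounds the optimum as required. I do not anticipate a genuine obstacle here: the proof is essentially a direct verification of distances in $\{0,1\}^{8n}$ combined with the XOR promise; the only care needed is the careful bookkeeping of the four index cases of $w_{i,j} \in \{0,1\}$ when checking the two shared-coordinate claims.
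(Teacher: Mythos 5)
Your proposal is correct and takes exactly the same route as the paper: open a facility at each of the $2n$ points $s_i^{x_i}$ (opening cost $2n\wopen$, zero connection cost for Alice's clients) and use the YES promise $x_i \oplus x_j = w_{i,j}$ to serve each of Bob's $2n$ clients at distance $\sqrt{2}$. The paper states this in two lines; your write-up simply makes explicit the distance computations and the case analysis that the paper leaves implicit.
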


\begin{proof}
    In the optimal solution, we build a facility in all $2n$ Alice's points $s^{x_i}_i$ with many clients, and connect each client in $t^b_{i,j}$
    to the closest facility, which is at distance at most $\sqrt{2}$ as this is a YES instance.
    Then the total opening cost is $2n\cdot \wopen$  and the total connection cost equals $2n\cdot \sqrt{2}$,
    implying that the optimal cost is at most $2n\cdot (\wopen + \sqrt{2})$.
\end{proof}

\begin{lemma}\label{lem:lb-no-case}
    In the NO case, the optimal cost is $n \cdot (3\wopen + \sqrt{2})$.
\end{lemma}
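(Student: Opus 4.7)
My plan has two parts. For the upper bound, I will exhibit an explicit solution: open the $2n$ Alice facilities $\{s^{x_i}_i : i \in [2n]\}$, and for each edge $(i,j) \in M$ add one new facility at the ``orphan'' $t$-point of the edge, i.e., the unique element of $\{t^0_{i,j}, t^1_{i,j}\}$ at Euclidean distance $\sqrt{6}$ from both $s^{x_i}_i$ and $s^{x_j}_j$. The existence of such a unique orphan is the defining geometric consequence of the NO condition: since $x_i \oplus x_j \ne w_{i,j}$, the coordinate patterns force $s^{x_i}_i$ and $s^{x_j}_j$ to both be close (at distance $\sqrt{2}$) to the \emph{same} one of $\{t^0_{i,j}, t^1_{i,j}\}$, leaving the other one orphaned at distance $\sqrt{6}$ from \emph{every} Alice facility (indices outside $\{i,j\}$ are farther still). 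Connecting the close $t$-point of each edge to an Alice facility at cost $\sqrt{2}$ and self-serving each orphan then yields total cost $(2n+n)\wopen + n\sqrt{2} = n(3\wopen + \sqrt{2})$.

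For the matching lower bound, we will argue via LP duality on the standard LP relaxation of UFL, whose dual seeks $\alpha_c \ge 0$ maximizing $\sum_c \alpha_c$ subject to $\sum_c (\alpha_c - d(c,f))^+ \le \wopen$ for every potential facility $f \in \mathbb{R}^{8n}$. We propose the dual assignment $\alpha_c := \wopen/(100n)$ for each of the $200n^2$ Alice clients, $\alpha_t := \sqrt{2}$ for each close $t$-client, and $\alpha_t := \wopen$ for each orphan; the total $2n\wopen + n\sqrt{2} + n\wopen = n(3\wopen + \sqrt{2})$ matches the upper bound, so once feasibility is verified, LP duality closes the gap and forces equality in the lemma.

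The main obstacle will be verifying dual feasibility uniformly across all continuous $f \in \mathbb{R}^{8n}$. We plan to lean on three geometric facts specific to this construction: (i) any two Alice points are at $\ell_2$-distance at least $2$, so $f$ can lie within $\gamma := \wopen/(100n)$ of at most one Alice point, and the total Alice contribution to any one dual constraint is at most $100n\gamma = \wopen$; (ii) every pair of distinct $t$-points is at distance exactly $\sqrt{8}$ (their supports are disjoint 4-element subsets), so the $2n$ $t$-points sit at the vertices of a regular simplex in $\mathbb{R}^{8n}$, and a Fermat--Weber bound gives $\sum_{i=1}^{k} d(f, t_i) \ge 2\sqrt{k(k-1)}$ for any $k$ of them (achieved at the simplex centroid); hence the Bob contribution is at most $k\wopen - 2\sqrt{k(k-1)}$, and we will check the elementary inequality $k\wopen - 2\sqrt{k(k-1)} \le \wopen$ for all $k \ge 1$, which holds whenever $\wopen \le 2\sqrt{2}$ and in particular for $\wopen = 2$; (iii) the cross distances $d(s,t) \in \{\sqrt{2}, \sqrt{6}\}$ let us combine (i) and (ii) via a short case split on the location of $f$ (either far from all Alice points, or within $\gamma$ of exactly one of them), showing the Alice and Bob contributions together never exceed $\wopen$. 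Crucially, the inequality in (ii) is tight at $k=1$ precisely when $f$ coincides with an orphan, matching the tight primal slack in the upper-bound construction.
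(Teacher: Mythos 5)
Your proposal is correct, and your lower bound takes a genuinely different route from the paper's. The paper argues structurally: it first shows every optimal solution must open a facility exactly at each $s_i^{x_i}$ (exploiting the $100n$ co-located clients), and then rules out any facility serving $k>1$ of Bob's clients by computing that the geometric median of $k$ vertices of the regular simplex of side $\sqrt{8}$ lies at distance $2\sqrt{1-1/k}$ from each, so that $2k\sqrt{1-1/k}+\wopen > k\wopen$ for $\wopen\le 2$. You instead exhibit an explicit dual certificate ($\alpha=\wopen/(100n)$ on Alice's clients, $\sqrt2$ on close $t$-clients, $\wopen$ on orphans) whose value matches your primal solution, and verify feasibility over all $f\in\RR^{8n}$ by the case split you describe; the key summed form $\sum_i d(f,t_i)\ge 2\sqrt{k(k-1)}$ is exactly the paper's centroid computation in disguise. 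What your approach buys is a cleaner, fully rigorous lower bound via weak duality (which holds pointwise over the continuum of facility locations), avoiding the paper's somewhat delicate exchange argument about relocating facilities to $s_i^{x_i}$; what it costs is the feasibility verification, which you have only sketched but whose ingredients (at most one Alice location within $\gamma=\wopen/(100n)$ of $f$, and the fact that in that case only the single close $t$-client of the matched edge can contribute, by at most $d(f,s_i^{x_i})$, while the Alice contribution drops by $100n\cdot d(f,s_i^{x_i})$) do suffice. Two small corrections: the elementary inequality $k\wopen-2\sqrt{k(k-1)}\le\wopen$ for \emph{all} $k\ge1$ requires $\wopen\le 2$ (it fails for large $k$ when $\wopen\in(2,2\sqrt2]$), which is fine here since $\wopen=2$; and the orphan is at distance exactly $\sqrt6$, not more, from Alice facilities with indices outside $\{i,j\}$, which still exceeds $\wopen$ as needed.
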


\begin{proof}
    We first observe that in any optimal solution, a facility is built at $s_i^{x_i}$ for each $i = 1, \cdots, 2n$.
    Indeed, if there is no facility in $B(s_i^{x_i}, 0.1)$ for some $i$, then the connection cost for the clients at  $s_i^{x_i}$ is at least $10n$,  which means that we can decrease the total cost by building a facility at $s_i^{x_i}$.
    Thus, ball $B(s_i^{x_i}, 0.1)$ must contain a facility for any $i$.
    Furthermore, if there is a facility in $B(s_i^{x_i}, 0.1)$ but not at $s_i^{x_i}$,
    then all clients at $s_i^{x_i}$ are connected to this facility (to the closest one if there are more), thus
    by moving the facility to $s_i^{x_i}$, we decrease the total connection cost (this could increase the connection cost for Bob's clients, but there are only $2n$ of them).
    
    Next, we argue about the cost for Bob's clients.
    One possible solution is the following:
    For each edge $(i,j)$ in the matching, one of the clients sitting at $t^0_{i,j}$ and $t^1_{i,j}$ is at distance $\sqrt{2}$
    from facilities $s^{x_i}_i$ and $s^{x_j}_j$, so we connect this client to either of these two facilities,
    while we build a facility at the location of the other client, which is at distance at least $\sqrt{6}$ from any facility
    built in such a way. Thus, as $\wopen < \sqrt{6}$, it is cheaper to build a new facility for this client
    than to connect it to another facility.
    This solution has opening cost of $3n\cdot \wopen$ and connection cost of $n\cdot \sqrt{2}$, and we show that it is optimal.
    To this end, it suffices to rule out building a facility that serves more than one Bob's client.
    
    Suppose for a contradiction that the optimal solution contains a facility such that $k > 1$ of Bob's clients are connected to it.
    Such a facility cannot be placed at $s_i^{x_i}$ for some $i$
    as the connection cost is then $k\cdot \sqrt{6} > k\cdot \wopen$, i.e., it is cheaper to build one facility for each of the $k$ Bob's clients.
    Recall that these Bob's clients sit on vertices of a $(k-1)$-simplex with edge length $\sqrt{8}$.
    Since the sets of coordinates of 1s of different Bob's clients do not intersect,
    the best location for such a facility is to place it in the geometric median of the simplex, i.e.,
    the point that minimizes the sum of distances from the vertices (which is the connection cost).
    Clearly, the geometric median has coordinate $i$ equal to $1/k$ whenever there is a (single) vertex with~1
    at coordinate $i$, and to $0$ otherwise (thus, the geometric median is the average of the vertices). 
    The distance from each of the $k$ clients to this median is
    \[\sqrt{4\cdot (k-1)\cdot \left(\frac{1}{k}\right)^2 + 4\cdot \left(1 - \frac{1}{k}\right)^2}
    = 2\cdot \sqrt{\frac{1}{k} - \frac{1}{k^2} + 1 - \frac{2}{k} + \frac{1}{k^2}}
    = 2\cdot \sqrt{1 - \frac{1}{k}}\,.\]
    It follows that the connection cost for these $k$ clients plus the opening cost of the facility
    is $k\cdot 2\cdot \sqrt{1 - \frac{1}{k}} + \wopen > k\cdot \wopen$, which holds for any $\wopen \le 2$.
    Hence, it is cheaper to build $k$ facilities for these $k$ clients.
    
    Concluding, the optimal cost for a NO instance is $n\cdot (3\wopen + \sqrt{2})$.
\end{proof}

Combining Lemma~\ref{lem:lb-yes-case} and Lemma~\ref{lem:lb-no-case} and using $\wopen = 2$, a one-pass streaming algorithm for Uniform Facility Location in an $\Theta(n)$-dimensional space using space $o(\sqrt{n})$
with approximation ratio strictly better than
$(3\wopen + \sqrt{2}) / (2\wopen + \sqrt{8}) = (6 + \sqrt{2}) / (4 + \sqrt{8})\approx 1.085$ would solve the BHM problem.

In order to obtain the identical result in a logarithmic space, we introduce a more general form of Johnson-Lindenstrauss Lemma given by Narayanan and Nelson~\cite{DBLP:conf/stoc/NarayananN19}, as well as the Kirszbraun theorem given from~\cite{kirszbraun1934zusammenziehende}.

\begin{theorem}[\cite{DBLP:conf/stoc/NarayananN19}]
    \label{thm:general-JL-lemma}
    For all $\epsilon < 1$ and $x_1, \ldots, x_n \in \mathbb{R}^d$, there exists $m = O(\epsilon^{-2} \log n)$ and a (nonlinear) map $\pi : \mathbb{R}^d \to \mathbb{R}^m$ such that for all $1 \leq i \leq n$ and $u \in \mathbb{R}^d$, 
    \[
        (1 - \epsilon) \Vert u - x_i \Vert_2 \leq \Vert \pi(u) - \pi(x_i) \Vert_2 \leq (1 + \epsilon) \Vert u - x_i \Vert_2.
    \]
\end{theorem}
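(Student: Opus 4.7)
The plan is to construct $\pi$ as a standard linear JL embedding augmented by a single nonlinear coordinate that restores distances to terminals for any query $u$. First, I would sample a sub-Gaussian matrix $A \in \mathbb{R}^{m_0 \times d}$ with $m_0 = O(\epsilon^{-2} \log n)$ so that, with high probability, $\|A x_i\| = (1\pm\epsilon)\|x_i\|$ and $\|A(x_i - x_j)\| = (1\pm\epsilon)\|x_i - x_j\|$ for all $i,j\in[n]$ (the first by applying JL to the enlarged set $\{0, x_1, \ldots, x_n\}$). Since $m_0 < d$, no such linear map can preserve distances from a generic $u\in\mathbb{R}^d$ to every terminal, so a nonlinear correction is unavoidable.

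For a query $u$, let $D_i(u) := \|u - x_i\|_2^2 - \|A(u - x_i)\|_2^2$ denote the squared-distance defect at terminal $x_i$. I would set $\pi(u) := (A u,\, \psi(u))$ with $\psi(x_i) = 0$ for every terminal, choosing the scalar $\psi(u)$ so that $\psi(u)^2$ serves as a uniform proxy for the defects $\{D_i(u)\}_{i \in [n]}$. Following the terminal-embedding literature (e.g., Mahabadi-Makarychev-Makarychev-Razenshteyn), a natural candidate is a minimax value such as $\psi(u)^2 := \tfrac{1}{2}\bigl(\max_i D_i(u) + \min_i D_i(u)\bigr)$. With such a choice, the squared image distance $\|\pi(u) - \pi(x_i)\|_2^2 = \|A(u - x_i)\|_2^2 + \psi(u)^2$ automatically matches the true squared distance at the extremal terminals, and it matches \emph{every} terminal as long as all defects $D_i(u)$ lie within $O(\epsilon)\cdot\|u - x_i\|_2^2$ of one another.

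The key obstacle is thus proving this uniform closeness of the defects. Expanding gives $D_i(u) - D_j(u) = 2 \langle (A^{\top}A - I)(x_i - x_j),\, u\rangle + \bigl[(\|x_i\|_2^2 - \|Ax_i\|_2^2) - (\|x_j\|_2^2 - \|Ax_j\|_2^2)\bigr]$, where the bracketed constant term is controlled by the JL norm guarantees above, while the linear-in-$u$ term is genuinely new and demands that the operator $A^{\top}A - I$ have small bilinear form on every pair (direction $x_i - x_j$, arbitrary query $u$). I expect to establish this via a Hanson-Wright-type concentration inequality, or equivalently a Gaussian-chaos chaining argument over the $\binom{n}{2}$ normalized terminal-difference directions, leveraging that these directions live in a set of combinatorial size $n^2$ rather than an entire subspace. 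This is precisely the step where the optimal $O(\epsilon^{-2}\log n)$ target dimension is earned, as opposed to the $O(\epsilon^{-4}\log n)$ obtained by cruder pairwise arguments. Combining the uniform defect estimate with the definition of $\psi$ then yields $\|\pi(u) - \pi(x_i)\|_2^2 = (1 \pm O(\epsilon))\|u - x_i\|_2^2$ for every $i$ and every $u\in\mathbb{R}^d$, and rescaling $\epsilon$ by a constant produces the stated multiplicative $(1\pm\epsilon)$ bound on the distance itself.
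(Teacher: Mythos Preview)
The paper does not prove this theorem; it is cited as a black box from \cite{DBLP:conf/stoc/NarayananN19}. Your proposal attempts to reprove that result, and it has a genuine gap. You set $\pi(u) = (Au, \psi(u))$ and claim the defect differences $D_i(u) - D_j(u)$ can be bounded uniformly in $u$ via Hanson--Wright or chaining. But Hanson--Wright controls $\langle (A^\top A - I)(x_i - x_j),\, u\rangle$ only for a \emph{fixed} $u$; a bound uniform over all $u\in\mathbb{R}^d$ is equivalent to bounding the norm $\|(A^\top A - I)(x_i - x_j)\|_2$, and for $A\in\mathbb{R}^{m\times d}$ with i.i.d.\ $N(0,1/m)$ entries one computes $\E\|(A^\top A - I)v\|_2^2 = (d+1)/m$ for any unit $v$, so this norm is $\Theta(\sqrt{d/m})\,\|x_i-x_j\|_2$ rather than $O(\epsilon)\,\|x_i-x_j\|_2$ when $m=O(\epsilon^{-2}\log n)\ll d$. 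Chaining over the $\binom{n}{2}$ terminal-difference directions is irrelevant here, since the supremum over $u$ already forces the full Euclidean norm of each vector $(A^\top A - I)(x_i-x_j)$.

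More structurally, keeping the first $m$ coordinates of $\pi(u)$ equal to $Au$ and appending a single scalar correction is essentially the earlier terminal-embedding construction (Elkin--Filtser--Neiman, and Mahabadi--Makarychev--Makarychev--Razenshteyn), which yields only $m = O(\epsilon^{-4}\log n)$. The Narayanan--Nelson improvement to $O(\epsilon^{-2}\log n)$ does \emph{not} use $Au$ for the $\mathbb{R}^m$ part: for each query $u$ they replace it by a point $z=z(u)$ chosen (nonlinearly) to satisfy a system of inner-product constraints against all $A(x_i - x_*)$ together with a norm bound, where $x_*$ is the nearest terminal to $u$; feasibility of this system is the crux of their proof and is established via convex duality combined with JL applied to convex combinations of terminal directions. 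A single extra coordinate on top of the linear map $Au$ cannot reproduce this argument.
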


\begin{theorem}[\cite{kirszbraun1934zusammenziehende}]
    \label{Kirszbraun-theorem}
    For every subset $X \subset \mathbb{R}^d$ and $L$-Lipschitz map\footnote{Recall that $\varphi$ is $L$-Lipschitz if for all $x', x'' \in X$, we have $\Vert \varphi(x') - \varphi(x'') \Vert_2 \leq L \Vert x' - x''\Vert_2$.} $\varphi:X \to \mathbb{R}^m$, there exists an $L$-Lipschitz extension $\widetilde{\varphi}$ of $\varphi$ from $X$ to the entire space $\mathbb{R}^d$.
\end{theorem}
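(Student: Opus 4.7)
The strategy is the classical two-stage approach. After rescaling we may assume $L=1$ (replace $\varphi$ with $\varphi/L$). The first stage is a Zorn's lemma argument on the poset of $1$-Lipschitz extensions of $\varphi$ ordered by containment of domains; a maximal element must have full domain, as otherwise we could extend to one more point. Hence the whole theorem reduces to the single-point extension property: given a $1$-Lipschitz map $\varphi : X \to \RR^m$ and $y \in \RR^d \setminus X$, we need $z \in \RR^m$ with $\|z - \varphi(x)\|_2 \le \|y - x\|_2$ for every $x \in X$. Equivalently, the family of closed balls $\{\overline{B}(\varphi(x), \|y - x\|_2)\}_{x \in X}$ must have non-empty intersection.

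The second stage invokes Helly's theorem in $\RR^m$: since each ball is compact and convex, non-emptiness of the intersection of the whole family follows (via a standard compactness argument) from non-emptiness of the intersection of every sub-family of size at most $m+1$. So the remaining task is the finite case with $k \le m+1$ points $x_1, \ldots, x_k \in X$ and constraint radii $r_i := \|y - x_i\|_2 > 0$. To handle this, I would look at $f(z) := \max_{1 \le i \le k} \|z - \varphi(x_i)\|_2 / r_i$, which is continuous and coercive, hence attains its minimum value $\lambda^*$ at some $z^* \in \RR^m$; it suffices to prove $\lambda^* \le 1$. Let $I := \{ i : \|z^* - \varphi(x_i)\|_2 / r_i = \lambda^*\}$ be the active set. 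A standard separating-hyperplane argument (if $0$ were not in the convex hull of $\{z^* - \varphi(x_i) : i \in I\}$, a small move of $z^*$ in a separating direction would strictly decrease every active ratio, contradicting minimality) produces weights $\alpha_i \ge 0$ with $\sum_{i \in I} \alpha_i = 1$ and $z^* = \sum_{i \in I} \alpha_i \varphi(x_i)$.

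The main obstacle, and the crux, is the final algebraic step. I would apply the polarization identity
\[
  \sum_{i,j} \alpha_i \alpha_j \|a_i - a_j\|_2^2 \;=\; 2 \sum_i \alpha_i \|a_i - \bar a\|_2^2,
  \qquad \bar a := \sum_j \alpha_j a_j,
\]
first with $a_i = \varphi(x_i)$ (so $\bar a = z^*$) and then with $a_i = x_i$ (so $\bar a = \bar x := \sum_j \alpha_j x_j$). Combining the two, together with the Lipschitz inequality $\|\varphi(x_i) - \varphi(x_j)\|_2 \le \|x_i - x_j\|_2$ applied termwise in the double sum, yields
\[
  \sum_{i \in I} \alpha_i \|z^* - \varphi(x_i)\|_2^2 \;\le\; \sum_{i \in I} \alpha_i \|\bar x - x_i\|_2^2.
\]
On the left side, the active-constraint equality $\|z^* - \varphi(x_i)\|_2 = \lambda^* r_i = \lambda^* \|y - x_i\|_2$ for $i \in I$ replaces the sum with $\lambda^{*2} \sum_i \alpha_i \|y - x_i\|_2^2$. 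On the right side, the elementary identity that $\bar x$ is the minimizer of $w \mapsto \sum_i \alpha_i \|w - x_i\|_2^2$ over $w \in \RR^d$ gives $\sum_i \alpha_i \|\bar x - x_i\|_2^2 \le \sum_i \alpha_i \|y - x_i\|_2^2$. Chaining these inequalities forces $\lambda^{*2} \le 1$, completing the finite case and hence, via Helly and Zorn, the whole theorem. The delicate point is really this last chain, which is the only place where the Euclidean (as opposed to a general normed) structure of the target space is used.
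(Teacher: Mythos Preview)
The paper does not give its own proof of this statement; it simply cites Kirszbraun's 1934 paper and then applies the theorem as a black box inside the proof of \Cref{thm:lb-of-ufl-in-log-space}. So there is nothing to compare against on the paper's side.

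Your argument is the classical proof (essentially Valentine's approach) and is correct. The reduction via Zorn's lemma to the single-point extension, the Helly-plus-compactness reduction to at most $m+1$ constraints, the minimax characterization of $z^*$ yielding $z^* \in \mathrm{conv}\{\varphi(x_i) : i \in I\}$, and the final chain of inequalities via the polarization identity all check out. One tiny remark: in the separating-hyperplane step you should note that the case $\lambda^* = 0$ is trivial (all active $\varphi(x_i)$ coincide), so that the directional-derivative argument is applied only when $\lambda^* > 0$ and the gradients are well-defined; and you implicitly use $r_i > 0$ (which holds since $y \notin X$) when dividing through at the end. These are minor hygiene points, not gaps.
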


\begin{proof}[Proof of Theorem~\ref{thm:lb-of-ufl-in-log-space}]
    Given an instance of BHM, let $X = \{s_i^0, s_i^1 \mid i \in [2n]\} \cup \{t_{i,j}^0, t_{i,j}^1 \mid (i, j) \in M\}$
    be the set of data points of the corresponding Uniform Facility Location instance constructed as above.
    By Theorem~\ref{thm:general-JL-lemma}, there exists $\pi : \mathbb{R}^{8n} \to \mathbb{R}^d$ with $d = \Theta(\log n / \epsilon^2)$ and a small enough constant $\epsilon > 0$ such that for all $x \in X$ and $y \in \mathbb{R}^{8n}$,
    \[
        (1 - \epsilon) \Vert x - y \Vert_2 \leq \Vert \pi(x) - \pi(y) \Vert_2 \leq (1 + \epsilon) \Vert x - y \Vert_2.
    \]
    Thus, by Lemma~\ref{lem:lb-yes-case}, the optimal solution of Uniform Facility Location in the YES case after the dimension reduction by $\pi$ is at most $(1 + \epsilon)\cdot 2n \cdot (\wopen + \sqrt{2})$.

    It remains to show an upper bound of $n \cdot (3\wopen + \sqrt{2}) / (1 + \epsilon)$ on the optimal cost in the NO case after the dimension reduction.
    Letting $Y = \pi(X)$, the inverse map $\pi^{-1} : Y \to \mathbb{R}^{8n}$, where $\pi^{-1}(y)$ is defined as arbitrary element in $\{x \mid \pi(x) = y\}$, is $(1 + \epsilon)$-Lipschitz, since for all $y', y'' \in Y$,
    \[
        \Vert \pi^{-1}(y') - \pi^{-1}(y'') \Vert_2
        \leq \frac{1}{1 - \epsilon} \Vert y' - y'' \Vert_2
        \leq (1 + O(\epsilon)) \Vert y' - y'' \Vert_2.
    \]
    
    Using Theorem~\ref{Kirszbraun-theorem}, we extend the map $\pi^{-1} : Y \to \mathbb{R}^{8n}$ to $\widetilde{\varphi} : \mathbb{R}^d \to \mathbb{R}^{8n}$ while preserving the $(1 + \epsilon)$-Lipschitz property.
    Suppose, for the sake of contradiction, there is a solution in the NO case with cost strictly less than $n \cdot (3\wopen + \sqrt{2}) / (1 + O(\epsilon))$.
    Mapping the facilities opened in this solution back into $\mathbb{R}^{8n}$ by $\widetilde{\varphi}$, we obtain a solution in $\mathbb{R}^{8n}$ with cost strictly less than $n \cdot (3\wopen + \sqrt{2})$, which contradicts Lemma~\ref{lem:lb-no-case}.
    
    Thus a one-pass streaming algorithm for Uniform Facility Location in a $\Theta(\log n)$-dimensional space using memory of $o(\sqrt{n})$ with approximation ratio strictly better than
    \[
        \frac{3\wopen + \sqrt{2}}{(1 + O(\epsilon))^2(2\wopen + \sqrt{8})} \approx 1.085
    \]
    would solve the BHM problem, which contradicts the communication complexity of BHM.
    
    Finally, while the dimension reduction of Theorem~\ref{thm:general-JL-lemma} yields an instance where points have real coordinates,
    we can scale the instance by a $\poly(n)$ factor so that by rounding the coordinates to integers after this scaling
    we only change the optimal cost by an arbitrarily small factor in both YES and NO cases.
    Therefore, the lower bound holds even if the point set is restricted to $[\Delta]^d$ for $\Delta = \poly(n)$.
\end{proof}

     \section{Future Directions}
\label{sec:future}

\paragraph{Improved Hash Functions.}
Even though the tradeoff of parameters in \Cref{thm:EucSparsePartition} is already nearly-tight and it uses only $\poly(d)$ space to evaluate the hash value,
the running time for hash-value evaluation is still exponential.
It is thus an interesting question to find a more time-efficient construction, ideally one that can be evaluated in $\poly(d)$ time.

\paragraph{New One-Pass Approach.}
Since the gap bound determines the approximation ratio of the one-pass
algorithm of Section~\ref{s:1p}, our framework cannot be directly used for a better than 
$O(d / \log d)$-approximation in one pass using $\poly(d \cdot \log \Delta)$ space.
The main open problem is thus to design a one-pass streaming $O(1)$-approximation $\poly(d \log \Delta)$-space algorithm
for Uniform Facility Location in high-dimensional Euclidean spaces.

\paragraph{Multiple Passes.}
It is also interesting to explore the power of multiple passes.
In particular, is it possible to achieve $(1 + \epsilon)$-approximation
using, e.g., $\poly(d\cdot \log \Delta)$ passes?
Our lower bound works only for one pass,
and it would be interesting to strengthen it to $O(1)$-passes
using $\poly(d \cdot \log \Delta)$ space.

    \addcontentsline{toc}{section}{References}
    \bibliographystyle{alphaurl}
    \begin{small}	
        \bibliography{ref}
    \end{small}

    \begin{appendices}
        \section{Proof of Lemma~\ref{lemma:2dl0}}
\label{sec:proof_lemma_2dl0}

We restate the lemma for convenience.

\lemmaTwoLevelEllZero*

We show that this lemma can be proven using $\ell_0$-samplers together with
the simple but powerful fact that $\ell_0$-samplers can be obtained by linear sketching.
We state a lemma about the existence of such an $\ell_0$-sampler.

\begin{lemma}[$\ell_0$-Sampler; see e.g.~{\cite{CormodeF14}}]
    \label{lemma:ell_0-sampler}
    There is an algorithm that for integer $n\ge 1$, and every multiset $S$ of elements of $[n]$ with multiplicities bounded by $\exp(\poly(\log n))$ that is
    presented as a dynamic stream, succeeds with probability at least $1-1/\poly(n)$ and, conditioned on it succeeding,
    returns an element $j\in S$ together with its multiplicity $m_j$ in $S$ such that for every $s\in S$ it holds that $\Pr[j = s] = 1 / |S|$, where $|S|$ is the number of distinct elements in $S$.
    The algorithm has space and both update and query times
    bounded by $\poly(\log n)$,
    and its memory contents is a linear sketch of the frequency vector $\mathbf{m}$ of $S$.
\end{lemma}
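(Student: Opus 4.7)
The plan is to follow the standard subsample-and-sparse-recover construction of $\ell_0$-samplers. Set $L:=\lceil\log_2 n\rceil$ and, for each level $\ell=0,1,\dots,L$, pick a hash function $h_\ell:[n]\to\{0,1\}$ from a family with sufficient independence so that each index $i$ is ``active'' (i.e., $h_\ell(i)=1$) with probability $2^{-\ell}$. Let $\mathbf{m}^{(\ell)}$ denote the subsampled frequency vector obtained by zeroing out all entries $i$ with $h_\ell(i)=0$; this is a linear function of $\mathbf{m}$. For each $\ell$, maintain in parallel a linear $s$-sparse recovery sketch of $\mathbf{m}^{(\ell)}$, for a constant $s$ to be fixed in the analysis. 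The sketch either reports \textsc{Fail} or returns the exact vector $\mathbf{m}^{(\ell)}$, and this happens reliably whenever $|\mathrm{supp}(\mathbf{m}^{(\ell)})|\le s$. At query time, find the smallest $\ell$ whose sketch decodes successfully, and output a uniformly random index $j$ from the recovered support together with its multiplicity $m^{(\ell)}_j=m_j$.

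For the sparse-recovery ingredient, I would implement $s$-sparse recovery by hashing indices into $\Theta(s)$ buckets via a $\Theta(\log n)$-wise independent hash $g:[n]\to[\Theta(s)]$ and maintaining, in each bucket $b$, a $1$-sparse recovery sketch. A $1$-sparse recovery sketch for a bucket keeps three linear quantities: the bucket mass $M_b:=\sum_{i:g(i)=b}m_i$, the weighted sum $W_b:=\sum_{i:g(i)=b} i\cdot m_i$, and a random fingerprint $F_b:=\sum_{i:g(i)=b}r^i m_i\bmod p$ for a random root $r\in\mathbb{F}_p$ with prime $p=\mathrm{poly}(n)$. If the bucket truly contains a single non-zero index $j$, then $M_b=m_j$, $W_b/M_b=j$, and the fingerprint check $F_b\equiv r^j M_b$ accepts; otherwise the fingerprint rejects with probability $1-1/\mathrm{poly}(n)$. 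A union bound over all $\Theta(s\cdot L)$ bucket checks gives overall failure probability $1/\mathrm{poly}(n)$. All these quantities are linear in $\mathbf{m}$ and take $\mathrm{polylog}(n)$ bits, so updates and queries both run in $\mathrm{polylog}(n)$ time, and the total sketch is a linear function of $\mathbf{m}$.

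The analysis rests on two claims. First, with high probability the minimal successful level $\ell^*$ exists: letting $k:=|\mathrm{supp}(\mathbf{m})|$, the ``target'' level is the unique $\ell^*$ with $k\cdot 2^{-\ell^*}\in(\,1/4,\,1\,]$ (with edge cases handled at $\ell=0$ or $\ell=L$), and a Chebyshev bound using pairwise independence of $h_{\ell^*}$ shows $|\mathrm{supp}(\mathbf{m}^{(\ell^*)})|\le s$ with constant probability, amplified to $1-1/\mathrm{poly}(n)$ by running $\Theta(\log n)$ independent copies and taking the first one that decodes. Second, conditional on successful decoding at level $\ell^*$, the recovered support is uniformly distributed over size-$|\mathrm{supp}(\mathbf{m}^{(\ell^*)})|$ subsets of $\mathrm{supp}(\mathbf{m})$ by the symmetry of $h_{\ell^*}$ across coordinates, so the returned index is uniform over $\mathrm{supp}(\mathbf{m})$ as required.

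The main obstacle is reconciling the full-independence behavior the analysis wants (to get both Chernoff-type concentration on $|\mathrm{supp}(\mathbf{m}^{(\ell)})|$ and exact uniformity across the support) with the $\mathrm{polylog}(n)$ space budget, which cannot store a truly random $h_\ell$. The standard remedies are either (i) use $\Theta(\log n)$-wise independent hashing, which suffices for the tail bounds and makes the total-variation deviation from uniformity at most $1/\mathrm{poly}(n)$, absorbed into the stated failure probability, or (ii) use Nisan's pseudorandom generator to fool the small-space verifier that checks whether decoding succeeds, as is done elsewhere in the paper. Finally, one must verify that the adaptive choice of the minimal level $\ell^*$ does not bias the output distribution; this follows because, for each fixed value of $\ell^*$, the conditional law of $\mathrm{supp}(\mathbf{m}^{(\ell^*)})$ is exchangeable across coordinates of $\mathrm{supp}(\mathbf{m})$, so returning a uniformly random recovered index yields a uniformly random element of $\mathrm{supp}(\mathbf{m})$.
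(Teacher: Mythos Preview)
The paper does not prove this lemma; it is stated as a known result with a citation to \cite{CormodeF14} and used as a black box in the proof of Lemma~\ref{lemma:2dl0}. Your sketch follows exactly the standard subsample-and-sparse-recover construction that underlies that citation, so there is nothing to compare against in the paper itself.

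Your outline is essentially correct, but two points deserve tightening. First, the $s$-sparse recovery you describe (a single hash into $\Theta(s)$ buckets with a $1$-sparse tester per bucket) does not reliably recover \emph{all} of an $s$-sparse vector: with $s$ items and $\Theta(s)$ buckets a constant fraction of items collide, so some buckets report \textsc{Fail} even when the global support is at most $s$. You need either $\Theta(s^2)$ buckets (harmless since $s=O(1)$), or $O(\log n)$ independent repetitions so that each surviving item is isolated in some repetition with probability $1-1/\poly(n)$; only then does the claim ``decodes exactly whenever the support is at most $s$'' hold and your exchangeability argument go through cleanly. Second, your exchangeability step requires that the event ``level $\ell$ decodes'' depend only on the \emph{size} of $\mathrm{supp}(\mathbf{m}^{(\ell)})$ and not on which particular indices survived; this is true once the sparse-recovery stage is made deterministic-given-support-size as above (or once its randomness is drawn from a family whose law is invariant under permutations of $[n]$, as standard polynomial-based hash families are). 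With these fixes the argument is sound.
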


\begin{proof}[Proof sketch of Lemma~\ref{lemma:2dl0}.]
    The idea is to use the $\ell_0$-sampler of Lemma~\ref{lemma:ell_0-sampler} in two levels, one for rows and one for columns of $M$,
    with the column level being encoded in the multiplicity of the row level
    (a similar trick has been used e.g.~in~\cite{FIS08}).
    In more detail, we instantiate an $\ell_0$-sampler for rows of $M$ such that
    the (binary representation of) multiplicity of any row $i$ additionally encodes the memory contents
    of the $\ell_0$-sampler for the columns of row $i$ (the true multiplicity of a row is also stored, which is needed to return the row sum).
    As an $\ell_0$-sampler takes $\poly(\log n)$, the multiplicity of any row
    will be bounded by $\exp(\poly(\log n))$, as desired.
    Since the memory contents of the column $\ell_0$-sampler is a linear sketch of row $i$, 
    its representation is an all 0s vector if and only if row $i$ is an all 0s vector.
    
    The randomness (namely, hash functions) for column $\ell_0$-samplers of all rows will be the same. Using this and the linearity of $\ell_0$-samplers, an update to a column $\ell_0$-sampler in row $i$ can be turned into an update of the multiplicity of row $i$ in the row $\ell_0$-sampler.
    Thus, upon an update $a$ to $M_{i,j}$, we first compute the update to the column $\ell_0$-sampler from $j$ and $a$ (this update is independent of $i$). Then we view this update as a change of the multiplicity of row $i$, and feed it to the entry for row $i$ in the row $\ell_0$-sampler.

    To sample a desired cell of $M$, we query the row $\ell_0$-sampler to obtain
    a u.a.r.\ non-zero row $i$, together with the multiplicity of row $i$, which encodes
    the memory contents of the column $\ell_0$-sampler for row $i$.
    We then sample from this column $\ell_0$-sampler to obtain a u.a.r.~column $j$ such that $M_{i,j}\neq 0$.
    We get the desired row sum from the row $\ell_0$-sampler.
\end{proof}

\section{Lower Bounds Based on Indexing}
\label{sec:LBs-index}

In this appendix, we give details of our lower bound arguments that are based on the hardness of indexing in one-way communication. Recall that the one-way communication 
problem of INDEX is defined as follows: Alice holds a set of bits $x_1, \dots, x_k$,
and Bob has an index $j\in [k]$. The goal is for Alice to send a message
to Bob so that Bob can determine $x_j$;
crucially, Bob cannot send any information to Alice.
It is well-known that the message of Alice needs to have $\Omega(k)$ bits for Bob
to succeed with constant probability~\cite{KNR99} (see also~\cite{KushilevitzNisan97,JKS08}).

\paragraph{Lower Bound for $\kMedian$.}
We show that an algorithm with space $o(k)$ that determines whether the $\kMedian$ cost
is 0 or 1 would solve the INDEX problem.
This holds even in the one-dimensional case and also for related types of clustering, such as $k$-\textsc{Means}.
Indeed, given such an algorithm, Alice inserts $k$ points, one point $2\cdot i + x_i$ for each of bit $x_i$. Then she sends the memory contents of this algorithm to Bob 
who inserts point $2\cdot j$. Thus the instance has $k+1$ distinct points if $x_i = 1$
and otherwise, it has $k$ distinct points. It follows that the $\kMedian$ cost
equals 1 iff $x_i = 1$ and otherwise, it equals 0.

\paragraph{Lower Bound for Answering Queries about The $r_p$ Values.}
In Section~\ref{s:1p} we discussed a straightforward one-pass implementation of our approach
in arbitrary-order streams that would be similar to the implementation in two passes
or in random-order streams.
However, as we only get the sampled points at the end of the stream,
we would need to estimate the $r_p$ value for each of the sampled points \emph{a posteriori},
that is, after processing the stream.

We now argue that an algorithm giving a $c$-approximation of $r_p$ for any query point $p$
would solve the INDEX problem. Indeed, given such an algorithm,
we set $\wopen = 1$, and for every $i = 1, \ldots, n$,
Alice inserts $2$ points at position $i$ if $x_i = 0$ and $\lceil 2c+1\rceil$
points at position $i$ otherwise. After that, for a point $p$ at position $i$, 
we have that $r_p = 0.5$ if $x_i = 0$ and $r_p\le 1/(2c)$ if $x_i = 1$.
Then Bob queries the $r_p$ value at position $j$.
Since the returned value is a $c$-approximation of $r_p$, Bob can distinguish between $r_p = 0.5$ and $r_p\le 1/(2c)$ and hence, determine the value of $x_j$.

 \end{appendices}
\end{document}